\newif\ifdraftmode
\newcommand{\colmath}[1]{\ifdraftmode\textcolor{red}{#1}\else{#1}\fi}
\newcommand{\pr}{{\colmath{\mathbb{P}}}}        % probability
\newcommand{\ex}{{\colmath{\mathbb{E}}}}        % expectation
\newcommand{\ind}{{\colmath{\boldsymbol{1}}}}   % indicator function
\newcommand{\idmat}{{\colmath{\boldsymbol{I}}}} % identity matrix
\DeclareMathOperator*{\argmin}{arg\,min}
\newcommand{\norm}[1]{\ensuremath{\| #1 \|}}
\newcommand{\normsup}[1]{\ensuremath{\| #1 \|_\infty}}
\newcommand{\normone}[1]{\ensuremath{\| #1 \|_1}}
\newcommand{\normtwo}[1]{\ensuremath{\| #1 \|_2}}
\newcommand{\normtwos}[1]{\ensuremath{\| #1 \|_2^2}}
\newcommand{\R}{\ensuremath{{\colmath{\mathbb{R}}}}}
\newcommand{\Rp}{\ensuremath{{\colmath{\mathbb{R}^p}}}}
\newcommand{\Rn}{\ensuremath{{\colmath{\mathbb{R}^n}}}}
\newcommand{\Rnp}{\ensuremath{{\colmath{\mathbb{R}^{n\times p}}}}}
\newcommand{\Y}{\ensuremath{{\colmath{Y}}}}                 % response vector
\newcommand{\Xmat}{\ensuremath{{\colmath{\boldsymbol{X}}}}} % design matrix
\newcommand{\Xvec}{\ensuremath{{\colmath{X}}}}              % columns of design matrix
\newcommand{\eps}{\ensuremath{{\colmath{\varepsilon}}}}     % noise vector
\newcommand{\uvec}{\ensuremath{{\colmath{u}}}}              % noise vector
\newcommand{\activeset}{\ensuremath{{\colmath{S}}}}         % active set
\newcommand{\criterion}{\ensuremath{{\colmath{\hat{Q}}}}}
\newcommand{\crithat}{\ensuremath{{\colmath{\hat{\Pi}}}}}     
\newcommand{\crit}{\ensuremath{{\colmath{\Pi}}}}
\newcommand{\critstar}{\ensuremath{\colmath{\Pi}^{\colmath{*}}}} 
\newcommand{\critstarstar}{\ensuremath{\colmath{\Pi}^{\colmath{**}}}} 
\newcommand{\critgauss}{\ensuremath{\colmath{\Pi}^{\colmath{G}}}}
\newcommand{\What}{\ensuremath{{\colmath{\hat{W}}}}}     
\newcommand{\W}{\ensuremath{{\colmath{W}}}}
\newcommand{\Wstar}{\ensuremath{\colmath{W}^{\colmath{*}}}}
\newcommand{\Wstarstar}{\ensuremath{\colmath{W}^{\colmath{**}}}}
\newcommand{\Wgauss}{\ensuremath{{\colmath{G}}}}
\newcommand{\remainder}{\ensuremath{{\colmath{R}}}}
\theoremstyle{plain}
\newtheorem{theorem}{Theorem}[section]
\newtheorem{prop}{Proposition}[section]
\newtheorem{lemma}{Lemma}[section]
\newtheorem{remark}{Remark}[section]
\newtheorem{lemmaA}{Lemma}[section]
\newtheorem{propA}{Proposition}[section]
\newtheorem{remarkA}{Remark}[section]
\newcommand{\lefteqno}{\let\veqno\@@leqno}
\newcommand{\heading}[1]
{  \setcounter{page}{1}
   \begin{center}

   %\phantom{Distance to upper boundary}
   %\vspace{0.5cm}

   {\LARGE \textbf{#1}}
   \end{center}
}
\newcommand{\headingsupp}[2]
{  \setcounter{page}{1}
   \begin{center}

   %\phantom{Distance to upper boundary}
   %\vspace{0.5cm}

   {\LARGE \textbf{#1}}
   \vspace{0.3cm}

   {\LARGE \textbf{#2}}
   \end{center}
}
\newcommand{\authors}[4]
{  \parindent0pt
   \begin{center}
      \begin{minipage}[c][2cm][c]{5cm}
      \begin{center} 
      {\large #1} 
      %\vspace{0.1cm}
      
      #2 
      \end{center}
      \end{minipage}
      \begin{minipage}[c][2cm][c]{5cm}
      \begin{center} 
      {\large #3}
      %\vspace{0.1cm}

      #4 
      \end{center}
      \end{minipage}
   \end{center}
}
\renewcommand{\baselinestretch}{1.2}
\newcommand{\comments}{0} % \comments = 0 to hide Johannes' and Michael's comments, 
\begin{document}

\heading{Estimating the Lasso's Effective Noise}
\vspace{-0.6cm}

\authors{Johannes Lederer\renewcommand{\thefootnote}{1}\footnotemark[1]}{Ruhr-University Bochum}{Michael Vogt\renewcommand{\thefootnote}{2}\footnotemark[2]}{Ulm University}
\footnotetext[1]{Address: Department of Mathematics, Ruhr-University Bochum, 44801 Bochum, Germany. Email: \texttt{johannes.lederer@rub.de}.}
\footnotetext[2]{Corresponding author. Address: Institute of Statistics, Department of Mathematics and Economics, Ulm University, 89081 Ulm, Germany. Email: \texttt{m.vogt@uni-ulm.de}.}
\vspace{-0.8cm}

\renewcommand{\baselinestretch}{1.0}
\renewcommand{\abstractname}{}
\begin{abstract}
\noindent 
Much of the theory for the lasso in the linear model $\Y = \Xmat \beta^* + \eps$ hinges on the quantity $2\normsup{\Xmat^\top \eps}/n$, which we call the lasso's effective noise. 
Among other things, the effective noise plays an important role in finite-sample bounds for the lasso, the calibration of the lasso's tuning parameter, and inference on the parameter vector $\beta^*$.
In this paper, we develop a bootstrap-based estimator of the quantiles of the effective noise.
The estimator is fully data-driven, that is, does not require any additional tuning parameters.
We equip our estimator with finite-sample guarantees and apply it to tuning parameter calibration for the lasso and to high-dimensional inference on the parameter vector $\beta^*$.
\end{abstract}
\renewcommand{\baselinestretch}{1.2}

\textbf{Key words:} High-dimensional regression, lasso, finite-sample guarantees, tuning parameter calibration, high-dimensional inference. 

\textbf{AMS 2010 subject classifications:} 62J07; 62F03; 62F40.

\setlength{\parindent}{15pt}
\numberwithin{equation}{section}

\section{Introduction}

Consider the high-dimensional linear model 
$\Y = \Xmat \beta^* + \eps$
with response vector $\Y \in \Rn$, 
design matrix $\Xmat \in \Rnp$, 
target vector $\beta^* \in \Rp$,
and random noise $\eps \in \Rn$.
We allow for a dimension~$p$ that is of the same order or even much larger than the sample size~$n$,
and we assume a target vector~$\beta^*$ that is sparse.
A popular estimator of $\beta^*$ in this framework is the lasso \citep{Tibshirani96} 
\begin{equation}
\hat{\beta}_\lambda  \in \argmin\limits_{\beta \in \Rp} \biggl\{ \frac{1}{n} \normtwos{\Y - \Xmat \beta} + \lambda \normone{\beta} \biggr\},
\end{equation}
where $\lambda \in [0,\infty)$ is a tuning parameter.
The lasso estimator satisfies the well-known prediction bound 
\begin{equation}\label{eq:predictionbound}
\lambda \geq \frac{2 \normsup{\Xmat^\top \eps}}{n} \quad \Longrightarrow \quad \frac{1}{n} \normtwos{\Xmat (\beta^* - \hat{\beta}_\lambda)} \leq 2 \lambda \normone{\beta^*},
\end{equation}
which is an immediate consequence of the basic inequality for the lasso \citep[Lemma~6.1]{Buhlmann11} and H\"older's inequality.
This simple bound highlights that a crucial quantity in the analysis of the lasso estimator is $2\normsup{\Xmat^\top \eps}/n$. We call this quantity henceforth the \textit{effective noise}.

The effective noise does not only play a central role in the stated prediction bound 
but rather in almost all known finite-sample bounds for the lasso.
Such bounds, called oracle inequalities, are generally of the form \citep{Buhlmann11,Giraud14,Hastie15}
\begin{equation}\label{eq:oracleinequality}
\lambda \geq (1+\delta) \, \frac{2\normsup{\Xmat^\top \eps}}{n} \quad \Longrightarrow \quad \norm{\beta^* - \hat{\beta}_\lambda} \leq \kappa \lambda
\end{equation}
with some constant $\delta \in [0,\infty)$, 
a factor $\kappa = \kappa(\beta^*)$ that may depend on $\beta^*$, 
and a (pseudo-)norm $\norm{\cdot}$. 
Oracle inequalities of the form \eqref{eq:oracleinequality} are closely related to tuning parameter calibration for the lasso:
they suggest to control the loss $L(\beta^*, \hat{\beta}_\lambda) = \norm{\beta^* - \hat{\beta}_\lambda}$ of the lasso estimator~$\hat{\beta}_\lambda$ 
by taking the smallest tuning parameter~$\lambda$ for which the bound $\norm{\beta^* - \hat{\beta}_\lambda} \leq \kappa \lambda$ holds 
with probability at least $1-\alpha$ for some given $\alpha \in (0,1)$.
Denoting the $(1-\alpha)$-quantile of the effective noise $2\normsup{\Xmat^\top \eps}/n$ by $\lambda_\alpha^*$,
we immediately derive from  the oracle inequality \eqref{eq:oracleinequality} that
\begin{equation}\label{eq:intro:finitesampleguarantee:oracle}
\pr \Bigl( \norm{\beta^* - \hat{\beta}_{(1+\delta)\lambda}} \leq \kappa (1+\delta) \lambda \Bigr) \ge 1 -\alpha
\end{equation}
for $\lambda \ge \lambda_\alpha^*$. 
Stated differently, $\lambda = (1+\delta)\lambda_\alpha^*$ is the smallest tuning parameter 
for which the oracle inequality \eqref{eq:oracleinequality} yields the finite-sample bound $\norm{\beta^* - \hat{\beta}_\lambda} \leq \kappa \lambda$ with probability at least $1-\alpha$. 
Importantly, the tuning parameter choice $\lambda = (1+\delta)\lambda_\alpha^*$ is not feasible in practice,
since the quantile $\lambda_\alpha^*$ of the effective noise is not observed. 
An immediate question is, therefore, whether the quantile~$\lambda_\alpha^*$ can be estimated.

The effective noise is also closely related to high-dimensional inference. 
To give an example, we consider testing the null hypothesis $H_0: \beta^* = 0$ against the alternative $H_1: \beta^* \ne 0$. 
Testing this hypothesis corresponds to an important question in practice: 
do the regressors in the model $\Y = \Xmat \beta^* + \eps$ have any effect on the response at all? 
A test statistic for the hypothesis $H_0$ is given by $T = 2 \normsup{\Xmat^\top \Y}/n$. 
Under $H_0$, it holds that $T = 2\normsup{\Xmat^\top \eps}/n$, that is, $T$ is the effective noise. 
A test based on the statistic $T$ can thus be defined as follows: 
reject $H_0$ at the significance level $\alpha$ if $T > \lambda_\alpha^*$.
Since the quantile $\lambda_\alpha^*$ is not observed, this test is not feasible in practice,
which brings us back to the question of whether the quantile~$\lambda_\alpha^*$ can be estimated.

In this paper, we devise a novel estimator of the quantile $\lambda_\alpha^*$ based on bootstrap.
Besides the level $\alpha \in (0,1)$, it does not depend on any free parameters,
which means that it is fully data-driven.
The estimator can be used to approach a number of statistical problems in the context of the lasso. 
We focus on two such problems: 
(i)~tuning parameter calibration for the lasso and 
(ii)~inference on the parameter vector~$\beta^*$. 
The idea of using an estimator of the quantile $\lambda_\alpha^*$ to approach statistical issues such as (i) and (ii) is very natural and by no means new. \cite{Belloni13b}, for example, choose the tuning parameter of the lasso based on an estimator of $\lambda_\alpha^*$. Similar procedures for the square-root lasso and the Dantzig selector are considered in \cite{Belloni11} and \cite{Chernozhukov2013}, respectively. 
However, these methods are quite limited 
as they presume that either the noise distribution or a good initial guess for the lasso's tuning parameter is known.
Even though our estimator builds on ideas from the aforementioned papers, it is markedly different from the methods considered there and goes beyond them in important aspects. We discuss this in detail in Section \ref{sec:estimateeffnoise} after introducing our estimator.

We now briefly summarize the main contributions of our paper with regards to the two statistical problems (i) and (ii).
\begin{enumerate}[leftmargin=0.75cm, label=(\roman*), font=\normalfont]

\item \textit{Tuning parameter calibration for the lasso.} 
Our estimator~$\hat{\lambda}_\alpha$ of the quantile~$\lambda_\alpha^*$ can be used to calibrate the lasso with essentially optimal finite-sample guarantees. 
Specifically, we derive finite-sample statements of the form 
\begin{equation}\label{eq:intro:finitesampleguarantee:estimator}
\pr \Big( \norm{\beta^* - \hat{\beta}_{(1+\delta) \hat{\lambda}_\alpha}} \leq \kappa (1+\delta) \lambda_{\alpha-\nu_n}^* \Big) \ge 1-\alpha-\eta_n, 
\end{equation}
where $0 < \nu_n \le C n^{-K}$ and $0 < \eta_n \le C n^{-K}$ for some positive constants $C$ and $K$.
Statement \eqref{eq:intro:finitesampleguarantee:estimator} shows that calibrating the lasso with the estimator $\hat{\lambda}_\alpha$ 
yields almost the same finite-sample bound on the loss $L(\beta^*,\beta) = \norm{\beta^* - \beta}$ as calibrating it with the oracle parameter $\lambda_\alpha^*$. 
In particular, \eqref{eq:intro:finitesampleguarantee:estimator} is almost as sharp as the oracle bound 
$\pr (\norm{\beta^* - \hat{\beta}_{(1+\delta) \lambda_\alpha^*}} \leq \kappa (1+\delta) \lambda_\alpha^*) \ge 1-\alpha$, 
which is obtained by plugging $\lambda = \lambda_\alpha^*$ into \eqref{eq:intro:finitesampleguarantee:oracle}. \\
Finite-sample guarantees for the practical calibration of the lasso's tuning parameter are scarce. 
Exceptions include finite-sample bounds for Adaptive Validation~(AV) \citep{Chichignoud16} and Cross-Validation~(CV) \citep{Chetverikov16}.
One advantage of our approach via the effective noise is that it yields finite-sample guarantees not only for a specific loss 
but for any loss for which an oracle inequality of the type \eqref{eq:oracleinequality} is available. 
Another advantage is that it does not depend on secondary tuning parameters that are difficult to choose in practice; 
the only parameter it depends on is the level $1-\alpha$, which plays a similar role as the significance level of a test and, 
therefore, can be chosen in the same vein in practice.

\item \textit{Inference on the parameter vector $\beta^*$.} 
Our estimator $\hat{\lambda}_\alpha$ of the quantile $\lambda_\alpha^*$ can also be used to test hypotheses on the parameter vector $\beta^*$ in the model $\Y = \Xmat \beta^* + \eps$. 
Consider again the problem of testing $H_0: \beta^* = 0$ against $H_1: \beta^* \ne 0$. 
Our approach motivates the following test:
reject $H_0$ at the significance level $\alpha$ if $T > \hat{\lambda}_\alpha$.
We prove under mild regularity conditions that this test has the correct level $\alpha$ under $H_0$ and is consistent against alternatives that are not too close to~$H_0$.
Moreover, we show that the test can be generalized readily to more complex hypotheses. \\
High-dimensional inference based on the lasso has turned out to be a very difficult problem.
Some of the few advances that have been made in recent years include tests for the significance of small, fixed groups of para\-meters 
\citep{Belloni13,Zhang14,vdGeer14b,Javanmard14,Gold19},
tests for the significance of parameters entering the lasso path \citep{Lockhart14},
rates for confidence balls for the entire parameter vector (and infeasibility thereof)~\citep{Nickl13, Cai18},
and methods for inference after model selection \citep{Lee16,Tibshirani16}. 
In stark contrast to most other methods for high-dimensional inference, 
our tests are completely free of tuning parameters and, therefore, 
dispense with any fine-tuning (such as the calibration of 
multiple lasso tuning parameters in the first group of papers cited above).
\end{enumerate}

The paper is organized as follows. 
In Section \ref{sec:model}, we detail the modeling framework.
Our estimator of the quantiles of the effective noise is developed in Section~\ref{sec:estimateeffnoise}.
In Section~\ref{sec:statapp}, we apply the estimator to tuning parameter calibration and inference for the lasso.
Our theoretical analysis is complemented by a simulation study in Section \ref{sec:sim}, which investigates the finite-sample performance of our methods.

\section{Model setting}\label{sec:model}

We consider the standard linear model 
\begin{equation}\label{model}
\Y = \Xmat \beta^* + \eps, 
\end{equation}
where $\Y = (\Y_1,\ldots,\Y_n)^\top \in \Rn$ is the response vector, $\Xmat = (\Xvec_1,\ldots,\Xvec_n)^\top \in \Rnp$ is the design matrix with the vectors $\Xvec_i = (\Xvec_{i1},\ldots,\Xvec_{ip})^\top$, $\beta^* = (\beta^*_1,\ldots,\beta^*_p)^\top \in \Rp$ is the parameter vector, 
and $\eps = (\eps_1,\ldots,\eps_n)^\top \in \Rn$ is the noise vector. 
We are particularly interested in high-dimensional versions of the model, that is, $p\approx n$ or even $p\gg n$.
Throughout the paper, we assume the design matrix~$\Xmat$ to be random,
but our results carry over readily to fixed design matrices. 
We impose the following regularity conditions on the model~\eqref{model}: 
\begin{enumerate}[label=(C\arabic*),leftmargin=1cm]
\item \label{C1} The random variables $(\Xvec_i,\eps_i)$ are independent across $i$. 
\item \label{C2} The covariates $\Xvec_{ij}$ have bounded support, that is, $|\Xvec_{ij}| \le C_\Xvec$ for all $i$, $j$ and some sufficiently large constant $C_\Xvec < \infty$. 
Moreover, $n^{-1} \sum_{i=1}^n \ex[\Xvec_{ij}^2] \ge c_\Xvec^2$ for some constant $c_\Xvec > 0$. 
\item \label{C3} The noise variables $\eps_i$ are such that $\ex[\eps_i|\Xvec_i] = 0$ and $\ex[|\eps_i|^\theta] \le C_\theta < \infty$ for some $\theta > 4$ and all $i$. 
Moreover, the conditional noise variance $\sigma^2(X_i) = \ex[\eps_i^2|\Xvec_i]$ satisfies $0 < c_\sigma^2 \le \sigma^2(\cdot) \le C_\sigma^2 < \infty$ with some suitable constants $c_\sigma$ and $C_\sigma$. 
\item \label{C4} It holds that $p \le C_r n^r$, where $r > 0$ is an arbitrarily large but fixed constant and $C_r > 0$. 
\item \label{C5} There exist a constant $C_\beta<\infty$ and some small $\delta_\beta > 0$ such that $\normone{\beta^*} \le C_\beta n^{1/2-\delta_\beta}$.
\end{enumerate}

Condition~\ref{C1} stipulates independence across the observations, but the observations need not be identically distributed.  
The assumption about the boundedness of the covariates~$\Xvec_{ij}$ in~\ref{C2} makes the derivations more lucid 
but can be relaxed to sufficiently strong moment conditions on the variables~$\Xvec_{ij}$. 
Assumption \ref{C3} on the moments of the noise terms~$\eps_i$ is quite mild: only a bit more than the first four moments are required to exist. 
Condition~\ref{C4} on the relationship between $n$ and $p$ is mild as well: 
$p$ is allowed to grow as any polynomial of~$n$. 
Condition~\ref{C5} imposes sparsity on the parameter vector $\beta^*$ in an $\ell_1$-sense.
One could also replace it by a similar assumption in terms of the $\ell_0$-norm. However, 
an advantage of the $\ell_1$-version is that it allows for approximate sparsity 
-- see e.g.\ Section~3.2 in \cite{vdGeer13} or Section~2.8 in \cite{van2016estimation}.

\section{Estimating the effective noise}\label{sec:estimateeffnoise}

\subsection{Definition of the estimator}

Let $\lambda_\alpha^*$ be the $(1-\alpha)$-quantile of the effective noise $2\normsup{\Xmat^\top \eps}/n$, 
which is formally defined as $\lambda_\alpha^* = \inf \{q: \pr(2\normsup{\Xmat^\top \eps}/n \le q) \ge 1-\alpha\}$. 
We estimate $\lambda_\alpha^*$ as follows:
for any $\lambda$, let $\hat{\eps}_{\lambda} = \Y - \Xmat \hat{\beta}_{\lambda}$ be the residual vector that results from fitting the lasso with the tuning parameter $\lambda$, and let $e = (e_1,\ldots,e_n)^\top$ be a standard normal random vector independent of the data $(\Xmat,\Y)$. 
Define the criterion function 
\[ \criterion(\lambda,e) = \max_{1 \le j \le p} \Big| \frac{2}{n} \sum_{i=1}^n \Xvec_{ij} \hat{\eps}_{\lambda, i} e_i \Big|, \]
and let $\hat{q}_\alpha(\lambda)$ be the $(1-\alpha)$-quantile of $\criterion(\lambda,e)$ conditionally on $\Xmat$ and $\Y$. 
Formally, $\hat{q}_\alpha(\lambda) = \inf \{ q: \pr_e (\criterion(\lambda,e) \le q) \ge 1-\alpha \}$, where we use the shorthand $\pr_e( \, \cdot \, ) = \pr( \, \cdot \, | \, \Xmat,\Y)$. 
Our estimator of $\lambda_\alpha^*$ is defined as 
\begin{equation}\label{eq:lambdahat}
\hat{\lambda}_\alpha = \inf \big\{ \lambda > 0\ :\  \hat{q}_\alpha(\lambda^\prime) \le \lambda^\prime \text{ for all } \lambda^\prime \ge \lambda \big\}. 
\end{equation}

In practice, $\hat{\lambda}_\alpha$ can be computed by the following algorithm:
\begin{itemize}[leftmargin=1.5cm]
\item[Step 1:] For some large natural number $M$, specify a grid of points $0 < \lambda_1 < \ldots < \lambda_M = \overline{\lambda}$, where $\overline{\lambda} = 2 \|\Xmat^\top \Y\|_{\infty}/n$ is the smallest tuning parameter $\lambda$ for which $\hat{\beta}_{\lambda}$ equals zero. 
Simulate $L$ samples $e^{(1)},\ldots,e^{(L)}$ of the standard normal random vector $e$.
\item[Step 2:] For each grid point $1 \le m \le M$, compute the values of the criterion function $\{ \criterion(\lambda_m, e^{(\ell)}): 1 \le \ell \le L \}$ and calculate the empirical $(1-\alpha)$-quantile $\hat{q}_{\alpha,\text{emp}}(\lambda_m)$ from them.
\item[Step 3:] Approximate $\hat{\lambda}_\alpha$ by $\hat{\lambda}_{\alpha,\text{emp}} := \hat{q}_{\alpha,\text{emp}}(\lambda_{\hat{m}})$, where
$\hat{m} = \min \{ m : \hat{q}_{\alpha,\text{emp}}(\lambda_{m^\prime}) \le \lambda_{m^\prime} \text{ for all } {m^\prime} \ge m \}$
if $\hat{q}_{\alpha,\text{emp}}(\lambda_M) \le \lambda_M$ and $\hat{m} = M$ otherwise.
\end{itemize}
The values of~$M$ and~$L$ in this algorithm can be chosen large without excessive load on the computations:
the dependence of the computational complexity on~$M$ can be reduced by computing the lasso with warm starts along the tuning parameter path;
the influence of~$L$ can be reduced through basic parallelization.
Hence, the algorithm is computationally feasible even when $n$ and $p$ are very large.

\subsection{Heuristic idea of the estimator}\label{subsec:estimateeffnoise:heuristics}

Before analyzing the estimator $\hat{\lambda}_\alpha$ mathematically, we describe the heuristic idea behind it: 
for every $\lambda \in (0,\infty)$, the criterion function $\criterion(\lambda,e)$ can be regarded as a multiplier bootstrap version of the effective noise $2\normsup{\Xmat^\top \eps}/n = 2 \max_{1 \le j \le p} \linebreak |\sum_{i=1}^n \Xvec_{ij} \eps_i|/n$, where $e$ is the vector of bootstrap multipliers. 
Consequently, $\hat{q}_\alpha(\lambda)$ can be interpreted as a bootstrap estimator of the $(1-\alpha)$-quantile $\lambda_\alpha^*$ of the effective noise. 
Since the quality of the estimator $\hat{q}_\alpha(\lambda)$ hinges on the choice of $\lambda$,
the question is how to select an estimator $\hat{q}_\alpha(\lambda)$ from the family $\{ \hat{q}_\alpha(\lambda): \lambda > 0 \}$ that is a good approximation of $\lambda_\alpha^*$. 
Our selection rule \eqref{eq:lambdahat} is motivated by the following two heuristic claims which are justified below: with high probability, it holds that
\begin{eqnarray}
\hat{q}_\alpha(\lambda) \approx \lambda & \text{for} & \lambda \in [\lambda_\alpha^* - \delta, \lambda_\alpha^* + \delta] \label{eq:heuristics1} \\
\hat{q}_\alpha(\lambda) < \lambda & \text{for} & \lambda > \lambda_{\alpha}^* + \delta \label{eq:heuristics2}
\end{eqnarray}
with some small $\delta > 0$. Equation \eqref{eq:heuristics1} suggests that the function $\lambda \mapsto \hat{q}_\alpha(\lambda)$ has a fixed point near $\lambda_\alpha^*$, whereas equation \eqref{eq:heuristics2} tells us that there should not be any fixed point for values $\lambda > \lambda_\alpha^* + \delta$. Taken together, \eqref{eq:heuristics1} and \eqref{eq:heuristics2} suggest approximating $\lambda_\alpha^*$ by solving the fixed point equation $\hat{q}_\alpha(\lambda) = \lambda$ and picking the largest such fixed point $\lambda = \hat{\lambda}_\alpha$. 
This is the heuristic idea which underlies the formal definition of our estimator $\hat{\lambda}_\alpha$ in \eqref{eq:lambdahat}. 
A graphical illustration is provided in Figure \ref{fig:estimator}.
\medskip

\begin{figure}[t!]
\phantom{Upper margin}
\vspace{-0.7cm}

\centering
\includegraphics[width=0.45\textwidth]{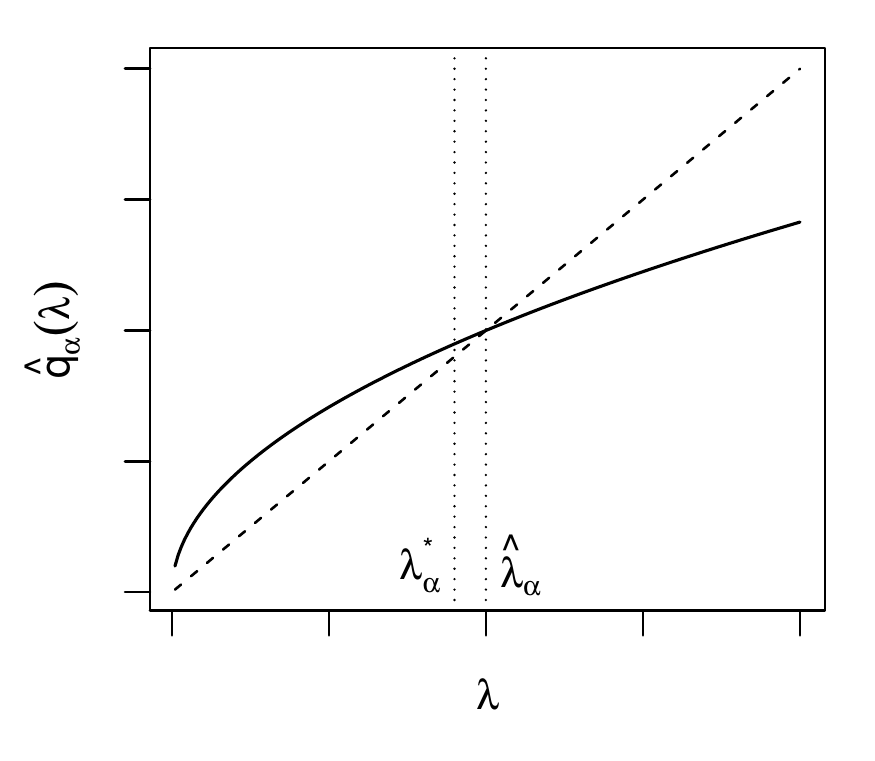}
\vspace{-0.25cm}

\caption{Graphical illustration of the estimator $\hat{\lambda}_\alpha$. 
The solid black line is the function $\lambda \mapsto \hat{q}_\alpha(\lambda)$, the dashed line is the 45-degree line, and the two vertical dotted lines indicate the values of $\lambda_\alpha^*$ and $\hat{\lambda}_\alpha$, respectively.} \label{fig:estimator} 
\end{figure}

\textit{Discussion of the heuristic claim \eqref{eq:heuristics1}.} 
To start with, we bound the criterion function $\criterion(\lambda,e)$ from below and above by 
\[ Q(e) - r_\lambda(e) \le \criterion(\lambda,e) \le Q(e) + r_\lambda(e), \]
where
\begin{align*}
Q(e) & = \max_{1 \le j \le p} \Big| \frac{2}{n} \sum_{i=1}^n \Xvec_{ij} \eps_i e_i \Big| \\
r_{\lambda}(e) & = \max_{1 \le j \le p} \Big| \frac{2}{n} \sum_{i=1}^n \Xvec_{ij} \Xvec_i^\top (\beta^* - \hat{\beta}_\lambda) e_i \Big|. 
\end{align*}
Here, $Q(e)$ is a multiplier bootstrap version of the effective noise which is based on the true noise terms $\eps$ rather than the residuals $\hat{\eps}_\lambda$. The term $r_{\lambda}(e)$ is a remainder that captures the estimation error $\hat{\eps}_\lambda-\eps$ produced by the lasso $\hat{\beta}_\lambda$. Let $\lambda_\alpha$ be the $(1-\alpha)$-quantile of $Q(e)$ conditionally on $\Xmat$ and $\Y$. Theory for the multiplier bootstrap in high dimensions \citep{Chernozhukov2013} suggests that the quantile $\lambda_\alpha$ gives a good approximation to $\lambda_\alpha^*$. If the remainder $r_{\lambda}(e)$ tends to be small for a certain choice of $\lambda$, then the criterion function $\criterion(\lambda,e)$ tends to be close to $Q(e)$, which in turn suggests that the quantile $\hat{q}_\alpha(\lambda)$ is close to $\lambda_\alpha$. Since $\lambda_\alpha$ gives a good approximation to $\lambda_\alpha^*$, we expect $\hat{q}_\alpha(\lambda)$ to be an accurate estimate of $\lambda_\alpha^*$ as well.

Standard prediction bounds for the lasso suggest that the tuning parameter choice $\lambda = \lambda_\alpha^*$ produces a precise model fit $\Xmat \hat{\beta}_{\lambda_\alpha^*}$. 
The prediction bound~\eqref{eq:predictionbound}, for example, implies that with pro\-bability at least $1-\alpha$, we have $\normtwos{\Xmat (\beta^* - \hat{\beta}_{\lambda_\alpha^*})}/n \le 2 \lambda_\alpha^* \normone{\beta^*}$, where $ 2 \lambda_\alpha^* \normone{\beta^*} = O (\normone{\beta^*} \sqrt{\log(p) / n}) = o(1)$ under our technical conditions. 
Hence, we expect the remainder term $r_{\lambda_\alpha^*}(e)$ to be small. 
%Hence, we expect the residual vector $\hat{\eps}_{\lambda_\alpha^*} = \Y - \Xmat \hat{\beta}_{\lambda_\alpha^*}$ to be a good proxy of the true noise vector $\eps$. This suggests that the remainder term $r_{\lambda_\alpha^*}(e)$ is small. 
From the considerations in the previous paragraph, it follows that $\hat{q}_\alpha(\lambda_\alpha^*)$ should be a suitable estimate of $\lambda_\alpha^*$, that is, $\hat{q}_\alpha(\lambda_\alpha^*) \approx \lambda_\alpha^*$. Since $\hat{q}_\alpha(\lambda) \approx \hat{q}_\alpha(\lambda_\alpha^*)$ for values of $\lambda$ close to $\lambda_\alpha^*$ (which is due to the continuity of the solution path of the lasso), we further expect that  
\begin{equation*}
\hat{q}_\alpha(\lambda) \approx \lambda \quad \text{for} \quad \lambda \in [\lambda_\alpha^* - \delta, \lambda_\alpha^* + \delta]
\end{equation*}
with some small $\delta > 0$, which is the heuristic claim \eqref{eq:heuristics1}.  
\medskip

\textit{Discussion of the heuristic claim \eqref{eq:heuristics2}.} 
As we gradually increase $\lambda$ from $\lambda_\alpha^*$ to larger values, the lasso estimator $\hat{\beta}_{\lambda}$ tends to become more biased towards zero, 
implying that the residual vector $\hat{\eps}_\lambda$ gets a less accurate proxy of the noise vector $\eps$.  
As a consequence, we expect the remainder term $r_\lambda(e)$ and thus the criterion function $\criterion(\lambda,e)$ to increase as $\lambda$ gets larger. This in turn suggests that the quantile $\hat{q}_\alpha(\lambda)$ gets larger with increasing $\lambda$, thus overestimating $\lambda_\alpha^*$ more and more strongly. 
On the other hand, one can formally prove that the remainder $r_\lambda(e)$ grows quite slowly with $\lambda$. In particular, one can show that with high probability, $r_{\lambda}(e) \le C \{ (\log n)^2 /  n^{1/4} \} \sqrt{\lambda}$ for all $\lambda \ge \lambda_\alpha^*$. A formalized version of this statement is given in Lemma \ref{lemmaA2} in the Appendix. Since $\criterion(\lambda,e) \le Q(e) + r_\lambda(e)$, this implies that the criterion function $\criterion(\lambda,e)$ and thus its $(1-\alpha)$-quantile $\hat{q}_\alpha(\lambda)$ grow fairly slowly with increasing $\lambda$. In particular, we expect $\hat{q}_\alpha(\lambda)$ to grow more slowly than $\lambda$, that is, 
\begin{equation*}
\hat{q}_\alpha(\lambda) < \lambda \quad \text{for} \quad \lambda > \lambda_{\alpha}^* + \delta
\end{equation*}
with some small $\delta > 0$. This is the heuristic claim \eqref{eq:heuristics2}.

\subsection{Theoretical analysis of the estimator}

We now analyze the theoretical properties of the estimator $\hat{\lambda}_\alpha$. 
To do so, we use the following notation. 
By $C_1$, $K_1$, $C_2$ and $K_2$, we denote positive real constants that depend only on the set of model parameters $\Theta = \{ c_\Xvec, C_\Xvec, c_\sigma, C_\sigma, C_\theta, \theta, C_r, r, C_\beta, \delta_\beta \}$ defined in \ref{C1}--\ref{C5}. 
The constants $C_1$, $K_1$, $C_2$ and $K_2$ are thus in particular independent of the sample size $n$ and the dimension $p$. 
Moreover, we let 
\[ \mathcal{T}_\lambda = \Big\{ \frac{2}{n} \normsup{\Xmat^\top \eps} \le \lambda \Big\} \]
be the event that the effective noise $2\normsup{\Xmat^\top \eps}/n$ is smaller than $\lambda$. 
The following theorem, which is the main result of the paper, formally relates the estimator $\hat{\lambda}_{\alpha}$ to the quantiles of the effective noise. 
In the sequel, we will use this theorem to derive results on optimal tuning parameter choice and inference for the lasso.

\pagebreak

\begin{theorem}\label{theo1}
Let \ref{C1}--\ref{C5} be satisfied. 
There exist an event $\mathcal{A}_n$ with $\pr(\mathcal{A}_n) \ge 1 - C_1 n^{-K_1}$ for some positive constants $C_1$ and $K_1$ and a sequence of real numbers $\nu_n$ with $0 < \nu_n \le C_2 n^{-K_2}$ for some positive constants $C_2$ and $K_2$ such that the following holds:
on the event $\mathcal{T}_{\lambda_{\alpha+\nu_n}^*} \cap \mathcal{A}_n$, 
\[ \lambda_{\alpha + \nu_n}^* \le \hat{\lambda}_{\alpha} \le \lambda_{\alpha-\nu_n}^* \]
for every $\alpha \in (a_n, 1-a_n)$ with $a_n = 2 \nu_n + (n \lor p)^{-1}$. 
\end{theorem}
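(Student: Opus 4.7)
The plan is to localise the fixed point $\hat{\lambda}_\alpha$ near $\lambda_\alpha^*$ via three approximations: a multiplier-bootstrap comparison that brings the conditional distribution of $Q(e)$ close to the unconditional distribution of $2\normsup{\Xmat^\top \eps}/n$; control of the residual error $r_\lambda(e)$ that relates $\criterion(\lambda,e)$ to $Q(e)$; and Gaussian anti-concentration that turns additive discrepancies into small shifts of the quantile index.

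First, I would invoke the high-dimensional multiplier-bootstrap comparison of \cite{Chernozhukov2013}: on an event $\mathcal{A}_n^{(1)}$ of probability at least $1 - Cn^{-K}$, the conditional $(1-\beta)$-quantile $\tilde{\lambda}_\beta$ of $Q(e)$ satisfies $\lambda_{\beta+\rho_n}^* \le \tilde{\lambda}_\beta \le \lambda_{\beta-\rho_n}^*$ with $\rho_n = O(n^{-K})$; the hypotheses reduce under \ref{C1}--\ref{C4} to moment conditions on $\Xvec_{ij}\eps_i$. Second, from $|\criterion(\lambda,e) - Q(e)| \le r_\lambda(e)$ and Lemma~\ref{lemmaA2}, on an event $\mathcal{A}_n^{(2)}$ I would extract a deterministic bound $\rho_\lambda \le C(\log n)^2 n^{-1/4}\sqrt{\lambda}$ with $\pr_e(r_\lambda(e) > \rho_\lambda) \le \tau_n = O(n^{-K})$ uniformly in the relevant range of $\lambda$. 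The event $\mathcal{T}_{\lambda_{\alpha+\nu_n}^*}$ appearing in the theorem is exactly what guarantees that $\lambda \ge 2\normsup{\Xmat^\top\eps}/n$, and hence that the lasso prediction bound \eqref{eq:predictionbound} (and therefore Lemma~\ref{lemmaA2}) applies for $\lambda$ as small as $\lambda_{\alpha+\nu_n}^*$. A routine distribution-function sandwich then gives $\tilde{\lambda}_{\alpha+\tau_n} - \rho_\lambda \le \hat{q}_\alpha(\lambda) \le \tilde{\lambda}_{\alpha-\tau_n} + \rho_\lambda$, so that composing with the first step yields, with $\mu_n := \rho_n + \tau_n$,
\begin{equation*}
\lambda_{\alpha+\mu_n}^* - \rho_\lambda \;\le\; \hat{q}_\alpha(\lambda) \;\le\; \lambda_{\alpha-\mu_n}^* + \rho_\lambda. \tag{$\ast$}
\end{equation*}

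Third, Nazarov-type anti-concentration (also from \cite{Chernozhukov2013}) bounds the density of $2\normsup{\Xmat^\top\eps}/n$ by $M_n \le C\sqrt{\log p}$, giving the quantile separation $\lambda_{\alpha-s}^* - \lambda_\alpha^* \ge s/M_n$; I would pick $\nu_n = C_2 n^{-K_2}$ with $K_2 > 0$ small enough that $\nu_n \gg \mu_n$ and $(\nu_n - \mu_n)/M_n$ dominates $\rho_\lambda$ for every $\lambda$ in play. The two bounds then fall out of $(\ast)$. For the upper bound, any $\lambda' \ge \lambda_{\alpha-\nu_n}^*$ satisfies $\hat{q}_\alpha(\lambda') \le \lambda_{\alpha-\mu_n}^* + \rho_{\lambda'} \le \lambda_{\alpha-\nu_n}^* \le \lambda'$, so $\lambda_{\alpha-\nu_n}^*$ lies in the defining set and $\hat{\lambda}_\alpha \le \lambda_{\alpha-\nu_n}^*$. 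For the lower bound, given any $\lambda < \lambda_{\alpha+\nu_n}^*$ I use the witness $\lambda' := \lambda_{\alpha+\nu_n/2}^*$, which satisfies $\lambda' \ge \lambda_{\alpha+\nu_n}^* > \lambda$; the left side of $(\ast)$ and anti-concentration give $\hat{q}_\alpha(\lambda') \ge \lambda_{\alpha+\mu_n}^* - \rho_{\lambda'} > \lambda_{\alpha+\nu_n/2}^* = \lambda'$, ruling $\lambda$ out of the defining set and delivering $\hat{\lambda}_\alpha \ge \lambda_{\alpha+\nu_n}^*$. The condition $\alpha \in (a_n, 1-a_n)$ with $a_n = 2\nu_n + (n \lor p)^{-1}$ keeps all shifted indices $\alpha \pm \nu_n$, $\alpha \pm \mu_n$ strictly inside $(0,1)$, the $(n \lor p)^{-1}$ summand absorbing the unavoidable edge term from the CCK Gaussian approximation.

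The principal obstacle is Step~2: lifting the pointwise prediction bound \eqref{eq:predictionbound} to a uniform-in-$\lambda$ control of the Gaussian linear functional $\sum_{i=1}^{n} \Xvec_{ij} \Xvec_{i}^{\top} (\beta^{*} - \hat{\beta}_{\lambda}) e_{i}$, despite $\hat{\beta}_\lambda$ being an implicit, non-smooth function of $\lambda$. This is where Lemma~\ref{lemmaA2} carries the weight --- combining the lasso prediction bound, the boundedness assumption \ref{C2}, and a conditional Gaussian concentration step --- and the $n^{-1/4}$ rate in its conclusion is ultimately what dictates the size of $K_2$.
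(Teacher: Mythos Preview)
Your three ingredients --- the CCK bootstrap comparison (the paper's Propositions~A.1 and~A.2), the residual control via Lemma~\ref{lemmaA2}, and Gaussian anti-concentration (the paper's Lemma~A.5) --- are exactly the ones the paper uses, and the overall architecture (sandwich~$(\ast)$, then localise the fixed point) matches the paper's claims~(I) and~(II). The paper works directly with conditional probabilities $\pr_e(\crithat(\gamma,e)\le\gamma)$ rather than with the quantile sandwich~$(\ast)$, but the two formulations are equivalent.

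There is, however, a genuine gap in your upper-bound step. You write $\hat{q}_\alpha(\lambda') \le \lambda_{\alpha-\mu_n}^* + \rho_{\lambda'} \le \lambda_{\alpha-\nu_n}^* \le \lambda'$ and justify the middle inequality by choosing $\nu_n$ so that $(\nu_n-\mu_n)/M_n$ dominates $\rho_{\lambda'}$ ``for every $\lambda$ in play''. But the range in play is all $\lambda' \ge \lambda_{\alpha-\nu_n}^*$, and $\rho_{\lambda'}\asymp (\log n)^2 n^{-1/4}\sqrt{\lambda'}$ grows without bound in $\lambda'$, while $\lambda_{\alpha-\nu_n}^* - \lambda_{\alpha-\mu_n}^*$ is a fixed quantity. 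So the middle inequality fails for large $\lambda'$, and no choice of $\nu_n$ rescues it. What you actually need is the direct inequality $\lambda_{\alpha-\mu_n}^* + \rho_{\lambda'} \le \lambda'$, and this follows not from a quantile-gap lower bound but from the observation that $\lambda' \mapsto \lambda' - \rho_{\lambda'}$ is increasing on the relevant range (since $\rho_{\lambda'}=c\sqrt{\lambda'}$ with $c=o(1)$), so it suffices to check the endpoint $\lambda'=\lambda_{\alpha-\nu_n}^*$. This is precisely how the paper handles it: in the proof of claim~(I) they write $\gamma - \rho_n\sqrt{\gamma} \ge (1-\rho_n)\gamma - \rho_n$ and then use monotonicity to reduce to $\gamma = \gamma_{\alpha-\nu_n}^*$, after which anti-concentration converts the small additive shift $\rho_n(1+\psi_n)$ at that single point into a small probability shift. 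Once you replace your middle inequality by this monotonicity argument, the rest of your outline goes through.
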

\noindent The proof of Theorem \ref{theo1} is given in the Appendix. 
Precise definitions of $\mathcal{A}_n$ and $\nu_n$ are provided in equations \eqref{eq:def:eventA} and \eqref{eq:nu}, respectively. 
It is important to note that the bounds $\lambda_{\alpha + \nu_n}^*$ and $\lambda_{\alpha-\nu_n}^*$ in Theorem \ref{theo1} are design-specific, that is, they depend on the distribution of the design vectors $\Xvec_i$ (as well as on the distribution of the noise variables $\eps_i$). Among other things, the bounds tend to get smaller as the design gets more correlated, that is, as the correlation between the covariates $\Xvec_{ij}$ increases.

Since $\mathbb{P}(\mathcal{T}_{\lambda_{\alpha+\nu_n}^*} \cap \mathcal{A}_n) \ge 1 - \alpha - Cn^{-K}$ with some constants $C$ and $K$ that only depend on the model parameters $\Theta$, Theorem \ref{theo1} immediately implies that
\[ \pr \big( \lambda_{\alpha + \nu_n}^* \le \hat{\lambda}_{\alpha} \le \lambda_{\alpha-\nu_n}^* \big) \ge 1-\alpha - Cn^{-K}. \]
Hence, with probability at least $1-\alpha-Cn^{-K} = 1-\alpha-o(1)$, our estimator $\hat{\lambda}_{\alpha}$ gives a good approximation to $\lambda_\alpha^*$ in the sense that $\lambda_{\alpha + \nu_n}^* \le \hat{\lambda}_{\alpha} \le \lambda_{\alpha-\nu_n}^*$. 
Another immediate consequence of Theorem \ref{theo1} is that $|\hat{\lambda}_\alpha - \lambda_\alpha^*| \le \lambda_{\alpha-\nu_n}^* - \lambda_{\alpha+\nu_n}^*$ on the event $\mathcal{T}_{\lambda_{\alpha+\nu_n}^*} \cap \mathcal{A}_n$. From this, we obtain the deviation inequality
\begin{equation*}%\label{eq:dev-inequ-1}
\mathbb{P}\big( |\hat{\lambda}_\alpha - \lambda_\alpha^*| \le \rho_{n,\Xmat,\eps,\alpha} \big) \ge 1 - \alpha - Cn^{-K}, 
\end{equation*} 
where $\rho_{n,\Xmat,\eps,\alpha} = \lambda_{\alpha-\nu_n}^* - \lambda_{\alpha+\nu_n}^*$. 
With the help of Proposition \ref{propA1} from the Appendix, it is further possible to replace the bound $\rho_{n,\Xmat,\eps,\alpha}$ by some kind of Gaussian version: Let $\Wgauss = (\Wgauss_1,\ldots,\Wgauss_p)^\top$ be a Gaussian random vector with $\ex[\Wgauss_j] = 0$ for all $j$ and the covariances 
\begin{equation*}
\ex[\Wgauss_j \Wgauss_k] = \ex \left[ \Big( \frac{1}{\sqrt{n}} \sum_{i=1}^n \Xvec_{ij} \eps_i \Big) \Big( \frac{1}{\sqrt{n}} \sum_{i=1}^n \Xvec_{ik} \eps_i \Big) \right] 
\end{equation*}
for $1 \le j \le k \le p$. Moreover, let $\gamma_\alpha^\Wgauss$ be the $(1-\alpha)$-quantile of $\max_{1 \le j \le p} |\Wgauss_j|$. Then 
\begin{equation*}%\label{eq:dev-ineq-2}
\mathbb{P}\big( |\hat{\lambda}_\alpha - \lambda_\alpha^*| \le \rho_{n,\boldsymbol{X},\varepsilon,\alpha}^{\Wgauss} \big) \ge 1 - \alpha - Cn^{-K} 
\end{equation*} 
with $\rho_{n,\boldsymbol{X},\varepsilon,\alpha}^{\Wgauss} = 2 (\gamma_{\alpha-2\nu_n}^{\Wgauss} - \gamma_{\alpha+2\nu_n}^{\Wgauss}) / \sqrt{n}$.

\begin{remark}
An interesting question is the following: how large is the distance $\gamma_{\alpha-\nu_n}^{\Wgauss} - \gamma_{\alpha+\nu_n}^{\Wgauss}$ and thus the bound $\rho_{n,\Xmat,\eps,\alpha}^{\Wgauss}$? By definition, $\gamma_{\alpha}^{\Wgauss}$ is the $(1-\alpha)$-quantile of the maximum $\max_{1 \le j \le p} |\Wgauss_j|$ of $p$ Gaussian random variables with a general, potentially very complicated covariance structure. It is highly non-trivial to characterize the distribution of maxima of Gaussian random variables with a general correlation structure. Hence, finding precise bounds on the quantiles $\gamma_{\alpha}^{\Wgauss}$ (and thus their distance) is a hard problem in general. In some special cases, however, it is possible to obtain suitable bounds. 
Consider in particular the situation that the variables $\Wgauss_j$ are i.i.d., which occurs for example when (i) the design variables $\Xvec_{ij}$ are normalized such that $\ex[\Xvec_{ij}] = 0$ and $\ex[\Xvec_{ij}^2] = 1$ for all $i$ and $j$, (ii) the design is uncorrelated (i.e., $\ex[\Xvec_{ij} \Xvec_{ik}] = 0$ for all $j \ne k$), and (iii) the noise $\eps_i$ is homoskedastic (i.e., $\sigma^2(\Xvec_i) \equiv \text{const.}$). In this case, one can show that 
\begin{equation}\label{bound-EVT}
\gamma_{\alpha-2\nu_n}^{\Wgauss} - \gamma_{\alpha+2\nu_n}^{\Wgauss} \le \frac{C}{\sqrt{\log p}}
\end{equation}
via classic extreme value theory, where $C$ is a sufficiently large constant independent of $n$ and $p$. A brief sketch of the proof is included in the Supplementary Material for completeness. 
\end{remark}

\subsection{Relationship of the estimator to existing methods}

Roughly speaking, existing methods for estimating the quantiles of the effective noise fall into two categories:
\begin{enumerate}[label=(\Alph*),leftmargin=0.8cm]

\item If the distribution of the error vector $\eps$ is known, it is trivial to construct an approximation of $\lambda_\alpha^*$. To fix ideas, let $\eps \sim N(0,\sigma^2 \idmat)$ with known variance parameter $\sigma^2$ and consider a fixed design $\Xmat$ for simplicity. (Otherwise, assume that $\eps$ is independent of $\Xmat$ and condition on the latter.) In this case, the distribution of the effective noise $2\normsup{\Xmat^\top \eps}/n$ and thus its $(1-\alpha)$-quantile $\lambda_\alpha^*$ is known and can be computed by Monte Carlo simulations in practice. If the distribution of $\eps$ is not known exactly but is known to belong to a small family of distributions $\mathcal{F}$, it is further possible to compute (finite sample and asymptotic) upper bounds on $\lambda_\alpha^*$ under certain conditions as detailed in \cite{Belloni11}.

\item In the more interesting situation where the distribution of $\eps$ is unknown, a simple way to estimate $\lambda_\alpha^*$ is as follows: Let $\lambda^{[0]}$ be a preliminary choice of the lasso's tuning parameter. Plug $\lambda^{[0]}$ into $\hat{q}_\alpha(\cdot)$ and use the resulting value $\hat{q}_\alpha(\lambda^{[0]})$ as an estimator of $\lambda_\alpha^*$. -- This plug-in approach is not very satisfactory: The quality of the estimator $\hat{q}_\alpha(\lambda^{[0]})$ obviously hinges on the precise choice of $\lambda^{[0]}$. In particular, as discussed in our heuristic considerations of Section \ref{subsec:estimateeffnoise:heuristics}, we can expect the following: If $\lambda^{[0]}$ is close to $\lambda_\alpha^*$, then the estimator $\hat{q}_\alpha(\lambda^{[0]})$ will tend to be close to $\lambda_\alpha^*$ as well. In contrast, if $\lambda^{[0]}$ happens to be far away from $\lambda_\alpha^*$, then $\hat{q}_\alpha(\lambda^{[0]})$ may also be far off. Hence, simply plugging a preliminary choice $\lambda^{[0]}$ into $\hat{q}_\alpha(\cdot)$ does not solve the problem of estimating $\lambda_\alpha^*$ but merely shifts it to the choice of $\lambda^{[0]}$: if we want to make sure that $\hat{q}_\alpha(\lambda^{[0]})$ is a good estimator of $\lambda_\alpha^*$, we need to make sure that the same holds for the preliminary estimator $\lambda^{[0]}$.

\end{enumerate}
Estimators of category (A) are for example considered in \cite{Belloni11} and \cite{Belloni13b}, estimators of category (B) can be found in \cite{Chernozhukov2013} (in the context of the Dantzig selector rather than the lasso). It is important to emphasize that the problem of estimating the quantile $\lambda_\alpha^*$ is not the focus but only a very minor aspect of the aforementioned papers. This is presumably the reason why only the two simple approaches (A) and (B) have been considered there. Indeed, we are not aware of any article whose main focus is the estimation of the quantiles of the effective noise.

One way to improve on the plug-in method from (B) is to iterate it: Given some starting value $\lambda^{[0]}$, one computes the update $\lambda^{[r]} = \hat{q}_\alpha(\lambda^{[r-1]})$ for $r =1,2,\ldots$ until some convergence criterion is satisfied. The idea behind this iterative procedure is to find a fixed point $\lambda = \hat{q}_\alpha(\lambda)$ of the function $\hat{q}_\alpha(\cdot)$. Hence, it relies on the same heuristic as our method. The main contribution of our paper is (i) to devise an estimation approach which formalizes the fixed point heuristic and (ii) to derive finite sample theory for it. An important practical advantage of our fixed point method over the plug-in method from (B) is that it is free of tuning parameters: unlike the plug-in method, it does not require a preliminary estimator $\lambda^{[0]}$. Its only free parameter is the value $\alpha \in (0,1)$.

\section{Statistical applications}\label{sec:statapp}

\subsection{Tuning parameter choice}\label{subsec:tuning}

A major challenge when implementing the lasso estimator $\hat{\beta}_\lambda$ is to choose the regularization parameter~$\lambda$. 
As already discussed in the Introduction, the lasso satisfies the prediction bound \eqref{eq:predictionbound}, which can be rephrased as follows: 
\begin{equation}\label{eq:predictionbound'}
\text{On the event } \mathcal{T}_\lambda, \normtwos{\Xmat (\beta^* - \hat{\beta}_{\lambda^\prime})}/n \leq 2 \lambda^\prime \normone{\beta^*} \text{ for every } \lambda^\prime \ge \lambda. 
\end{equation}
To control the prediction error, we would like to choose the smallest tuning parameter $\lambda$ such that the bound $\normtwos{\Xmat (\beta^* - \hat{\beta}_\lambda)}/n \leq 2 \lambda \normone{\beta^*}$ holds with high probability. 
Formally speaking, we may consider
\[ \lambda^{\textnormal{oracle}}_\alpha = \inf \{ \lambda > 0: \pr(\mathcal{T}_\lambda) \ge 1 - \alpha \} \] 
with some $\alpha \in (0,1)$ as the optimal tuning parameter. 
We call $\lambda^{\textnormal{oracle}}_\alpha$ the oracle tuning parameter.
It immediately follows from~\eqref{eq:predictionbound'} that for every $\lambda \ge \lambda_\alpha^{\text{oracle}}$, 
\[ \pr \biggl( \frac{1}{n} \normtwos{\Xmat (\beta^* - \hat{\beta}_\lambda)} \leq 2 \lambda \normone{\beta^*} \biggl) \ge 1 - \alpha, \]
whereas this probability bound is not guaranteed for any other $\lambda < \lambda_\alpha^{\text{oracle}}$. 
Consequently, $\lambda_\alpha^{\text{oracle}}$ is the smallest tuning parameter for which the prediction bound~\eqref{eq:predictionbound'} yields the finite-sample guarantee 
\begin{equation}\label{eq:predictionoracle}
\frac{1}{n} \normtwos{\Xmat (\beta^* - \hat{\beta}_{\lambda^{\textnormal{oracle}}_\alpha})} \leq 2 \lambda^{\textnormal{oracle}}_\alpha \normone{\beta^*} 
\end{equation}
with probability at least $1-\alpha$. 
Importantly, the oracle tuning parameter $\lambda_\alpha^{\text{oracle}}$ is nothing else than the $(1-\alpha)$-quantile $\lambda_\alpha^*$ of the effective noise, that is, $\lambda^{\textnormal{oracle}}_\alpha = \lambda_\alpha^*$ for every $\alpha \in (0,1)$. 
Our estimator $\hat{\lambda}_\alpha$ can thus be interpreted as an approximation of the oracle parameter $\lambda^{\textnormal{oracle}}_{\alpha}$. 
With the help of Theorem \ref{theo1}, we can show that implementing $\hat{\beta}_\lambda$ with the estimator $\lambda = \hat{\lambda}_\alpha$ produces almost the same finite-sample guarantee as \eqref{eq:predictionoracle}.
\begin{prop}\label{prop:tuning:1}
Let the conditions of Theorem \ref{theo1} be satisfied. 
With probability $\ge 1 - \alpha - \nu_n - C_1 n^{-K_1} = 1 - \alpha + o(1)$, it holds that
\begin{equation*}
\frac{1}{n} \normtwos{\Xmat (\beta^* - \hat{\beta}_{\hat{\lambda}_\alpha})} \leq 2 \lambda^{\textnormal{oracle}}_{\alpha-\nu_n} \normone{\beta^*}.
\end{equation*}
\end{prop}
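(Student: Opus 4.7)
The plan is to combine Theorem \ref{theo1} with the prediction bound \eqref{eq:predictionbound'}, noting that $\lambda^{\textnormal{oracle}}_\alpha = \lambda_\alpha^*$. The proof has essentially no computation; it is a matter of tracking events and quantiles.

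First, I would work on the event $\mathcal{E}_n := \mathcal{T}_{\lambda_{\alpha+\nu_n}^*} \cap \mathcal{A}_n$, where $\mathcal{A}_n$ and $\nu_n$ are those produced by Theorem \ref{theo1}. On $\mathcal{E}_n$, Theorem \ref{theo1} yields the sandwich
\[
\lambda_{\alpha+\nu_n}^* \;\le\; \hat{\lambda}_\alpha \;\le\; \lambda_{\alpha-\nu_n}^*.
\]
In particular, since we are on $\mathcal{T}_{\lambda_{\alpha+\nu_n}^*}$, we have $2\normsup{\Xmat^\top \eps}/n \le \lambda_{\alpha+\nu_n}^* \le \hat{\lambda}_\alpha$, i.e.\ we are also on the event $\mathcal{T}_{\hat{\lambda}_\alpha}$. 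Applying the prediction bound \eqref{eq:predictionbound'} with $\lambda = \lambda' = \hat{\lambda}_\alpha$ then gives
\[
\frac{1}{n} \normtwos{\Xmat (\beta^* - \hat{\beta}_{\hat{\lambda}_\alpha})} \;\le\; 2 \hat{\lambda}_\alpha \normone{\beta^*} \;\le\; 2 \lambda_{\alpha-\nu_n}^* \normone{\beta^*} \;=\; 2 \lambda^{\textnormal{oracle}}_{\alpha-\nu_n} \normone{\beta^*},
\]
using the upper half of the sandwich and the identification $\lambda^{\textnormal{oracle}}_{\alpha-\nu_n} = \lambda_{\alpha-\nu_n}^*$ established in Section \ref{subsec:tuning}.

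It remains to bound $\pr(\mathcal{E}_n)$ from below. By the definition of the quantile $\lambda^*_{\alpha+\nu_n}$, we have $\pr(\mathcal{T}_{\lambda_{\alpha+\nu_n}^*}) \ge 1 - (\alpha + \nu_n)$, and by Theorem \ref{theo1}, $\pr(\mathcal{A}_n) \ge 1 - C_1 n^{-K_1}$. A union bound gives
\[
\pr(\mathcal{E}_n) \;\ge\; 1 - \alpha - \nu_n - C_1 n^{-K_1},
\]
and since $\nu_n = O(n^{-K_2})$, this is $1 - \alpha + o(1)$, as claimed.

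The argument is essentially just bookkeeping, so there is no real obstacle; the work has been done in Theorem \ref{theo1}. The only point that requires a moment of care is the observation that the upper bound $\hat{\lambda}_\alpha \le \lambda_{\alpha-\nu_n}^*$ is used to control the \emph{size} of the resulting prediction bound, while the lower bound $\hat{\lambda}_\alpha \ge \lambda_{\alpha+\nu_n}^*$ is what places us on $\mathcal{T}_{\hat{\lambda}_\alpha}$ so that the prediction bound is applicable in the first place. Both halves of the sandwich produced by Theorem \ref{theo1} are genuinely needed.
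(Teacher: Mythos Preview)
Your proof is correct and follows essentially the same approach as the paper: work on $\mathcal{T}_{\lambda_{\alpha+\nu_n}^*} \cap \mathcal{A}_n$, use the lower half of the sandwich from Theorem~\ref{theo1} to make the prediction bound applicable at $\hat{\lambda}_\alpha$, use the upper half to replace $\hat{\lambda}_\alpha$ by $\lambda_{\alpha-\nu_n}^*$, and bound the probability of the event by a union bound. The only cosmetic difference is that the paper phrases the first step via~\eqref{eq:predictionbound'} (the bound holds for all $\lambda' \ge \lambda_{\alpha+\nu_n}^*$ on $\mathcal{T}_{\lambda_{\alpha+\nu_n}^*}$) rather than explicitly invoking $\mathcal{T}_{\hat{\lambda}_\alpha}$, but this is the same observation.
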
 
\noindent For completeness, a short proof is provided in the Appendix. 
The upper bound  $2 \lambda^{\textnormal{oracle}}_{\alpha-\nu_n} \normone{\beta^*}$ in Proposition \ref{prop:tuning:1} is almost as sharp as the bound $2 \lambda^{\textnormal{oracle}}_\alpha \normone{\beta^*}$ in \eqref{eq:predictionoracle};
the only difference is that the $(1-\alpha)$-quantile $\lambda^{\textnormal{oracle}}_\alpha$ is replaced by the somewhat larger $(1-\{\alpha-\nu_n\})$-quantile $\lambda^{\textnormal{oracle}}_{\alpha-\nu_n}$. 
There are improved versions of the prediction bound~\eqref{eq:predictionbound} \citep{Lederer19} as well as other types of prediction bounds \citep{Dalalyan17,Hebiri12,vdGeer13} that can be treated in the same way.

Our method does not only allow us to obtain finite-sample bounds on the prediction loss. 
It can also be used to equip the lasso with finite-sample guarantees for other losses. 
We consider the $\ell_\infty$-loss $L_\infty(\beta^*,\beta) = \normsup{\beta^* - \beta}$ as an example.
Analogous considerations apply to any other loss for which an oracle inequality of the form \eqref{eq:oracleinequality} is available, such as the $\ell_1$- and $\ell_2$-losses.
Let $\activeset = \{ j : \beta^*_j \ne 0 \}$ be the active set of $\beta^*$. 
Moreover, for any vector $v = (v_1,\ldots,v_p)^\top \in \Rp$, let $v_S = (v_j \ind(j \in S))_{j=1}^p$ and define $v_{S^\complement}$ analogously with $S^\complement = \{1\ldots,p\} \setminus S$. The design matrix $\Xmat$ is said to fulfill the $\ell_\infty$-restricted eigenvalue condition \citep{Chichignoud16} with the constants $\phi > 0$ and $\delta > 0$ if 
\begin{equation}\label{eq:RE}
\frac{\normsup{\Xmat^\top \Xmat v}}{n} \ge \phi \normsup{v} \quad \text{for all } v \in \mathbb{C}_\delta(S),
\end{equation}
where $\mathbb{C}_\delta(S)$ is the double cone
\[ \mathbb{C}_\delta(S) = \biggl\{ v \in \Rp: \normone{v_{S^\complement}} \le \frac{2+\delta}{\delta} \normone{v_S} \biggr\}. \] 
Under condition \eqref{eq:RE}, we obtain the following oracle inequality, whose proof is provided in the Supplementary Material.
\begin{lemma}\label{lemma:oracleinequality}
Suppose that $\Xmat$ satisfies the restricted eigenvalue condition \eqref{eq:RE}. 
On the event $\mathcal{T}_\lambda$, it holds that 
\begin{equation}
\normsup{\hat{\beta}_{\lambda^\prime} - \beta^*} \le \kappa \lambda^\prime
\end{equation}
for every $\lambda^\prime \ge (1+\delta) \lambda$ with $\kappa = 2/\phi$. 
\end{lemma}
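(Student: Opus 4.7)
The plan is to follow the standard $\ell_\infty$-oracle inequality recipe for the lasso, combining three ingredients: the KKT conditions for $\hat\beta_{\lambda'}$, the basic inequality (which will force the error vector into the cone $\mathbb{C}_\delta(S)$), and the $\ell_\infty$-restricted eigenvalue condition \eqref{eq:RE}. Throughout, I will write $u := \hat\beta_{\lambda'} - \beta^*$ and work on the event $\mathcal{T}_\lambda$.

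First I would derive a deterministic bound on $\Xmat^\top\Xmat u/n$ via the KKT conditions: there exists $\hat s \in \partial \normone{\hat\beta_{\lambda'}}$ with $\normsup{\hat s}\le 1$ such that $(2/n)\Xmat^\top(\Y - \Xmat\hat\beta_{\lambda'}) = \lambda'\hat s$. Substituting $\Y = \Xmat\beta^* + \eps$ yields $(2/n)\Xmat^\top\Xmat u = (2/n)\Xmat^\top\eps - \lambda'\hat s$, and the triangle inequality combined with the definition of $\mathcal{T}_\lambda$ gives $\normsup{(2/n)\Xmat^\top\Xmat u} \le \lambda + \lambda'$. Next I would establish the cone membership $u\in \mathbb{C}_\delta(S)$. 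Starting from the basic inequality
\[ \frac{1}{n}\normtwos{\Y - \Xmat\hat\beta_{\lambda'}} + \lambda'\normone{\hat\beta_{\lambda'}} \le \frac{1}{n}\normtwos{\eps} + \lambda'\normone{\beta^*}, \]
I expand the quadratic, bound the cross term $|(2/n)\eps^\top\Xmat u|$ by $\lambda\normone{u}$ using H\"older's inequality on $\mathcal{T}_\lambda$, and split $\normone{\hat\beta_{\lambda'}} = \normone{\hat\beta_{\lambda',S}} + \normone{u_{S^\complement}}$ (since $\beta^*$ is supported on $S$), together with $\normone{\beta^*_S} - \normone{\hat\beta_{\lambda',S}} \le \normone{u_S}$. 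Dropping the nonnegative prediction term $\normtwos{\Xmat u}/n$ leaves
\[ (\lambda' - \lambda)\normone{u_{S^\complement}} \le (\lambda' + \lambda)\normone{u_S}. \]
The hypothesis $\lambda' \ge (1+\delta)\lambda$ gives $(\lambda'+\lambda)/(\lambda'-\lambda) \le (2+\delta)/\delta$, so $u\in\mathbb{C}_\delta(S)$.

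Finally, the restricted eigenvalue condition \eqref{eq:RE} applied to $u$ yields $\phi\normsup{u} \le \normsup{\Xmat^\top\Xmat u}/n \le (\lambda+\lambda')/2 \le \lambda'$ (using $\lambda \le \lambda'$), and hence $\normsup{u} \le \lambda'/\phi \le \kappa \lambda'$ with $\kappa = 2/\phi$. The only mildly delicate point in the whole argument is the cone-membership step: the particular strengthening from $\lambda'\ge \lambda$ to $\lambda'\ge (1+\delta)\lambda$ is exactly what is needed so that the ratio $(\lambda'+\lambda)/(\lambda'-\lambda)$ is controlled by the dilation factor $(2+\delta)/\delta$ appearing in the definition of $\mathbb{C}_\delta(S)$; everything else is a routine convex-analytic calculation.
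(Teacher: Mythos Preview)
Your proof is correct and follows essentially the same route as the paper: both arguments combine the KKT (zero-subgradient) identity to bound $\normsup{\Xmat^\top\Xmat u}/n$ by $(\lambda+\lambda')/2\le\lambda'$, the basic inequality to place $u$ in the cone $\mathbb{C}_\delta(S)$, and then the $\ell_\infty$-restricted eigenvalue condition \eqref{eq:RE} to conclude. The only difference is cosmetic: the paper cites the cone-membership step as ``standard arguments from the lasso literature,'' whereas you spell out the derivation of $(\lambda'-\lambda)\normone{u_{S^\complement}}\le(\lambda'+\lambda)\normone{u_S}$ and the monotonicity of $(\lambda'+\lambda)/(\lambda'-\lambda)$ explicitly.
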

\noindent Whereas this $\ell_\infty$-oracle inequality is valid under condition \eqref{eq:RE}, different conditions are needed to obtain oracle inequalities for other losses -- see \citet{vandeGeer09} for a discussion of different assumptions. In the $\ell_2$-loss case, for instance, an $\ell_2$-restricted eigenvalue condition is usually imposed, which is somewhat different (and less restrictive) than \eqref{eq:RE}. Moreover, in the prediction loss case considered above, no conditions on the design (in particular, no restricted eigenvalue conditions) are needed at all. Hence, condition~\eqref{eq:RE} is not an assumption imposed by our method, it is rather inflicted by the oracle inequality of the $\ell_\infty$-loss.

Let $\mathcal{B}_n$ be the event that $\Xmat$ satisfies the restricted eigenvalue condition \eqref{eq:RE} and note that $\pr(\mathcal{B}_n) \rightarrow 1$ for certain classes of random design matrices $\Xmat$ \citep{vdGeer14}. 
The oracle inequality of Lemma \ref{lemma:oracleinequality} can be rephrased as follows: 
on the event $\mathcal{T}_\lambda \cap \mathcal{B}_n$, it holds that $\normsup{\hat{\beta}_{\lambda^\prime} - \beta^*} \le \kappa \lambda^\prime$ for any $\lambda^\prime \ge (1+\delta) \lambda$. 
The oracle parameter $\lambda_\alpha^{\textnormal{oracle}}$ yields the following finite-sample guarantee: 
on the event $\mathcal{T}_{\lambda_\alpha^{\textnormal{oracle}}} \cap \mathcal{B}_n$, that is, with probability $\ge 1 - \alpha - P(\mathcal{B}_n^\complement)$, it holds that 
\begin{equation}\label{eq:guarantee:infty}
\normsup{\hat{\beta}_{(1+\delta) \lambda_{\alpha}^{\textnormal{oracle}}} - \beta^*} \le (1+\delta) \kappa \lambda_{\alpha}^{\textnormal{oracle}}.
\end{equation}
Theorem \ref{theo1} implies that we can approximately recover this finite-sample guarantee when replacing the oracle parameter $\lambda_\alpha^{\textnormal{oracle}}$ with the estimator $\hat{\lambda}_\alpha$.
\pagebreak
\begin{prop}\label{prop:tuning:2}
Let the conditions of Theorem \ref{theo1} be satisfied.
With probability $\ge 1 - \alpha - \pr(\mathcal{B}_n^\complement) - \nu_n - C_1 n^{-K_1} = 1 - \alpha - \pr(\mathcal{B}_n^\complement) + o(1)$, it holds that
\begin{equation*}
\normsup{\hat{\beta}_{(1+\delta) \hat{\lambda}_{\alpha}} - \beta^*} \le (1+\delta) \kappa \lambda_{\alpha-\nu_n}^{\textnormal{oracle}}.
\end{equation*}
\end{prop}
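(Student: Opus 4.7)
The plan is to mirror the argument behind Proposition \ref{prop:tuning:1}, but to replace the universal prediction bound \eqref{eq:predictionbound'} by the $\ell_\infty$-oracle inequality of Lemma \ref{lemma:oracleinequality}. Two ingredients are already in place: Theorem \ref{theo1} sandwiches $\hat{\lambda}_\alpha$ between $\lambda_{\alpha+\nu_n}^*$ and $\lambda_{\alpha-\nu_n}^*$ on $\mathcal{T}_{\lambda_{\alpha+\nu_n}^*} \cap \mathcal{A}_n$, while Lemma \ref{lemma:oracleinequality} converts $\ell_\infty$-control of $\Xmat^\top \eps/n$ into $\ell_\infty$-control of $\hat{\beta}_{\lambda^\prime} - \beta^*$ once the restricted eigenvalue event $\mathcal{B}_n$ is in force.

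First I would work on $\mathcal{E}_n := \mathcal{T}_{\lambda_{\alpha+\nu_n}^*} \cap \mathcal{A}_n \cap \mathcal{B}_n$. On $\mathcal{T}_{\lambda_{\alpha+\nu_n}^*} \cap \mathcal{A}_n$, Theorem \ref{theo1} gives $\lambda_{\alpha+\nu_n}^* \le \hat{\lambda}_\alpha \le \lambda_{\alpha-\nu_n}^*$, so in particular $(1+\delta)\hat{\lambda}_\alpha \ge (1+\delta)\lambda_{\alpha+\nu_n}^*$. Since we are simultaneously on $\mathcal{T}_{\lambda_{\alpha+\nu_n}^*} \cap \mathcal{B}_n$, Lemma \ref{lemma:oracleinequality} applied with $\lambda = \lambda_{\alpha+\nu_n}^*$ and $\lambda^\prime = (1+\delta)\hat{\lambda}_\alpha$ yields
\[
\normsup{\hat{\beta}_{(1+\delta)\hat{\lambda}_\alpha} - \beta^*} \le \kappa(1+\delta)\hat{\lambda}_\alpha.
\]
The upper bound $\hat{\lambda}_\alpha \le \lambda_{\alpha-\nu_n}^*$, together with the identity $\lambda_\gamma^* = \lambda_\gamma^{\textnormal{oracle}}$ recorded in Section \ref{subsec:tuning}, converts the right-hand side into $(1+\delta)\kappa \lambda_{\alpha-\nu_n}^{\textnormal{oracle}}$, which is exactly the claimed bound.

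Next I would bound $\pr(\mathcal{E}_n)$ from below by a union bound:
\[
\pr(\mathcal{E}_n^\complement) \le \pr\bigl(\mathcal{T}_{\lambda_{\alpha+\nu_n}^*}^\complement\bigr) + \pr(\mathcal{A}_n^\complement) + \pr(\mathcal{B}_n^\complement).
\]
By the definition of $\lambda_{\alpha+\nu_n}^*$ as the $(1-\alpha-\nu_n)$-quantile of the effective noise, $\pr(\mathcal{T}_{\lambda_{\alpha+\nu_n}^*}^\complement) \le \alpha + \nu_n$, and Theorem \ref{theo1} provides $\pr(\mathcal{A}_n^\complement) \le C_1 n^{-K_1}$. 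Summing the three terms delivers the probability $1 - \alpha - \pr(\mathcal{B}_n^\complement) - \nu_n - C_1 n^{-K_1}$ asserted in the proposition.

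I do not anticipate any genuine obstacle, since every piece is already assembled. The only step that requires a touch of care is the matching of the roles of $\lambda$ and $\lambda^\prime$ in Lemma \ref{lemma:oracleinequality}: the event $\mathcal{T}_{\hat{\lambda}_\alpha}$ is not directly available, so one must invoke the lemma with $\lambda = \lambda_{\alpha+\nu_n}^*$ (the parameter controlling the event $\mathcal{T}_\lambda$ we actually condition on) and then rely on $(1+\delta)\hat{\lambda}_\alpha \ge (1+\delta)\lambda_{\alpha+\nu_n}^*$ to certify that the choice $\lambda^\prime = (1+\delta)\hat{\lambda}_\alpha$ satisfies the hypothesis $\lambda^\prime \ge (1+\delta)\lambda$ of the lemma.
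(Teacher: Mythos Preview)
Your proposal is correct and follows essentially the same route as the paper's own proof: both work on the intersection $\mathcal{T}_{\lambda_{\alpha+\nu_n}^*} \cap \mathcal{A}_n \cap \mathcal{B}_n$, invoke Lemma \ref{lemma:oracleinequality} with $\lambda = \lambda_{\alpha+\nu_n}^*$ and $\lambda^\prime = (1+\delta)\hat{\lambda}_\alpha$, use Theorem \ref{theo1} to sandwich $\hat{\lambda}_\alpha$, and finish with the same union bound on the complement. Your explicit remark about why the lemma must be applied with $\lambda = \lambda_{\alpha+\nu_n}^*$ rather than $\lambda = \hat{\lambda}_\alpha$ is a nice clarification that the paper leaves implicit.
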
 
\noindent A proof of Proposition \ref{prop:tuning:2} can be found in the Appendix. 
It is important to note that the $\ell_\infty$-bound of Proposition \ref{prop:tuning:2} entails finite-sample guarantees for variable selection. 
Specifically, it implies that with probability $\ge 1 - \alpha - \pr(\mathcal{B}_n^\complement)+ o(1)$, the lasso estimator $\hat{\beta}_{(1+\delta)\hat{\lambda}_{\alpha}}$ recovers all non-zero components of $\beta^*$ that are larger in absolute value than $(1+\delta) \kappa \lambda_{\alpha-\nu_n}^{\textnormal{oracle}}$. 
From Lemma \ref{lemmaA3} and Proposition \ref{propA1} in the Appendix, it follows that $\lambda_{\alpha-\nu_n}^{\textnormal{oracle}} \le C \sqrt{\log(n \lor p)/n}$ with some sufficiently large constant $C$. 
Hence, with probability $\ge 1 - \alpha - \pr(\mathcal{B}_n^\complement) + o(1)$, the lasso estimator $\hat{\beta}_{(1+\delta)\hat{\lambda}_{\alpha}}$ in particular recovers all non-zero entries of $\beta^*$ that are of larger order than $O(\sqrt{\log(n \lor p)/n})$.

%\begin{remark}
%Recent studies refine the theoretical underpinning of the lasso's tuning parameter when predictors are correlated.
%For example, \citet{Dalalyan17} (see especially their proof of Theorem~4) show that the oracle tuning parameter can be based on the ``reduced'' effective noise
%${2\normsup{(\mathcal{P} \Xmat)^\top \eps}}/{n}$, where 
%$\mathcal{P} = \idmat - \Xmat_T (\Xmat_T^\top \Xmat_T)^{\dagger} \Xmat_T^\top$ is the projection on the columns of the design matrix that have indexes in an appropriately chosen set $T\subset\{1,\dots,p\}$ (the dagger ${}^\dagger$ denotes a pseudoinverse),
%rather than on the ``full'' effective noise ${2\normsup{\Xmat^\top \eps}}/{n}$,
%and they show that the reduced effective noise is considerably smaller than the full effective noise when the predictors are highly correlated.
%We do not go into details here to avoid digression,
%but we point out that the reduced effective noise can be estimated the same way (and with the same guarantees) as the full effective noise---see the following section.
%This observation highlights, once more, that our approach can cater to the specifics of the data at hand.
%\end{remark}

\subsection{Inference for the lasso}\label{subsec:inference}

Inference for the lasso is a notoriously difficult problem: the distribution of the lasso has a complicated limit and is hardly useful for statistical inference \citep{Knight00, Leeb05}. 
For this reason, inferential methods for the lasso are quite rare. 
Some exceptions are 
tests for the significance of small, fixed groups of para\-meters \citep{Belloni13,Zhang14,vdGeer14b,Javanmard14,Gold19}
and tests for the significance of parameters entering the lasso path \citep{Lockhart14}.
%rates for confidence balls for the entire parameter vector (and infeasibility thereof)~\citep{Nickl13, Cai18},
%and methods for inference after model selection~\citep{Belloni13,Tibshirani16}. 
In what follows, we show that our method enables us to construct tuning-parameter-free tests for certain hypotheses of interest.

We first consider  testing the null hypothesis $H_0: \beta^* = 0$ against the alternative $H_1: \beta^* \ne 0$, which was briefly discussed in the Introduction. 
Our test statistic of~$H_0$ is defined as 
\[ T = \frac{2\normsup{\Xmat^\top \Y}}{n}, \]
which implies that $T = 2 \normsup{\Xmat^\top \eps}/n$ under $H_0$, that is, $T$ is  the effective noise under $H_0$. 
This observation suggests to define a test of $H_0$ as follows: 
reject $H_0$ at the significance level $\alpha$ if $T > \hat{\lambda}_\alpha$, where $\hat{\lambda}_\alpha$ estimates the $(1-\alpha)$-quantile $\lambda_\alpha^*$ of $T$ under $H_0$. 
This test has the following theoretical properties.
\begin{prop}\label{prop:testing:1}
Let the conditions of Theorem \ref{theo1} be satisfied. 
Under the null hypothesis $H_0: \beta^* = 0$, 
it holds that
\[  \pr (T \le \hat{\lambda}_\alpha) \ge 1 - \alpha + o(1). \]
Moreover, under any alternative $\beta^* \ne 0$ that satisfies the condition $\pr(\normsup{\Xmat^\top\Xmat\beta^*}/n \linebreak \ge c \sqrt{\log(n \lor p)/n}) \rightarrow 1$ for every fixed $c > 0$, 
it holds that
\[ \pr ( T > \hat{\lambda}_\alpha) =  1 - o(1). \]
\end{prop}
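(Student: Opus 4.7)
The plan is to dispatch the two halves of Proposition \ref{prop:testing:1} separately, leaning on Theorem \ref{theo1} in both cases.

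For the size statement, note that under $H_0$ the test statistic reduces to the effective noise, $T = (2/n)\normsup{\Xmat^\top \eps}$, so that the event $\mathcal{T}_{\lambda_{\alpha+\nu_n}^*}$ coincides with $\{T \le \lambda_{\alpha+\nu_n}^*\}$ and has probability at least $1 - (\alpha + \nu_n)$ by the definition of the quantile. Theorem \ref{theo1} then tells us that on $\mathcal{T}_{\lambda_{\alpha+\nu_n}^*} \cap \mathcal{A}_n$ we have $\hat{\lambda}_\alpha \ge \lambda_{\alpha+\nu_n}^*$, and hence $T \le \lambda_{\alpha+\nu_n}^* \le \hat{\lambda}_\alpha$. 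A union bound combined with $\pr(\mathcal{A}_n) \ge 1 - C_1 n^{-K_1}$ yields $\pr(T \le \hat{\lambda}_\alpha) \ge 1 - \alpha - \nu_n - C_1 n^{-K_1} = 1 - \alpha + o(1)$.

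For the consistency claim, start from the decomposition $\Xmat^\top \Y = \Xmat^\top\Xmat\beta^* + \Xmat^\top\eps$ and the reverse triangle inequality
\[ T \ge \frac{2}{n} \normsup{\Xmat^\top\Xmat\beta^*} - \frac{2}{n} \normsup{\Xmat^\top \eps}. \]
The alternative ensures that for any fixed $c > 0$, the signal term is at least $2c\sqrt{\log(n\lor p)/n}$ with probability tending to one, while the noise term is at most $C\sqrt{\log(n\lor p)/n}$ with probability $1-o(1)$ via the effective-noise concentration bound underlying Proposition \ref{propA1}. It therefore suffices to show that $\hat{\lambda}_\alpha \le C'\sqrt{\log(n\lor p)/n}$ with probability $\to 1$: picking a fixed $c$ larger than $(C+C')/2$ in the alternative condition then forces $T > \hat{\lambda}_\alpha$.

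The subtle point is exactly this bound on $\hat{\lambda}_\alpha$, because Theorem \ref{theo1} gives only $\hat{\lambda}_\alpha \le \lambda_{\alpha-\nu_n}^*$ with probability $\ge 1 - \alpha - o(1)$, which falls short of $1-o(1)$. The remedy is monotonicity: since $\hat{q}_\alpha(\lambda)$ is non-increasing in $\alpha$, so is $\hat{\lambda}_\alpha$, and hence $\hat{\lambda}_\alpha \le \hat{\lambda}_{\alpha_n}$ for any $\alpha_n \le \alpha$. Choose a slowly vanishing sequence $\alpha_n \downarrow 0$ with $\alpha_n \in (a_n, 1 - a_n)$ (e.g.\ $\alpha_n = 3a_n$), noting that $\alpha_n$ decays at most polynomially in $n$. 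Applying Theorem \ref{theo1} at level $\alpha_n$ gives $\hat{\lambda}_\alpha \le \hat{\lambda}_{\alpha_n} \le \lambda_{\alpha_n - \nu_n}^*$ on an event of probability $\ge 1 - \alpha_n - C_1 n^{-K_1} = 1 - o(1)$, and Proposition \ref{propA1} together with Lemma \ref{lemmaA3} bounds $\lambda_{\alpha_n - \nu_n}^* \le C'\sqrt{\log(n\lor p)/n}$, since the polynomial dependence of $\alpha_n^{-1}$ on $n$ contributes only a factor $\log(1/\alpha_n) = O(\log n)$ that is absorbed into the $\log(n \lor p)$ term. Assembling the three high-probability ingredients through a final union bound closes the argument. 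The main obstacle is precisely this upgrade from a $1-\alpha$ bound to a $1-o(1)$ bound on $\hat{\lambda}_\alpha$; once the monotonicity-in-$\alpha$ trick is in hand, everything else is bookkeeping.
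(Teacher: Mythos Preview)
Your proposal is correct and follows essentially the same route as the paper's proof: the size part is handled exactly as you describe, and for the power part the paper likewise chooses a null sequence $\alpha_n$ with $a_n < \alpha_n < \alpha$, invokes the monotonicity $\hat{\lambda}_\alpha \le \hat{\lambda}_{\alpha_n}$, applies Theorem~\ref{theo1} at level $\alpha_n$, and then bounds $\lambda_{\alpha_n-\nu_n}^*$ via Proposition~\ref{propA1} and Lemma~\ref{lemmaA3}. One small simplification: the bound in Lemma~\ref{lemmaA3} is actually uniform in $\alpha$ (it only requires $\alpha > 1/(n\lor p)$), so no $\log(1/\alpha_n)$ factor ever appears and your extra justification there is unnecessary.
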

\noindent The proof is deferred to the Appendix.
Proposition \ref{prop:testing:1} ensures that the proposed test is  of level $\alpha$ asymptotically and has asymptotic power $1$ against any alternative $\beta^* \ne 0$ that satisfies the condition $\pr(\normsup{\Xmat^\top\Xmat\beta^*}/n \ge c \sqrt{\log(n \lor p)/n}) \linebreak \rightarrow 1$ for every $c > 0$. 
Such a condition is inevitable:
in the model $\Y = \Xmat \beta^* + \eps$, it is not possible to distinguish between vectors $\beta^* \ne 0$ that satisfy $\Xmat \beta^* = 0$ and the null vector. 
Hence, a test can only have power against alternatives $\beta^* \ne 0$ that satisfy $\Xmat \beta^* \ne 0$, that is, against alternatives $\beta^* \ne 0$ that do not lie in the kernel $\text{Ker}(\Xmat)=\text{Ker}(\Xmat^\top \Xmat/n)$ of the linear mapping $\Xmat$. 
By imposing the condition $\pr(\normsup{\Xmat^\top\Xmat\beta^*}/n \ge c \sqrt{\log(n \lor p)/n}) \rightarrow 1$, we restrict attention to alternatives $\beta^* \ne 0$ that have enough signal outside the kernel of $\Xmat$.

We now generalize the discussed test procedure in a way that allows to handle more complex hypotheses. 
Specifically, we generalize it such that a low-dimensional linear model can be tested against a high-dimensional alternative. 
To do so, we partition the design matrix $\Xmat$ into two parts according to $\Xmat = (\Xmat_A, \Xmat_B)$, where $A \, \dot{\cup} \, B = \{1,\ldots,p\}$, 
$\Xmat_A$ is the part of the design matrix that contains the observations on the regressors in the set $A$, 
and $\Xmat_B$ contains the observations on the regressors in the set $B$. 
We also partition the parameter vector $\beta^*$ accordingly into two parts $\beta_A^* \in \R^{|A|}$ and $\beta_B^* \in \R^{|B|}$ such that $\beta^* = ( (\beta_A^*)^\top, (\beta_B^*)^\top )^\top$. 
The linear model \eqref{model} can then be written as
\begin{equation}\label{eq:partitionedmodel}
\Y = \Xmat_A \beta_A^* + \Xmat_B \beta_B^* + \eps. 
\end{equation} 
In practice, regression is often based on simple, low-dimensional models of the form $\Y = \Xmat_A \beta_A^* + w$, where $w$ is the error term, and the number of regressors $|A|$ is small. 
Quite frequently, however, the question arises whether important explanatory variables are missing from these simple models. 
This question can formally be checked by a statistical test of the low-dimensional model $\Y = \Xmat_A \beta_A^* + w$ against a high-dimensional alternative of the form \eqref{eq:partitionedmodel} that contains a large number $|B|$ of controls. 
More precisely speaking, a test of the null hypothesis $H_{0,B}: \beta_B^* = 0$ against the alternative $H_{1,B}: \beta_B^* \ne 0$ is required. 
Note that setting $A = \emptyset$ and $B = \{1,\ldots,p\}$ nests the previously discussed problem of testing $H_0$ against $H_1$ as a special case.

We construct a test of $H_{0,B}$ as follows: 
let $\mathcal{P} = \idmat - \Xmat_A (\Xmat_A^\top \Xmat_A)^{-1} \Xmat_A^\top$ be the projection matrix onto the orthogonal complement of the column space of $\Xmat_A$. 
Applying $\mathcal{P}$ to both sides of the model equation \eqref{eq:partitionedmodel} gives 
\begin{equation}\label{eq:projectedmodel}
\mathcal{P} \Y = \mathcal{P} \Xmat_B \beta_B^* + \uvec
\end{equation} 
with $\uvec = \mathcal{P} \eps$, which is itself a high-dimensional linear model with response $\mathcal{P} \Y$ and design matrix $\mathcal{P} \Xmat_B$. 
In order to test whether the parameter vector $\beta_B^*$ in model~\eqref{eq:projectedmodel} is equal to $0$, we use the same strategy as for the simpler problem of testing $H_0$:
our test statistic is given by 
\[ T_B = \frac{2\normsup{(\mathcal{P} \Xmat_B)^\top \mathcal{P} \Y}}{n}, \]
which implies that $T_B = 2\normsup{(\mathcal{P} \Xmat_B)^\top \uvec}  / n$ under $H_{0,B}$. 
The quantiles of the statistic $2 \normsup{(\mathcal{P} \Xmat_B)^\top \uvec}  / n$ can be approximated by our method developed in Section \ref{sec:estimateeffnoise}: 
define the criterion function
\[ \criterion_B(\lambda, e) = \max_{j \in B} \Big| \frac{2}{n} \sum_{i=1}^n (\mathcal{P} \Xmat_B)_{ij} \, \hat{\uvec}_{\lambda, i} \, e_i \Big|, \]
where $(\mathcal{P} \Xmat_B)_{ij}$ is the $(i,j)$-th element of the matrix $\mathcal{P} \Xmat_B$, $\hat{\uvec}_\lambda = \mathcal{P} \Y - \mathcal{P} \Xmat_B \hat{\beta}_{B,\lambda}$ is the residual vector which results from fitting the lasso with tuning parameter~$\lambda$ to the model \eqref{eq:projectedmodel}, and $e = (e_1,\ldots,e_n)^\top$ is a standard normal random vector independent of the data $(\Xmat,\Y)$. 
Moreover, let $\hat{q}_{\alpha,B}(\lambda)$ be the $(1-\alpha)$-quantile of $\criterion_B(\lambda,e)$ conditionally on $(\Xmat,\Y)$. 
As described in Section \ref{sec:estimateeffnoise}, we estimate the $(1-\alpha)$-quantile $\lambda_{\alpha,B}^*$ of $ 2 \normsup{(\mathcal{P} \Xmat_B)^\top \uvec}/n$ by 
\begin{equation*}
\hat{\lambda}_{\alpha,B} = \inf \big\{ \lambda > 0: \hat{q}_{\alpha,B}(\lambda^\prime) \le \lambda^\prime \text{ for all } \lambda^\prime \ge \lambda \big\}.
\end{equation*}
Our test of the hypothesis $H_{0,B}$ is now carried out as follows: 
reject $H_{0,B}$ at the significance level $\alpha$ if $T_B > \hat{\lambda}_{\alpha,B}$.

To derive the formal properties of the test, we define $\vartheta^{(j)} = \argmin_{\vartheta \in \R^{|A|}} \ex[ (X_{ij} - X_{i,A}^\top \vartheta)^2 ]$ with $X_{i,A} = (X_{ij}: j \in A)$. 
Put differently, we define $X_{i,A}^\top \vartheta^{(j)}$ to be the $L_2$-projection of $X_{ij}$ onto the linear subspace spanned by the elements of $X_{i,A}$. 
We assume that  $\min_{j \in B} \ex[ (X_{ij} - X_{i,A}^\top \vartheta^{(j)})^2 ] \ge c_\vartheta > 0$ for some constant $c_\vartheta$. 
Such an assumption is to be expected:
it essentially says  that the random variables $X_{ij}$ with $j \in B$ cannot be represented by a linear combination of the random variables $X_{ij}$ with $j \in A$.
The assumption is also mild;
 in particular, it is much weaker than irrepresentable-type conditions that are usually imposed in the context of variable selection for the lasso~\citep{vandeGeer09}. 
We can now summarize the formal properties of the test. 
\pagebreak
\begin{prop}\label{prop:testing:2}
Let the conditions of Theorem \ref{theo1} be satisfied, suppose for simplicity that the random variables $(\Xvec_i,\eps_i)$ are identically distributed across $i$, and let $|A|$ be a fixed number that does not grow with the sample size $n$. 
In addition, assume that the $|A| \times |A|$ matrix $\Psi_A = ( \ex[X_{ij} X_{ik}]: j, k \in A )$ is positive definite and that $\min_{j \in B} \ex[ (X_{ij} - X_{i,A}^\top \vartheta^{(j)})^2 ] \ge c_\vartheta > 0$. 
Under the null hypothesis $H_{0,B}: \beta_B^* = 0$, 
it holds that
\[  \pr (T_B \le \hat{\lambda}_{\alpha,B}) \ge 1 - \alpha + o(1). \]
Moreover, under any alternative $\beta_B^* \ne 0$ with the property that $\pr(\normsup{(\mathcal{P}\Xmat_B)^\top(\mathcal{P}\Xmat_B) \linebreak \beta_B^*}/n \ge c \sqrt{\log(n \lor p)/n}) \rightarrow 1$ for every $c > 0$, 
it holds that
\[ \pr ( T_B > \hat{\lambda}_{\alpha,B}) = 1 - o(1). \]
\end{prop}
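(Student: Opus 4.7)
The plan is to reduce Proposition \ref{prop:testing:2} to an application of Theorem \ref{theo1} to the projected linear model \eqref{eq:projectedmodel}. Writing $\tilde{\Xmat} = \mathcal{P}\Xmat_B$, $\tilde{\Y} = \mathcal{P}\Y$ and $\tilde{\eps} = \uvec = \mathcal{P}\eps$, equation \eqref{eq:projectedmodel} is itself a high-dimensional linear model $\tilde{\Y} = \tilde{\Xmat} \beta_B^* + \tilde{\eps}$. Under $H_{0,B}$ the test statistic $T_B$ coincides with the effective noise $2\normsup{\tilde{\Xmat}^\top \tilde{\eps}}/n$ of this model, and $\hat{\lambda}_{\alpha,B}$ is constructed exactly as in Section \ref{sec:estimateeffnoise} but applied to $(\tilde{\Xmat}, \tilde{\Y})$. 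If an analogue of Theorem \ref{theo1} could be established for this projected model, then on an event of probability at least $1 - \alpha - Cn^{-K}$ (under $H_{0,B}$) one would have
\[ \lambda_{\alpha+\nu_n,B}^* \le \hat{\lambda}_{\alpha,B} \le \lambda_{\alpha-\nu_n,B}^*, \]
from which both conclusions follow.

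The main obstacle is that conditions \ref{C1}--\ref{C3} do not transfer verbatim: the rows of $\tilde{\Xmat}$ and the entries of $\tilde{\eps}$ are not independent because $\mathcal{P}$ couples all observations through $(\Xmat_A^\top\Xmat_A)^{-1}$. I would control this by exploiting that $|A|$ is fixed and $\Psi_A$ is positive definite. Standard concentration gives $(\Xmat_A^\top\Xmat_A/n)^{-1} = \Psi_A^{-1} + O_p(n^{-1/2})$, so uniformly in $i$ and $j$,
\[ \uvec_i = \eps_i - \Xvec_{i,A}^\top \Psi_A^{-1} \Big( \frac{1}{n}\sum_{k=1}^n \Xvec_{k,A}\eps_k \Big) + O_p(n^{-1/2}), \qquad (\tilde{\Xmat})_{ij} = \Xvec_{ij} - \Xvec_{i,A}^\top \vartheta^{(j)} + O_p(n^{-1/2}). \]
The leading terms $\eps_i$ and $\Xvec_{ij} - \Xvec_{i,A}^\top \vartheta^{(j)}$ are independent across $i$, are bounded (resp.\ satisfy \ref{C3}), and by the proposition's hypothesis $\min_{j \in B} \ex[(\Xvec_{ij} - \Xvec_{i,A}^\top \vartheta^{(j)})^2] \ge c_\vartheta$, which supplies the lower-variance part of \ref{C2}. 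Condition \ref{C5} carries over to $\beta_B^*$ unchanged. Hence one can either re-run the proof of Theorem \ref{theo1} on the ``idealized'' projected model (with $\vartheta^{(j)}$ in place of its estimator) and then absorb the uniform $O_p(n^{-1/2})$ perturbations into the $\nu_n$ of the conclusion, or track the perturbations directly through the constants $C_1, K_1, C_2, K_2$ of Theorem \ref{theo1}. This step is the hardest part and is essentially a bookkeeping exercise on top of the proof of Theorem \ref{theo1}.

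With the two-sided bound on $\hat{\lambda}_{\alpha,B}$ in hand, the level statement is immediate: under $H_{0,B}$, $T_B$ is the effective noise of the projected model, so $\pr(T_B \le \lambda_{\alpha+\nu_n,B}^*) \ge 1 - \alpha - \nu_n$, and combining with $\hat{\lambda}_{\alpha,B} \ge \lambda_{\alpha+\nu_n,B}^*$ gives $\pr(T_B \le \hat{\lambda}_{\alpha,B}) \ge 1 - \alpha + o(1)$. For the power, I would decompose
\[ T_B = \frac{2\normsup{\tilde{\Xmat}^\top \tilde{\Y}}}{n} \ge \frac{2\normsup{\tilde{\Xmat}^\top \tilde{\Xmat}\beta_B^*}}{n} - \frac{2\normsup{\tilde{\Xmat}^\top \tilde{\eps}}}{n}, \]
and invoke the analogue of Lemma \ref{lemmaA3} and Proposition \ref{propA1} for the projected model to conclude $\hat{\lambda}_{\alpha,B} \le \lambda_{\alpha-\nu_n,B}^* \le C'\sqrt{\log(n \lor p)/n}$ with high probability, as well as $2\normsup{\tilde{\Xmat}^\top \tilde{\eps}}/n = O_p(\sqrt{\log(n \lor p)/n})$. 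The alternative's signal condition, which makes $2\normsup{\tilde{\Xmat}^\top \tilde{\Xmat}\beta_B^*}/n \ge c\sqrt{\log(n \lor p)/n}$ with probability tending to one for every fixed $c > 0$, then dominates both terms once $c$ is chosen large enough, yielding $\pr(T_B > \hat{\lambda}_{\alpha,B}) \to 1$.
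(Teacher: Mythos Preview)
Your overall strategy matches the paper's: you correctly identify that the projected model violates \ref{C1} because $\mathcal{P}$ couples all observations, and you propose to pass to the idealized variables $Z_{ij} = X_{ij} - X_{i,A}^\top\vartheta^{(j)}$ and $\eps_i$, which are independent across $i$, treating the differences $\hat{Z}_{ij}-Z_{ij}$ and $u_i-\eps_i$ as perturbations. The paper formalizes exactly this in its Proposition~A.3 (proved in the Supplement), decomposing the remainder into three pieces corresponding to the lasso prediction error, the design perturbation $\hat{Z}_{ij}-Z_{ij}$, and the noise perturbation $\eps_i-u_i$. Your level argument under $H_{0,B}$ is correct and identical to the paper's. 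One minor imprecision: your $O_p(n^{-1/2})$ perturbation bounds need an extra $\sqrt{\log(n\vee p)}$ factor to hold uniformly over $j\in B$, but this does not affect the argument.

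There is, however, a genuine gap in your power argument. The upper bound $\hat{\lambda}_{\alpha,B}\le\lambda_{\alpha-\nu_n,B}^*$ that you invoke from the analogue of Theorem~\ref{theo1} holds only on the event $\mathcal{T}_{\lambda_{\alpha+\nu_n,B}^*}\cap\mathcal{A}_n'$, which has probability at least $1-\alpha-o(1)$, \emph{not} $1-o(1)$. Plugging this in as you do yields only $\pr(T_B>\hat{\lambda}_{\alpha,B})\ge 1-\alpha-o(1)$, which is strictly weaker than the claimed $1-o(1)$. The paper closes this gap (already in the proof of Proposition~\ref{prop:testing:1}, then reused for Proposition~\ref{prop:testing:2}) by introducing a null sequence $\alpha_n\downarrow 0$ with $\alpha_n>2\nu_n'+(n\vee p)^{-1}$, using the monotonicity $\hat{\lambda}_{\alpha,B}\le\hat{\lambda}_{\alpha_n,B}$, and applying the Theorem~\ref{theo1} analogue at level $\alpha_n$: the resulting bound $\hat{\lambda}_{\alpha_n,B}\le\psi_n'$ then holds with probability $\ge 1-\alpha_n-o(1)=1-o(1)$, which is what the power statement requires.
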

\noindent This result shows that the proposed procedure is an asymptotic level-$\alpha$-test that has asymptotic power $1$ against any alternative $\beta^*_B \ne 0$ with the property that $\pr(\normsup{(\mathcal{P}\Xmat_B)^\top(\mathcal{P}\Xmat_B) \beta_B^*}/n \ge c \sqrt{\log(n \lor p)/n}) \rightarrow 1$ for any $c > 0$. 
The latter condition parallels the one in Proposition \ref{prop:testing:1}. 
The proof of Proposition \ref{prop:testing:2} is provided in the Appendix.

\section{Simulations}\label{sec:sim}

In this section, we corroborate our results through Monte Carlo experiments.
We simulate data from the linear regression model~\eqref{model} with sample size $n=500$ and dimension~$p \in \{250,500,1000\}$. 
The covariate vectors $\Xvec_i = (\Xvec_{i1},\ldots,\Xvec_{ip})^\top$ are independently sampled from a $p$-dimensional normal distribution with mean~$0$ and covariance matrix $(1-\kappa) \idmat + \kappa \boldsymbol{E}$, where $\idmat$ is the $p \times p$ identity matrix, $\boldsymbol{E} = (1,\ldots,1)^\top (1,\ldots,1) \in \R^{p \times p}$, and $\kappa\in[0,1)$ is the correlation between the entries of the covariate vector $\Xvec_i$. 
We show the simulation results for $\kappa=0.25$ unless indicated differently,
but we obtained similar results for other values of $\kappa$ as well.
The noise variables $\eps_i$ are drawn i.i.d.~from a normal distribution with mean $0$ and variance $\sigma^2 = 1$. 
The target vector $\beta^*$ has the form $\beta^* = (c,\ldots,c, 0,\ldots,0)^\top$, where the first $5$ entries are set to $c$ and the remaining ones to $0$. 
The value of $c$ is chosen such that one obtains a prespecified value for the signal-to-noise ratio $\text{SNR} = \sqrt{\normtwos{\Xmat\beta^*}/n} / \sigma = \sqrt{\normtwos{\Xmat\beta^*}/n}$.
We set $\text{SNR} = 1$ except when we analyze the  hypothesis tests from Section \ref{subsec:inference}: 
there, we consider the value $\text{SNR} = 0$, which corresponds to the null hypothesis, and the values $\text{SNR} \in \{0.1, 0.2\}$, which correspond to two different alternatives.
We implement our estimation method with $L=100$ bootstrap replicates, which seems sufficient across a wide variety of settings.
%Increasing the number of replicates appears to have no substantial effect on the performance of our method, at least not in the simulation scenarios considered here. 
The lasso paths are computed through \texttt{glmnet} \citep{glmnet} version $2.2.1$ with an equidistant grid of $\lambda$-values and $M=100$,
that is, $\lambda\in\{1\cdot 2\|\boldsymbol{X}^\top Y\|_\infty/(100n),2\cdot 2\|\boldsymbol{X}^\top Y\|_\infty/(100n),\dots\}$.
All Monte Carlo experiments are based on $N=1000$ simulation runs. 
The implementations are in \texttt{R} version $3.5.1$. 
%The code and all simulations can be found on \href{https://github.com/LedererLab/}{github.com/LedererLab/TBD}.

\subsection{Approximation quality}\label{subsec:sim:1}

\begin{figure}[t!]
\includegraphics[width=\textwidth]{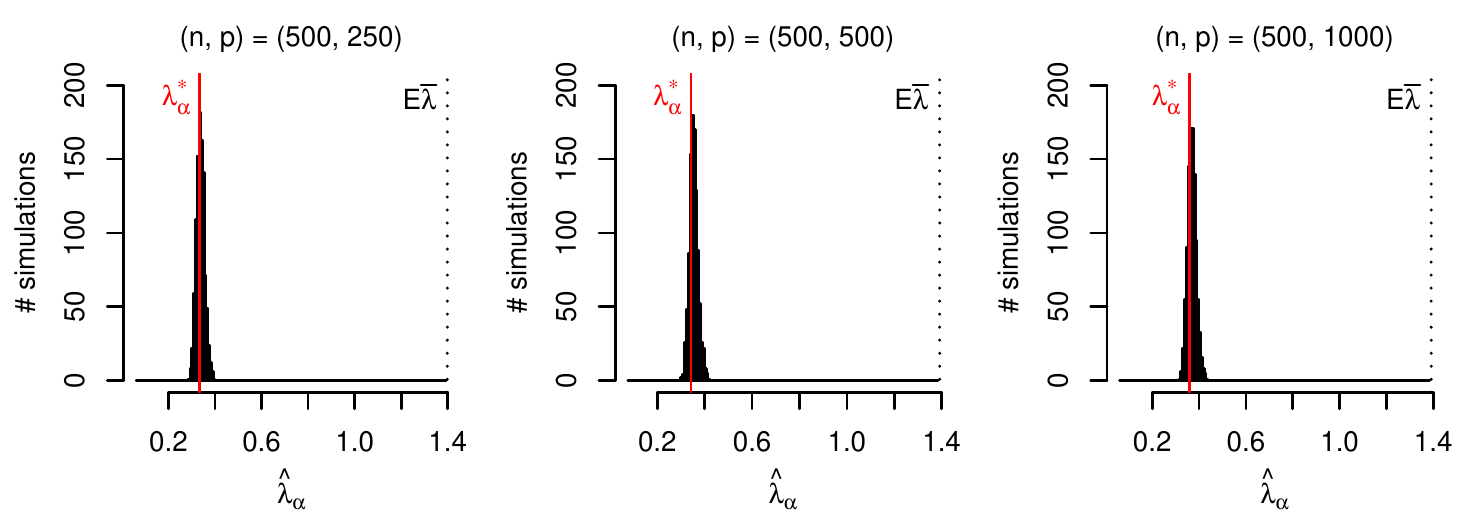}
\caption{Histograms of the estimates $\hat{\lambda}_\alpha$ for different values of $n$ and $p$. 
The red vertical lines indicate the values of the oracle parameter $\lambda_\alpha^*$; 
the dotted vertical lines give the values of $\ex[ \overline{\lambda} ]$, where $\overline{\lambda} = 2 \normsup{\Xmat^\top \Y}/n$ is the smallest $\lambda$ for which $\hat{\beta}_\lambda = 0$.}\label{fig:sim:histlambda}

\vspace{-0.45cm}
\end{figure}

We first examine how well the estimator $\hat{\lambda}_\alpha$ approximates the quantile $\lambda_\alpha^*$. 
Figure~\ref{fig:sim:histlambda} contains histograms of the $N=1000$ estimates of $\hat{\lambda}_\alpha$ for $\alpha=0.05$ and different values of $n$ and $p$. 
The red vertical line in each plot indicates the value of the quantile~$\lambda_\alpha^*$, which is computed by simulating $1000$ values of the effective noise $2 \normsup{\Xmat^\top \eps}/n$ and then taking their empirical $(1-\alpha)$-quantile. 
The $x$-axis covers the interval~$[0, \ex\overline{\lambda}]$ in each plot, where $\overline{\lambda} = 2 \normsup{\Xmat^\top \Y}/n$ is the smallest tuning parameter for which the lasso estimator is constantly equal to zero. 
This range is motivated as follows:
varying the tuning parameter $\lambda$ in the interval $[0,\overline{\lambda}]$ produces all possible lasso solutions.
It is thus natural to measure the approximation quality of $\hat{\lambda}_\alpha$ by the deviation $|\hat{\lambda}_\alpha - \lambda_\alpha^*|$ relative to the length of the interval $[0,\overline{\lambda}]$ rather than by the absolute deviation $|\hat{\lambda}_\alpha - \lambda_\alpha^*|$. 
This, in turn, suggests that the right scale to plot histograms of the estimates $\hat{\lambda}_\alpha$ is the interval $[0,\overline{\lambda}]$. 
Since this interval is stochastic, we let the $x$-axis of our plots span the interval $[0,\ex \overline{\lambda}]$ instead. 
The histogram plots of Figure~\ref{fig:sim:histlambda} can be regarded as an empirical illustration of the deviation inequalities derived after Theorem \ref{theo1}. They demonstrate that the estimates $\hat{\lambda}_\alpha$ approximate the oracle quantile $\lambda_\alpha^*$ accurately. 
%According to Figure~\ref{fig:sim:histlambda}, the estimates $\hat{\lambda}_\alpha$ approximate the oracle quantile $\lambda_\alpha^*$ accurately. 

\subsection{Tuning parameter calibration}\label{subsec:sim:2}

\begin{figure}[t!]
\centering
\begin{subfigure}[b]{\textwidth}
\centering
\includegraphics[width=\textwidth]{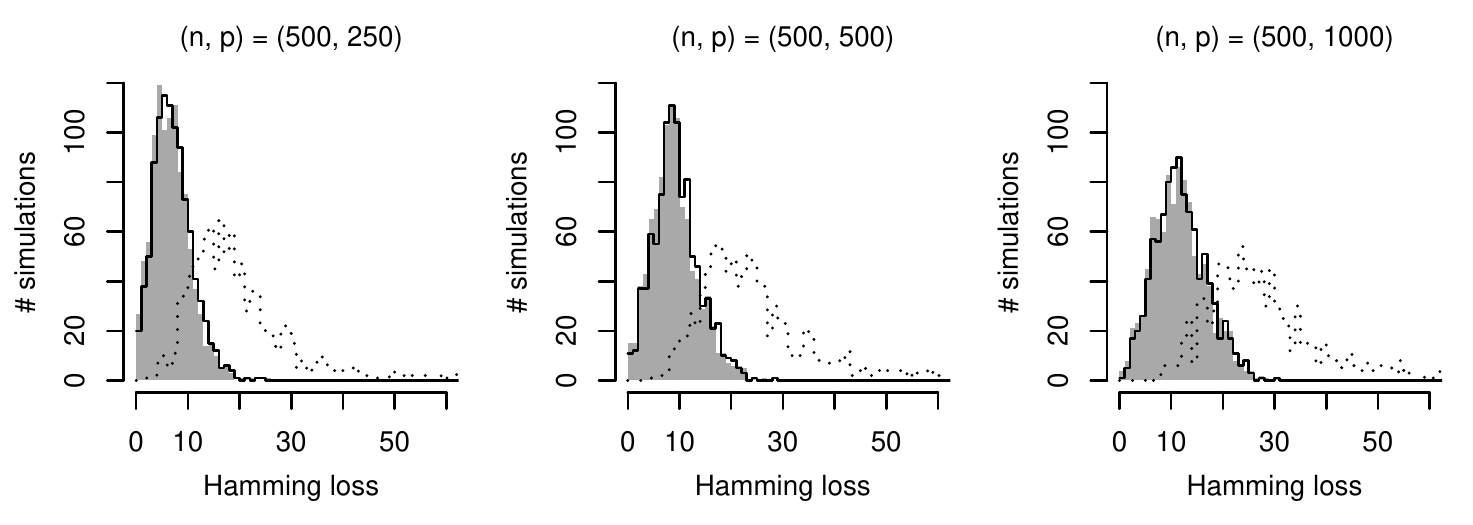}
\caption{Hamming distances for $\kappa = 0.25$}\label{subfig:a:sim:Hamming}
\end{subfigure} 
\vspace{-0.1cm}

\begin{subfigure}[b]{\textwidth}
\centering
\includegraphics[width=\textwidth]{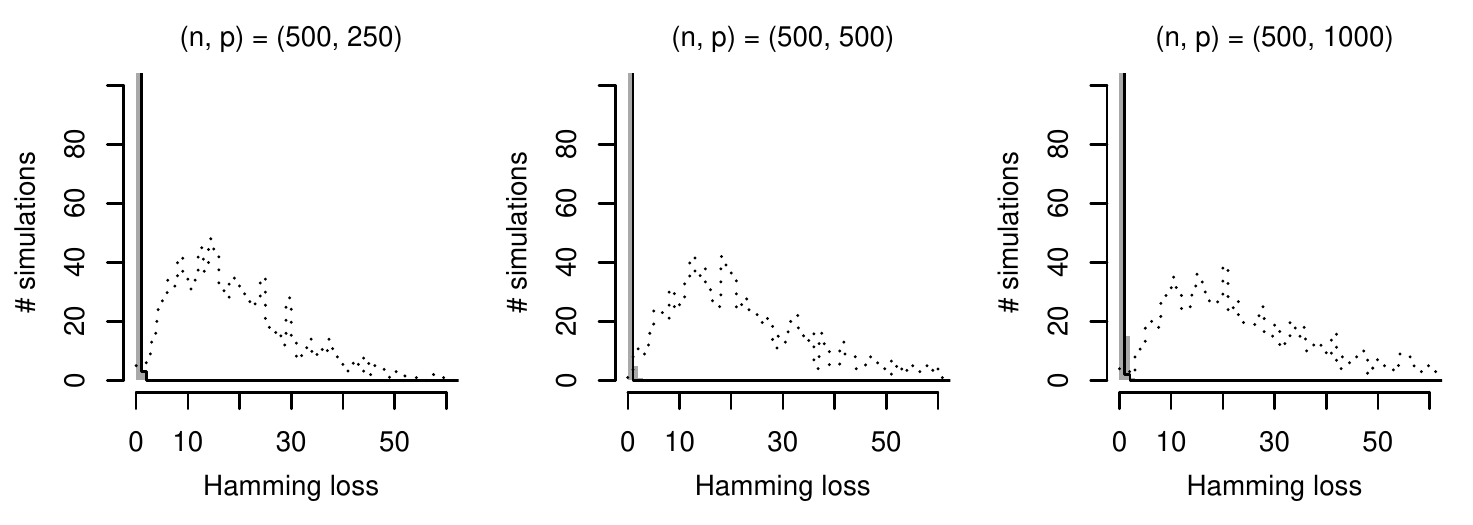}
\caption{Hamming distances for $\kappa = 0$}\label{subfig:b:sim:Hamming}
\end{subfigure} 
\caption{Histograms of the Hamming distances produced by the estimators $\hat{\beta}$, $\hat{\beta}_{\text{oracle}}$, and~$\hat{\beta}_{\text{CV}}$. 
The solid black lines indicate the histograms of~$\Delta_H(\hat{\beta},\beta^*)$, 
the gray-shaded areas indicate  the histograms of~$\Delta_H(\hat{\beta}_{\text{oracle}},\beta^*)$, 
and the dotted lines indicate the histograms of $\Delta_H(\hat{\beta}_{\text{CV}},\beta^*)$. 
The histograms of our estimator $\hat{\beta}$ and the oracle $\hat{\beta}_{\text{oracle}}$ in Subfigure~(b) essentially consist of only one bin at the value $0$ that goes up to almost $1000$ (which is the total number of simulation runs); to make the histograms of the cross-validated estimator visible, we cut the $y$-axis of the plots in Subfigure~(b) at the value $100$.
}\label{fig:sim:Hamming}

%\vspace{-0.45cm}
\end{figure}

We next investigate the performance of our method for calibrating the tuning parameter of the lasso. 
Our estimator of $\beta^*$ is defined as $\hat{\beta} := \hat{\beta}_{\hat{\lambda}_{\alpha}}$, 
where we use the estimator $\hat{\lambda}_\alpha$ with $\alpha = 0.05$ as the tuning parameter.
Our main interest is a comparison between~$\hat{\beta}$ and the oracle estimator $\hat{\beta}_{\text{oracle}} := \hat{\beta}_{\lambda_{\alpha}^*}$, 
which is tuned with the oracle parameter $\lambda_\alpha^*$ rather than its estimate~$\hat{\lambda}_\alpha$.
This comparison allows us to investigate whether~$\hat{\beta}$  is as accurate as suggested by our theory.
To highlight the practical performance of our estimator further,
we also compare~$\hat{\beta}$ to the lasso estimator $\hat{\beta}_{\text{CV}} := \hat{\beta}_{\hat{\lambda}_{\text{CV}}}$, 
where $\hat{\lambda}_{\text{CV}}$ is the tuning parameter chosen by $10$-fold cross-validation (which is performed on the same grid of $\lambda$-values as our method).
Of course, there are many other tuning parameter calibration schemes besides cross-validation,
but a comprehensive comparison of all calibration schemes is beyond the scope of this paper,
and, therefore, we focus on the arguably most popular representative.

\begin{figure}[t!]
\includegraphics[width=\textwidth]{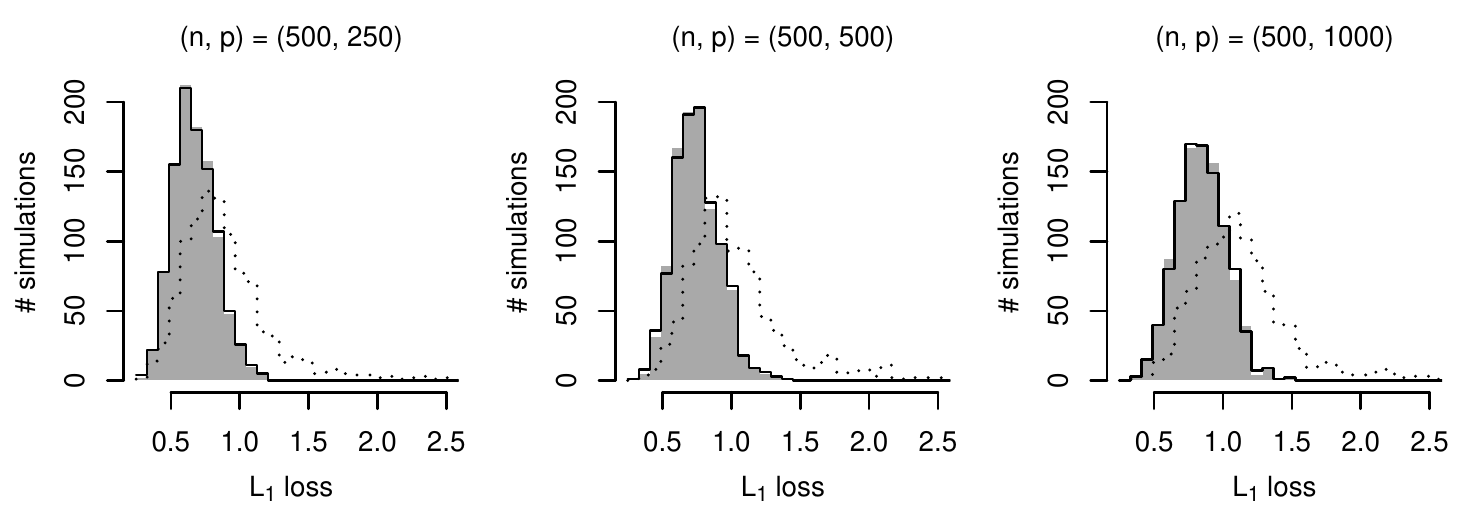}
\caption{Histograms of the $\ell_1$-loss produced by the estimators $\hat{\beta}$, $\hat{\beta}_{\text{oracle}}$, and $\hat{\beta}_{\text{CV}}$. 
The format of the plots is the same as in Figure \ref{fig:sim:Hamming}.}\label{fig:sim:L1}
\vspace{0.4cm}

\includegraphics[width=\textwidth]{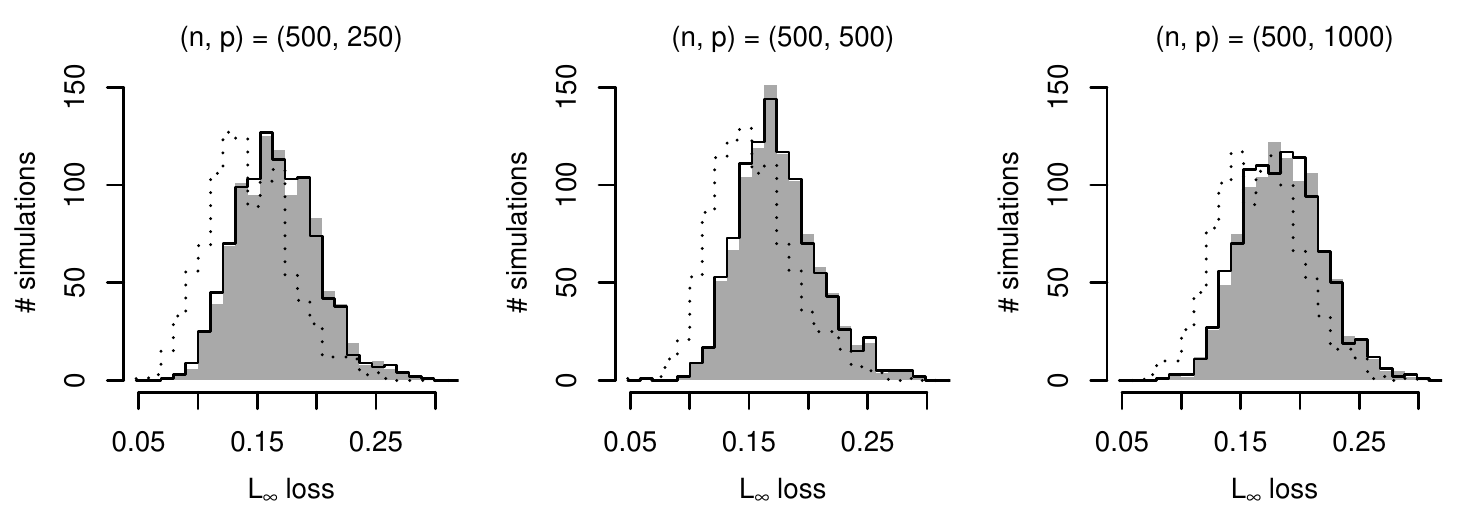}
\caption{Histograms of the $\ell_\infty$-loss produced by the estimators $\hat{\beta}$, $\hat{\beta}_{\text{oracle}}$, and $\hat{\beta}_{\text{CV}}$. 
The format of the plots is the same as in Figure \ref{fig:sim:Hamming}.}\label{fig:sim:sup}

\vspace{-0.75cm}
\end{figure}

We use four error measures to compare vectors~$\beta \in \Rp$ to~$\beta^*$: 
the Hamming distance $\Delta_H(\beta,\beta^*) = \sum_{j=1}^p | \ind (\beta_j = 0) - \ind (\beta_j^* = 0) |$,
the $\ell_1$-distance $\Delta_1(\beta, \beta^*) = \normone{\beta - \beta^*}$,
the $\ell_\infty$-distance $\Delta_\infty(\beta, \beta^*) = \normsup{\beta - \beta^*}$, and
the prediction error $\Delta_{\text{pr}}(\beta, \beta^*) = \normtwos{\Xmat(\beta - \beta^*)}/n$.
The Hamming distance allows us to investigate the variable selection properties of the estimators $\hat{\beta}$, $\hat{\beta}_{\text{oracle}}$ and $\hat{\beta}_{\text{CV}}$: 
the quantity $\Delta_H(\beta,\beta^*)$ counts the number of false-negative and false-positive entries in the vector $\beta$, 
where the entry $j$ is defined to be a false negative if $\beta_j^* \ne 0$ but $\beta_j = 0$ and a false positive if $\beta_j^* = 0$ but $\beta_j \ne 0$. 
The $\ell_p$-loss with $\ell \in \{1,\infty\}$ and the mean-squared prediction error $\Delta_{\text{pr}}$, on the other hand,
allow us to investigate the estimators' estimation and prediction properties, respectively.

\begin{figure}[t!]
\includegraphics[width=\textwidth]{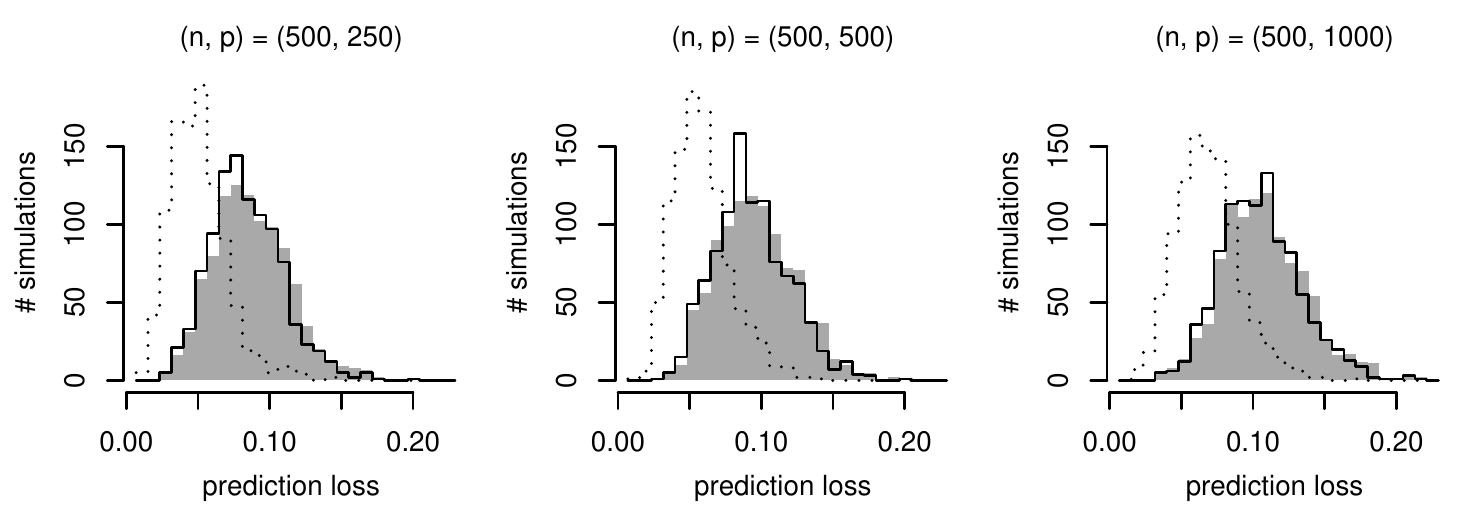}
\caption{Histograms of the prediction loss produced by the estimators $\hat{\beta}$, $\hat{\beta}_{\text{oracle}}$, and $\hat{\beta}_{\text{CV}}$. 
The format of the plots is the same as in Figure \ref{fig:sim:Hamming}.}\label{fig:sim:prediction}

%\vspace{-0.45cm}
\end{figure}

The simulation results for the Hamming distance are reported in Figure~\ref{subfig:a:sim:Hamming} for our usual value $\kappa = 0.25$ of the correlation and in Figure~\ref{subfig:b:sim:Hamming} for $\kappa = 0$. 
The black line in each plot depicts the histogram of the Hamming distances $\Delta_H(\hat{\beta},\beta^*)$ that are produced by our estimator $\hat{\beta}$ over the $N=1000$~simulation runs,
the gray-shaded area depicts the histogram of $\Delta_H(\hat{\beta}_{\text{oracle}},\beta^*)$ produced by the oracle $\hat{\beta}_{\text{oracle}}$,
and the dotted line depicts the histogram of $\Delta_H(\hat{\beta}_{\text{CV}},\beta^*)$ produced by the cross-validated estimator~$\hat{\beta}_{\text{CV}}$.

Comparing Figures~\ref{subfig:a:sim:Hamming} and \ref{subfig:b:sim:Hamming},
we find that both the oracle and our estimator provide more accurate variable selection for smaller correlations -- in line with theories for the lasso~\citep{Zhao06}.
We also find that both the oracle and our estimator provide more accurate variable selection than cross-validation 
-- in line with the well-known fact that cross-validation typically overselects.
Finally, we find that the histograms of our estimator are virtually the same as the ones of the oracle estimator -- in line with our theory.

The simulation results for the $\ell_1$-norm are reported in Figure \ref{fig:sim:L1} and for the $\ell_\infty$-norm in Figure \ref{fig:sim:sup}. 
We find again that for both the $\ell_1$- and the $\ell_\infty$-loss, the histograms produced by our estimator~$\hat{\beta}$ are extremely close to those of the oracle~$\hat{\beta}_{\text{oracle}}$, 
meaning that the performance of our procedure matches the performance of the  oracle.
We also find that our estimator improves on cross-validation in terms of the $\ell_1$-norm but slightly loses in terms of the $\ell_\infty$-norm. 
The reason for this difference is that  $\hat{\lambda}_{\text{CV}}$ tends to be much smaller than $\hat{\lambda}_\alpha$ and $\lambda_\alpha^*$; 
this induces an accumulation of small, spurious parameters, which affects the $\ell_1$-norm more than the $\ell_\infty$-norm.

The simulation results for the prediction error are reported in Figure \ref{fig:sim:prediction}. 
Once more, the histograms of our estimator are extremely close to those of the oracle.
Cross-validation performs best, which is no surprise in view of it being specifically designed for this task.

The two main conclusions from the simulations are that our method 
(i)~exhibits virtually the same performance as the oracle and
(ii)~rivals cross-validation in terms of variable selection and estimation but not necessarily prediction.

\subsection{Inference}\label{subsec:sim:3}

\begin{table}[!t] 
\setlength{\tabcolsep}{2pt}
\centering 
\caption{Empirical size under the null and power against different alternatives.}\label{table:sim:test}

{\small 
\begin{subtable}[b]{\textwidth}
\centering 
\caption{empirical size under $H_0: \beta^* = 0$}\label{subtable:sim:test:null} 
\begin{tabular}{@{\extracolsep{5pt}} lcccccccc} 
\\[-1.8ex]\hline 
\hline \\[-1.8ex] 
 & & \multicolumn{3}{c}{feasible test} & & \multicolumn{3}{c}{oracle test} \\
 & & $\alpha=0.01$ & $\alpha=0.05$ & $\alpha=0.1$ & & $\alpha=0.01$ & $\alpha=0.05$ & $\alpha=0.1$ \\[0.1cm]
\hline \\[-1.8ex] 
$(n, p) = (500, 250)$ & & $0.024$ & $0.057$ & $0.110$ & & $0.021$ & $0.056$ & $0.087$ \\ 
$(n, p) = (500, 500)$ & & $0.018$ & $0.050$ & $0.097$ & & $0.008$ & $0.064$ & $0.116$ \\ 
$(n, p) = (500, 1000)$ & & $0.015$ & $0.044$ & $0.082$ & & $0.010$ & $0.050$ & $0.095$ \\[0.1cm] 
\hline\\ 
\end{tabular} 
\end{subtable}

\begin{subtable}[b]{\textwidth}
\centering 
\caption{empirical power under the alternative with $\text{SNR} = 0.1$}\label{subtable:sim:test:alt01} 
\begin{tabular}{@{\extracolsep{5pt}} lcccccccc} 
\\[-1.8ex]\hline 
\hline \\[-1.8ex] 
 & & \multicolumn{3}{c}{feasible test} & & \multicolumn{3}{c}{oracle test} \\
 & & $\alpha=0.01$ & $\alpha=0.05$ & $\alpha=0.1$ & & $\alpha=0.01$ & $\alpha=0.05$ & $\alpha=0.1$ \\[0.1cm]
\hline \\[-1.8ex] 
$(n, p) = (500, 250)$ & & $0.151$ & $0.304$ & $0.440$ & & $0.118$ & $0.341$ & $0.458$ \\ 
$(n, p) = (500, 500)$ & & $0.148$ & $0.293$ & $0.433$ & & $0.089$ & $0.341$ & $0.456$ \\ 
$(n, p) = (500, 1000)$ & & $0.122$ & $0.284$ & $0.409$ & & $0.090$ & $0.293$ & $0.417$ \\[0.1cm] 
\hline \\
\end{tabular} 
\end{subtable}

\begin{subtable}[b]{\textwidth}
\centering 
\caption{empirical power under the alternative with $\text{SNR} = 0.2$}\label{subtable:sim:test:alt02}
\begin{tabular}{@{\extracolsep{5pt}} lcccccccc} 
\\[-1.8ex]\hline 
\hline \\[-1.8ex] 
 & & \multicolumn{3}{c}{feasible test} & & \multicolumn{3}{c}{oracle test} \\
 & & $\alpha=0.01$ & $\alpha=0.05$ & $\alpha=0.1$ & & $\alpha=0.01$ & $\alpha=0.05$ & $\alpha=0.1$ \\[0.1cm]
\hline \\[-1.8ex] 
$(n, p) = (500, 250)$ & & $0.644$ & $0.850$ & $0.923$ & & $0.664$ & $0.890$ & $0.940$ \\ 
$(n, p) = (500, 500)$ & & $0.631$ & $0.840$ & $0.909$ & & $0.579$ & $0.880$ & $0.926$ \\ 
$(n, p) = (500, 1000)$ & & $0.599$ & $0.811$ & $0.904$ & & $0.600$ & $0.867$ & $0.918$ \\[0.1cm] 
\hline \\[-1.8ex] 
\end{tabular} 
\end{subtable}}

%\vspace{-0.45cm}
\end{table}

We finally explore the empirical performance of the tests developed in Section~\ref{subsec:inference}. 
We focus on the simpler test $H_0: \beta^* = 0$ against $H_1: \beta^* \ne 0$,
where we reject $H_0$ at the significance level $\alpha$ if $T=2 \normsup{\Xmat^\top \Y}/n > \hat{\lambda}_\alpha$.
We compare this test with an oracle version that rejects $H_0$ if $T > \lambda_\alpha^*$.
Similarly as before, this comparison allows us to investigate if our practical test matches its theoretical (and in practice infeasible) analog as suggested by our theory. 
The simulation setup is as described before, including the mentioned variations over the signal-to-noise ratio $\text{SNR}$:
the value $\text{SNR} = 0$ specifies the null hypothesis $H_0: \beta^* = 0$;
the values $\text{SNR}\in\{0.1,0.2\}$ specify the alternative (the larger $\text{SNR}$, the further the setup deviates from the null).

Table \ref{subtable:sim:test:null} reports the empirical size of our feasible test and of its oracle version under the null for different values of the nominal size $\alpha$, sample size $n$, and dimension~$p$. 
The empirical size is defined as the number of rejections divided by the total number of simulation runs.
We find that the size both of our feasible test and of the oracle test is close to the target $\alpha$ in all considered scenarios.

Tables \ref{subtable:sim:test:alt01} and \ref{subtable:sim:test:alt02} report the empirical power of the tests for the signal-to-noise ratios $\text{SNR} = 0.1$ and $\text{SNR}=0.2$, respectively. 
The empirical power is again defined as the number of rejections divided by the total number of simulation runs.
We find that the power increases when the signal-to-noise ratio $\text{SNR}$ goes up, as expected.
We further find that the power of our test is very similar to the one of the oracle test.
Moreover, the power can be seen to be quite substantial despite the small signal-to-noise ratios.

We conclude that our test has (i)~similar performance as its oracle version, (ii)~sizes close to the nominal ones, and (iii)~considerable power against alternatives.

\subsection{Robustness checks}

In Section  \ref{sec:supp:robustness} of the Supplementary Material, we carry out some robustness checks. We in particular examine how our simulation results are affected by different distributions of the noise $\eps_i$ and the design $\Xvec_i$, how our method for tuning parameter calibration is influenced by the choice of $\alpha$, and what is the effect of the number of bootstrap iterations $L$ and the grid size $M$ on the simulation results.

\section*{Acknowledgements}
We thank the editor and the reviewers for their insightful comments.

\allowdisplaybreaks[3]

\section*{Appendix}
\def\theequation{A.\arabic{equation}}

In what follows, we prove the main theoretical results of the paper. We assume throughout that the technical conditions \ref{C1}--\ref{C5} are fulfilled.

\subsection*{Notation}

Throughout the Appendix, the symbols $B$, $c$, $C$, $D$ and $K$ denote generic constants that may take a different value on each occurrence. Moreover, the symbols $B_j$, $c_j$, $C_j$, $D_j$ and $K_j$ with subscript $j$ (which may be either a natural number or a letter) are specific constants that are defined in the course of the Appendix. Unless stated differently, the constants $B$, $c$, $C$, $D$, $K$, $B_j$, $c_j$, $C_j$, $D_j$ and $K_j$ depend neither on the sample size $n$ nor on the dimension $p$. For ease of notation, we let $\Theta = \{ c_\Xvec, C_\Xvec, c_\sigma, C_\sigma, C_\theta, \theta, C_r, r, C_\beta, \delta_\beta \}$ be the list of model parameters specified in \ref{C1}--\ref{C5}. For $a$, $b \in \R$, we write $a \lor b = \max\{a,b\}$. The random variables $\Xmat$, $\eps$ and $e$ are assumed to be defined on the same probability space $(\Omega, \mathcal{A}, \pr)$ for all $n \ge 1$. We write $\pr_e(\cdot) = \pr(\, \cdot \, | \Xmat, \eps)$ and $\ex_e[ \, \cdot \, ] = \ex[ \, \cdot \, | \Xmat,\eps]$ to denote the probability and expectation conditionally on $\Xmat$ and $\eps$.

To derive the theoretical results of the paper, it is convenient to reformulate the estimator $\hat{\lambda}_\alpha$ as follows: define $\crithat(\gamma,e) = \max_{1\le j \le p} |\What_j(\gamma,e)|$, where
\[ \What(\gamma,e) = \big(\What_1(\gamma,e),\ldots,\What_p(\gamma,e)\big)^\top \quad \text{with} \quad \What_j(\gamma,e) = \frac{1}{\sqrt{n}} \sum\limits_{i=1}^n \Xvec_{ij} \hat{\eps}_{\frac{2}{\sqrt{n}}\gamma,i} e_i. \]
Moreover, let $\hat{\pi}_\alpha(\gamma)$ be the $(1-\alpha)$-quantile of $\crithat(\gamma,e)$ conditionally on $\Xmat$ and $\Y$, that is, conditionally on $\Xmat$ and $\eps$, which is formally defined as $\hat{\pi}_\alpha(\gamma) = \inf \{ q: \pr_e (\crithat(\gamma,e) \le q) \ge 1-\alpha \}$, and set 
\[ \hat{\gamma}_\alpha = \inf \big\{ \gamma > 0: \hat{\pi}_\alpha(\gamma^\prime) \le \gamma^\prime \text{ for all } \gamma^\prime \ge \gamma \big\}. \]
The quantities $\crithat(\gamma,e)$, $\hat{\pi}_\alpha(\gamma)$ and $\hat{\gamma}_\alpha$ are related to $\criterion(\lambda,e)$, $\hat{q}_\alpha(\lambda)$ and $\hat{\lambda}_\alpha$ by the equations
\begin{equation*}
\crithat\Big(\frac{\sqrt{n}}{2}\lambda,e\Big) = \frac{\sqrt{n}}{2} \criterion(\lambda,e), \quad
\hat{\pi}_\alpha\Big(\frac{\sqrt{n}}{2}\lambda\Big) = \frac{\sqrt{n}}{2} \hat{q}_\alpha(\lambda), \quad 
\hat{\gamma}_\alpha = \frac{\sqrt{n}}{2} \hat{\lambda}_\alpha.
\end{equation*}
Hence, $\hat{\gamma}_\alpha$ is a rescaled version of the estimator $\hat{\lambda}_\alpha$. In particular, we can reformulate $\hat{\lambda}_\alpha$ in terms of $\hat{\gamma}_\alpha$ as $\hat{\lambda}_\alpha = 2 \hat{\gamma}_\alpha/\sqrt{n}$.

For our proof strategy, we require some auxiliary statistics which are closely related to $\crithat(\gamma,e)$, $\hat{\pi}_\alpha(\gamma)$ and $\hat{\gamma}_\alpha$. To start with, we define $\crit(e) = \max_{1\le j \le p} |\W_j(e)|$, where
\[ \W(e) = \big(\W_1(e),\ldots,\W_p(e)\big)^\top \quad \text{with} \quad \W_j(e) = \frac{1}{\sqrt{n}} \sum\limits_{i=1}^n \Xvec_{ij} \eps_i e_i, \]
and let $\gamma_\alpha$ be the $(1-\alpha)$-quantile of $\crit(e)$ conditionally on $\Xmat$ and $\eps$. Moreover, we set $\critstar = \max_{1\le j \le p} |\Wstar_j|$, where 
\[ \Wstar = (\Wstar_1,\ldots,\Wstar_p)^\top \quad \text{with} \quad \Wstar_j = \frac{1}{\sqrt{n}} \sum\limits_{i=1}^n \Xvec_{ij} \eps_i, \] 
and let $\gamma_\alpha^*$ be the $(1-\alpha)$-quantile of $\critstar$. Notice that $\gamma_\alpha^*$ is a rescaled version of $\lambda_\alpha^*$, in particular, $\gamma_\alpha^* = \sqrt{n} \lambda_\alpha^* / 2$. Finally, we define $\critgauss = \max_{1 \le j \le p} |\Wgauss_j|$, where $\Wgauss = (\Wgauss_1,\ldots,\Wgauss_p)^\top$ is a Gaussian random vector with the same covariance structure as $\Wstar$, that is, $\ex[\Wgauss] = \ex[\Wstar] = 0$ and $\ex[\Wgauss \Wgauss^\top] = \ex[\Wstar (\Wstar)^\top]$, and we denote the $(1-\alpha)$-quantile of $\critgauss$ by $\gamma_\alpha^{\Wgauss}$.

\pagebreak

In order to relate the criterion function $\crithat(\gamma,e)$ to the term $\crit(e)$, we make use of the simple bound 
\begin{equation}\label{eq:critbound}
\crithat(\gamma,e) 
\begin{cases}
\le \crit(e) + \remainder(\gamma,e) \\
\ge \crit(e) - \remainder(\gamma,e), 
\end{cases}
\end{equation}
where 
\[ \remainder(\gamma,e) = \max_{1 \le j \le p} \biggl| \frac{1}{\sqrt{n}} \sum_{i=1}^n \Xvec_{ij} \Xvec_i^\top \big(\beta^* - \hat{\beta}_{\frac{2}{\sqrt{n}}\gamma}\big) e_i \biggr|. \]
For our technical arguments, we further define the expression
\begin{equation*}
\Delta = \max_{1 \le j,k \le p} \big| \Sigma_{jk} - \Sigma_{jk}^* \big| = \max_{1 \le j,k \le p} \biggl| \frac{1}{n} \sum\limits_{i=1}^n \big( \Xvec_{ij} \Xvec_{ik} \eps_i^2 - \ex[ \Xvec_{ij} \Xvec_{ik} \eps_i^2 ] \big) \biggr|, 
\end{equation*}
where $\Sigma = (\Sigma_{jk}: 1 \le j,k \le p) = \ex_e [\W(e) \W(e)^\top]$ is the covariance matrix of $\W(e)$ conditionally on $\Xmat$ and $\eps$, and $\Sigma^* = (\Sigma_{jk}^*: 1 \le j,k \le p) = \ex[ \Wstar (\Wstar)^\top]$ is the covariance matrix of $\Wstar$. We finally introduce the event
\[ \mathcal{S}_\gamma = \Big\{ \frac{1}{\sqrt{n}} \normsup{\Xmat^\top \eps} \le \gamma \Big\}, \] 
which relates to $\mathcal{T}_\lambda$ by the equation $\mathcal{S}_{\sqrt{n} \lambda / 2} = \mathcal{T}_\lambda$, as well as the event 
\begin{equation}\label{eq:def:eventA}
\mathcal{A}_n = \big\{ \Delta \le B_\Delta \sqrt{\log(n \lor p)/n} \big\}, 
\end{equation}
where the constant $B_\Delta$ is defined in Lemma \ref{lemmaA1} below.

\subsection*{Auxiliary results}

Before we prove the main results of the paper, we derive some auxiliary lemmas which are needed later on. Their proofs can be found in the Supplementary Material.

\begin{lemmaA}\label{lemmaA1}
There exist positive constants $B_\Delta$, $C_\Delta$ and $K_\Delta$ that depend only on the model parameters $\Theta$ such that 
\[ \pr \big( \Delta > B_\Delta \sqrt{\log(n \lor p) / n} \big) \le C_\Delta n^{-K_\Delta}. \]
In particular, $K_\Delta$ can be chosen to be any positive constant with $K_\Delta < (\theta-4)/4$, where $\theta > 4$ is defined in \ref{C3}.  
\end{lemmaA}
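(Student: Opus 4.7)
The plan is to combine a truncation of the noise with an entrywise Bernstein bound and a union bound over the $p^2$ pairs $(j,k)$. The main challenge is that $\eps_i^2$ has only finite $\theta/2$ moment, so Bernstein's inequality cannot be applied directly; truncation at the right scale is what exposes its sub-Gaussian regime.

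First, I would set the truncation level $\tau = \tau_n = n^{(1+K_\Delta)/\theta}$ and introduce the event $\Omega_\tau = \{\max_{1\le i\le n}|\eps_i| \le \tau\}$. A union bound combined with Markov's inequality applied to $\ex[|\eps_i|^\theta] \le C_\theta$ yields $\pr(\Omega_\tau^c) \le n C_\theta/\tau^\theta = C_\theta n^{-K_\Delta}$. Next, I would pass from $Z_{i,jk} = \Xvec_{ij}\Xvec_{ik}\eps_i^2 - \ex[\Xvec_{ij}\Xvec_{ik}\eps_i^2]$ to the truncated analogues $\tilde Z_{i,jk} = \Xvec_{ij}\Xvec_{ik}\eps_i^2\ind(|\eps_i|\le\tau) - \ex[\Xvec_{ij}\Xvec_{ik}\eps_i^2\ind(|\eps_i|\le\tau)]$. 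A short centering computation shows that on $\Omega_\tau$,
\[ \Big|\frac{1}{n}\sum_{i=1}^n Z_{i,jk}\Big| \le \Big|\frac{1}{n}\sum_{i=1}^n \tilde Z_{i,jk}\Big| + \max_i\big|\ex[\Xvec_{ij}\Xvec_{ik}\eps_i^2\ind(|\eps_i|>\tau)]\big|, \]
and the deterministic bias on the right is at most $C_\Xvec^2 C_\theta/\tau^{\theta-2} = O(n^{-(\theta-2)(1+K_\Delta)/\theta})$, which is $o(\sqrt{\log(n\lor p)/n})$ since $(\theta-2)/\theta > 1/2$ under \ref{C3}, and can thus be absorbed into $B_\Delta\sqrt{\log(n\lor p)/n}/2$ for $n$ large.

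Second, I would apply Bernstein's inequality entrywise to the truncated sums. By construction $|\tilde Z_{i,jk}| \le 2 C_\Xvec^2 \tau^2$ and $\ex[\tilde Z_{i,jk}^2] \le C_\Xvec^4 \ex[\eps_i^4] \le C_\Xvec^4 C_\theta^{4/\theta}$ is uniformly bounded since $\theta > 4$. With $t = (B_\Delta/2)\sqrt{n\log(n\lor p)}$, the resulting tail bound is $2\exp\bigl(-(t^2/2)/(nC + C_\Xvec^2\tau^2 t/3)\bigr)$, and the sub-Gaussian regime survives precisely when $\tau^2 \lesssim \sqrt{n/\log(n\lor p)}$. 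Substituting $\tau = n^{(1+K_\Delta)/\theta}$ turns this into $2(1+K_\Delta)/\theta \le 1/2$, i.e. $K_\Delta \le (\theta-4)/4$, with strict inequality to absorb the logarithmic factor. This is exactly the ceiling stated in the lemma; under it, each entrywise probability is at most $2(n\lor p)^{-c B_\Delta^2}$ for a constant $c>0$ independent of $n$ and $p$.

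Finally, a union bound over the $p^2$ pairs, combined with \ref{C4}, gives
\[ \pr\bigl(\Delta > B_\Delta\sqrt{\log(n\lor p)/n}\bigr) \le \pr(\Omega_\tau^c) + p^2 \cdot 2(n\lor p)^{-cB_\Delta^2} \le C_\theta n^{-K_\Delta} + 2(n\lor p)^{2r - cB_\Delta^2}, \]
and I would choose $B_\Delta$ so large (depending on $r$ and $K_\Delta$) that $cB_\Delta^2 \ge 2r + K_\Delta$, making the second term $\le n^{-K_\Delta}$ and yielding the claim. The main obstacle is the simultaneous calibration of $\tau$: it must be large enough that $\pr(\Omega_\tau^c) \le n^{-K_\Delta}$, yet small enough that the linear term in the Bernstein denominator does not swamp the quadratic one. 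Reconciling these two demands is exactly what forces the bound $K_\Delta < (\theta-4)/4$.
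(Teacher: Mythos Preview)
Your proof is correct and follows essentially the same route as the paper's: truncate $\eps_i$, control the truncation event by Markov on the $\theta$-th moment, handle the centering bias deterministically, and apply an exponential bound plus a union bound over the $p^2$ pairs for the truncated part. Your truncation level $\tau = n^{(1+K_\Delta)/\theta}$ is in fact identical to the paper's $n^{(1-\delta)/4}$ under the reparametrization $1-\delta = 4(1+K_\Delta)/\theta$; the only cosmetic difference is that you invoke Bernstein's inequality where the paper writes out the Chernoff argument by hand. One small cleanup: in the final union bound it is simpler to use $p^2 \le (n\lor p)^2$ rather than $(n\lor p)^{2r}$, since the latter does not follow directly from \ref{C4} without further case distinctions.
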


\begin{lemmaA}\label{lemmaA2}
On the event $\mathcal{S}_\gamma$, it holds that 
\[ \pr_e \biggl( \remainder(\gamma^\prime,e) > \frac{B_\remainder (\log n)^2 \sqrt{\normone{\beta^*} \gamma^\prime}}{n^{1/4}} \biggr) \le C_\remainder \, n^{-K_{\remainder}} \]
for any $\gamma^\prime \ge \gamma$, where the constants $B_\remainder$, $C_\remainder$ and $K_{\remainder}$ depend only on the model parameters $\Theta$, and $K_{\remainder}$ can be chosen as large as desired by picking $C_\remainder$ large enough. 
\end{lemmaA}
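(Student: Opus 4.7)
The plan is to work conditionally on $(\Xmat,\eps)$ and exploit the Gaussianity of the multipliers $e=(e_1,\ldots,e_n)^\top$. Fix $\gamma'\ge\gamma$ and set $\lambda'=2\gamma'/\sqrt{n}$. Because $\hat{\beta}_{\lambda'}$ is a measurable function of $(\Xmat,\eps)$, each coordinate
\[ Z_j \;=\; \frac{1}{\sqrt{n}}\sum_{i=1}^n \Xvec_{ij}\,\Xvec_i^\top(\beta^*-\hat{\beta}_{\lambda'})\,e_i \]
is, under $\pr_e$, a centered Gaussian with variance $\sigma_j^2 = n^{-1}\sum_i \Xvec_{ij}^2(\Xvec_i^\top(\beta^*-\hat{\beta}_{\lambda'}))^2$. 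Hence $\remainder(\gamma',e)=\max_{1\le j\le p}|Z_j|$ is simply the supremum of $p$ (correlated) Gaussians under $\pr_e$, and the task reduces to controlling the maximal variance $\sigma_{\max}^2:=\max_j\sigma_j^2$.

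Next I would estimate $\sigma_{\max}^2$. From \ref{C2} we have $|\Xvec_{ij}|\le C_\Xvec$, hence
\[ \sigma_j^2 \;\le\; \frac{C_\Xvec^2}{n}\,\normtwos{\Xmat(\beta^*-\hat{\beta}_{\lambda'})}. \]
The key observation is that $\mathcal{S}_\gamma = \mathcal{T}_{2\gamma/\sqrt{n}}$, so on $\mathcal{S}_\gamma$ the effective noise $2\normsup{\Xmat^\top\eps}/n$ is bounded by $2\gamma/\sqrt{n}\le\lambda'$. Therefore the lasso prediction bound \eqref{eq:predictionbound} applies at tuning parameter $\lambda'$ and yields
\[ \frac{1}{n}\normtwos{\Xmat(\beta^*-\hat{\beta}_{\lambda'})} \;\le\; 2\lambda'\normone{\beta^*} \;=\; \frac{4\gamma'\normone{\beta^*}}{\sqrt{n}}. \]
Consequently $\sigma_{\max}^2 \le 4C_\Xvec^2\gamma'\normone{\beta^*}/\sqrt{n}$, which already produces the $\sqrt{\normone{\beta^*}\gamma'}/n^{1/4}$ scaling appearing in the statement.

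Finally, a standard Gaussian-maximum union bound gives
\[ \pr_e\bigl(\remainder(\gamma',e)>t\bigr) \;\le\; 2p\exp\!\left(-\frac{t^2}{2\sigma_{\max}^2}\right). \]
Substituting $t=B_\remainder(\log n)^2\sqrt{\normone{\beta^*}\gamma'}/n^{1/4}$ collapses the exponent to $-B_\remainder^2(\log n)^4/(8C_\Xvec^2)$, independent of $\gamma'$ and $\beta^*$. Using $p\le C_r n^r$ from \ref{C4}, the right-hand side is at most $C_\remainder n^{-K_\remainder}$ for any prescribed $K_\remainder$, provided $B_\remainder$ is chosen sufficiently large. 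I do not expect a substantive obstacle: the only step that genuinely uses the structure of the problem is recognising that $\mathcal{S}_\gamma$ together with $\gamma'\ge\gamma$ permits invoking \eqref{eq:predictionbound} at $\lambda'$; everything else is textbook sub-Gaussian concentration, and the comfortable $(\log n)^2$ factor in the target leaves plenty of slack in the union bound (indeed a $\sqrt{\log(n\lor p)}$ factor would already suffice).
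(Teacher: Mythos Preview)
Your proof is correct and is actually cleaner than the paper's. Both arguments rest on the same two ingredients: (i) on $\mathcal{S}_\gamma$ the prediction bound \eqref{eq:predictionbound} applies at $\lambda'=2\gamma'/\sqrt{n}$, giving $n^{-1}\normtwos{\Xmat(\beta^*-\hat\beta_{\lambda'})}\le 4\gamma'\normone{\beta^*}/\sqrt{n}$; (ii) conditionally on $(\Xmat,\eps)$ the remainder is a maximum of centered Gaussians whose variances are controlled by this prediction error. Where you depart from the paper is in how you extract the tail bound from (ii). The paper truncates the multipliers as $e_i=e_i^{\le}+e_i^{>}$ at level $\log n$, handles the bounded part $e_i^{\le}$ via the exponential Markov inequality with a hand-tuned tilt $\mu$, and deals with the tail part $e_i^{>}$ separately through $\pr(|e_i|>\log n)$. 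You instead invoke the exact Gaussian tail directly, which collapses all of that into one line: $\pr_e(\max_j|Z_j|>t)\le 2p\exp(-t^2/(2\sigma_{\max}^2))$. Since the $e_i$ really are standard normal under $\pr_e$, your route is the natural one; the paper's truncation machinery is the kind of argument one would need if the multipliers were merely sub-Gaussian, and here it is unnecessary overhead. Your observation that even a $\sqrt{\log(n\lor p)}$ factor would suffice is also correct and shows the statement is generous; the paper's proof does not exploit this slack either.
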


\begin{lemmaA}\label{lemmaA3}
For every $\alpha > 1/(n \lor p)$, it holds that
\begin{equation*}
\gamma_\alpha^{\Wgauss} \le C_\Xvec C_\sigma \bigl[ \sqrt{2 \log (2 p)} + \sqrt{2\log (n \lor p)} \bigr],
\end{equation*}
where the constants $C_\Xvec$ and $C_\sigma$ are defined in \ref{C2} and \ref{C3}, respectively.
\end{lemmaA}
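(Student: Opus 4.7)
The plan is to bound $\gamma_\alpha^{\Wgauss}$ by a direct combination of the standard Gaussian tail bound and the union bound, after first controlling the marginal variances of the components $\Wgauss_j$.

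First, I would compute the marginal variance of each $\Wgauss_j$. Since $\Wgauss$ has the same covariance as $\Wstar$, I have
\[ \var(\Wgauss_j) = \ex\bigl[(\Wstar_j)^2\bigr] = \frac{1}{n} \sum_{i=1}^n \ex\bigl[\Xvec_{ij}^2 \sigma^2(\Xvec_i)\bigr], \]
using $\ex[\eps_i|\Xvec_i]=0$ and conditioning on $\Xvec_i$. By \ref{C2}, $\Xvec_{ij}^2 \le C_\Xvec^2$, and by \ref{C3}, $\sigma^2(\Xvec_i) \le C_\sigma^2$, so $\var(\Wgauss_j) \le C_\Xvec^2 C_\sigma^2$ for every $j$.

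Next, I would apply the standard Gaussian tail inequality $\pr(|Z| > t) \le 2 \exp(-t^2/(2\sigma^2))$ for $Z \sim N(0,\sigma^2)$ componentwise and a union bound over $j=1,\ldots,p$, yielding
\[ \pr\bigl(\critgauss > t\bigr) \le 2p \exp\!\left(-\frac{t^2}{2 C_\Xvec^2 C_\sigma^2}\right). \]
Setting $t = C_\Xvec C_\sigma \sqrt{2 \log(2p/\alpha)}$ makes the right-hand side equal to $\alpha$, so by the definition of $\gamma_\alpha^{\Wgauss}$ as the $(1-\alpha)$-quantile,
\[ \gamma_\alpha^{\Wgauss} \le C_\Xvec C_\sigma \sqrt{2 \log(2p/\alpha)}. \]

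Finally, to obtain the form stated in the lemma, I would use the hypothesis $\alpha > 1/(n \lor p)$, which gives $\log(1/\alpha) < \log(n \lor p)$, so that $\log(2p/\alpha) < \log(2p) + \log(n \lor p)$. Combining this with the subadditivity inequality $\sqrt{a+b} \le \sqrt{a} + \sqrt{b}$ for $a,b \ge 0$ yields
\[ \sqrt{2 \log(2p/\alpha)} \le \sqrt{2 \log(2p)} + \sqrt{2 \log(n \lor p)}, \]
which plugged into the previous display gives the claim.

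There is no real obstacle here: the argument is a textbook application of the Gaussian maximal inequality, and the only small bookkeeping point is to correctly use $\alpha > 1/(n\lor p)$ to split $\log(2p/\alpha)$ into the two pieces that appear on the right-hand side of the stated bound. One alternative would be to absorb both terms into a single $\sqrt{2\log(2(n\lor p)^2)}$-type expression, but the additive split via $\sqrt{a+b}\le\sqrt{a}+\sqrt{b}$ directly yields the exact constants written in the lemma.
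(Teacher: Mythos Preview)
Your argument is correct, but it differs from the paper's. The paper normalizes to $\Wgauss_j/\sigma_j$, applies the Gaussian concentration inequality (Borell--TIS, via Theorem~7.1 in Ledoux) to bound the deviation of $\max_j|\Wgauss_j/\sigma_j|$ above its expectation by $\sqrt{2\log(n\lor p)}$ with probability at most $1/(n\lor p)$, and separately bounds the expectation by $\sqrt{2\log(2p)}$ using the standard maximal inequality for Gaussians; multiplying back by $C_\Xvec C_\sigma\ge\sigma_j$ then gives the stated quantile bound for every $\alpha>1/(n\lor p)$ at once. Your route, by contrast, is the more elementary union-bound-plus-tail-bound argument, which yields an $\alpha$-dependent quantile bound $C_\Xvec C_\sigma\sqrt{2\log(2p/\alpha)}$ and then converts it into the stated form via $\alpha>1/(n\lor p)$ and $\sqrt{a+b}\le\sqrt a+\sqrt b$. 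Both reach the identical constants; your approach avoids citing concentration and maximal inequalities, while the paper's has the minor structural advantage that the two terms $\sqrt{2\log(2p)}$ and $\sqrt{2\log(n\lor p)}$ arise with distinct interpretations (expectation versus fluctuation) rather than from an artificial split of a single logarithm.
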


%\begin{lemmaA}\label{lemmaA3:lowerbound}
%For every $\delta_0 > 0$, there exists a constant $c_{\min} = c_{\min}(\delta_0) > 0$ that depends only on $\delta_0$ and the parameters $\Theta$ such that $\gamma_\alpha^\Wgauss \ge c_{\min}$ for all $\alpha \in (0,1-\delta_0]$. 
%\end{lemmaA} 

In addition to Lemmas \ref{lemmaA1}--\ref{lemmaA3}, we state some results on high-dimensional Gaussian approximations and anti-concentration bounds for Gaussian random vectors from \cite{Chernozhukov2013} and \cite{Chernozhukov2015} that are required for the proofs in the sequel. The first result is an anti-concentration bound which is taken from \cite{Chernozhukov2015} -- see their Theorem 3 and Corollary 1. 
\begin{lemmaA}\label{lemmaA5}
Let $(V_1,\ldots,V_p)^\top$ be a centered Gaussian random vector in $\Rp$. Suppose that there are constants $0 < c_3 < C_3 < \infty$ with $c_3 \le \sigma_j \le C_3$, where $\sigma_j^2 = \ex[V_j^2]$ for $1 \le j \le p$. Then for every $\delta > 0$, 
\[ \sup_{t \in \R} \pr \biggl( \Big| \max_{1 \le j \le p} V_j - t \Big| \le \delta \biggr) \le C \delta \sqrt{1 \lor \log(p/\delta)}, \]
where $C > 0$ depends only on $c_3$ and $C_3$. 
\end{lemmaA}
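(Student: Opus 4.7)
The plan: The most efficient route is a direct citation, since the statement itself attributes the result to Chernozhukov, Chetverikov, and Kato (2015) (Theorem 3 and Corollary 1 there). So at a minimum, the ``proof'' need only verify that the hypotheses of the cited theorem match ours, namely that $(V_1,\dots,V_p)^\top$ is centered Gaussian with marginal standard deviations $\sigma_j \in [c_3, C_3]$ uniformly in $j$. No further work is required and the constant $C$ inherits its dependence on $(c_3, C_3)$ from the cited result.

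If instead one wanted a self-contained proof, the natural strategy would be to bound the density of $M := \max_{1 \le j \le p} V_j$ and then integrate over the window $[t-\delta, t+\delta]$. The distribution of $M$ is well-behaved because $F_M(t)$ is the Gaussian measure of the convex set $A_t := \{v \in \Rp : \max_j v_j \le t\}$ (an intersection of half-spaces). Applying Nazarov's inequality to the $\delta$-inflation of $A_{t-\delta}$ in the $\ell_\infty$ metric, together with the non-degeneracy hypothesis $\sigma_j \ge c_3$, immediately yields the cruder bound
\[
\sup_t \pr\bigl(|M - t| \le \delta\bigr) \;\le\; C_{c_3,C_3}\,\delta\,\sqrt{\log p}.
\]
This already captures the essence of the lemma but loses a factor compared to the claimed $\sqrt{1 \vee \log(p/\delta)}$.

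The hard part, which is the technical core of the CCK result, is to sharpen the logarithm from $\sqrt{\log p}$ to $\sqrt{1 \vee \log(p/\delta)}$. I would handle this by splitting into two regimes. For $\delta \gtrsim 1/p$ the two expressions are comparable and Nazarov's inequality already suffices. For $\delta \ll 1/p$ one exploits that $M$ is typically of order at most $\sqrt{2\log(p/\delta)}$, so with high probability only a handful of coordinates $V_j$ are large enough to realize the maximum in the window $[t-\delta,t+\delta]$; a union bound over these ``active'' coordinates combined with the one-dimensional Gaussian anti-concentration bound $\pr(|V_j - t'| \le \delta) \le C\delta/c_3$ then gives the sharper rate. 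Interpolating between this ``few-coordinate'' union-bound regime and the ``many-coordinate'' Nazarov regime is the delicate step, and it is precisely the ingredient of the Chernozhukov–Chetverikov–Kato proof that one would have to reproduce in a self-contained argument.
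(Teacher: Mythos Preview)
Your proposal is correct and matches the paper's own treatment: the paper does not prove this lemma at all but simply states it as a quotation of Theorem~3 and Corollary~1 in \cite{Chernozhukov2015}, so the direct-citation approach you outline is exactly what is required. Your additional sketch of a self-contained argument via Nazarov's inequality plus a two-regime refinement is accurate commentary on the CCK proof, but it goes beyond what the paper itself does.
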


\noindent The next two results correspond to Theorem 2 in \cite{Chernozhukov2015} and Corollary 2.1 in \cite{Chernozhukov2013}, respectively. 
\begin{lemmaA}\label{lemmaA6}
Let $V = (V_1,\ldots,V_p)^\top$ and $V^\prime = (V_1^\prime,\ldots,V_p^\prime)^\top$ be centered Gaussian random vectors in $\Rp$ with covariance matrices $\Sigma^V = (\Sigma_{jk}^V: 1 \le j,k \le p)$ and $\Sigma^{V^\prime}= (\Sigma_{jk}^{V^\prime}: 1 \le j,k \le p)$, respectively, and define $\delta = \max_{1 \le j,k \le p} |\Sigma_{jk}^V - \Sigma_{jk}^{V^\prime}|$. Suppose that there are constants $0 < c_4 < C_4 < \infty$ with $c_4 \le \Sigma_{jj}^V \le C_4$ for $1 \le j \le p$. Then there exists a constant $C > 0$ that depends only on $c_4$ and $C_4$ such that
\begin{align*}
\sup_{t \in \R} \biggl| \pr \Big( \max_{1 \le j \le p} V_j \le t \Big) - & \, \pr \Big( \max_{1 \le j \le p} V_j^\prime \le t \Big) \biggr| \\ & \le C \delta^{1/3} \big\{ 1 \lor 2 \log p \lor \log(1/\delta) \big\}^{1/3} (\log p)^{1/3}. 
\end{align*}   
\end{lemmaA}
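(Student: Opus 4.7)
The plan is to adapt the Slepian–Stein interpolation argument combined with a smooth max approximation and Gaussian anti-concentration, in the spirit of Chernozhukov, Chetverikov and Kato. I would first replace the non-smooth maximum by its standard log–sum–exp surrogate $F_\beta(v) = \beta^{-1} \log\bigl(\sum_{j=1}^p e^{\beta v_j}\bigr)$, which satisfies $0 \le F_\beta(v) - \max_j v_j \le \beta^{-1} \log p$. Then I would compose with a smooth envelope $g_\varepsilon: \R \to [0,1]$ that approximates the indicator $\mathbf{1}\{\cdot \le t\}$ from above and has derivatives controlled by $C/\varepsilon^k$ for $k \le 3$. Writing $m(v) = g_\varepsilon(F_\beta(v))$, an elementary chain-rule calculation bounds its second partial derivatives; the key algebraic fact, which exploits the probability-weight structure $\pi_j(v) = e^{\beta v_j}/\sum_k e^{\beta v_k}$, is that $\sum_{j,k} |\partial_j \partial_k m(v)| \le C(\beta^2/\varepsilon + \beta/\varepsilon^2 + 1/\varepsilon^3)$ without any polynomial dependence on $p$.

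The core step is the Slepian–Stein interpolation. Define $V(\tau) = \sqrt{\tau}\,V + \sqrt{1-\tau}\,V'$ (on a product space), and compute
\begin{align*}
\ex[m(V)] - \ex[m(V')] = \int_0^1 \frac{d}{d\tau} \ex[m(V(\tau))]\, d\tau.
\end{align*}
Applying Stein's identity (Gaussian integration by parts) to each derivative yields
\begin{align*}
\frac{d}{d\tau} \ex[m(V(\tau))] = \frac{1}{2} \sum_{j,k=1}^p \bigl(\Sigma^V_{jk} - \Sigma^{V'}_{jk}\bigr) \, \ex\bigl[\partial_j \partial_k m(V(\tau))\bigr].
\end{align*}
Bounding $|\Sigma^V_{jk} - \Sigma^{V'}_{jk}| \le \delta$ uniformly and applying the Hessian estimate above gives
\begin{align*}
\bigl|\ex[m(V)] - \ex[m(V')]\bigr| \le C\,\delta\, \bigl(\beta^2/\varepsilon + \beta/\varepsilon^2 + 1/\varepsilon^3\bigr).
\end{align*}

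Next I would pass from the smooth functional $\ex[m(\cdot)]$ back to the Kolmogorov distance. By the construction of $F_\beta$ and $g_\varepsilon$ one has the sandwich
\begin{align*}
\pr\bigl(\max_j V_j \le t - \beta^{-1}\log p\bigr) \le \ex[m(V)] \le \pr\bigl(\max_j V_j \le t + \varepsilon\bigr),
\end{align*}
and analogously for $V'$. The discrepancy on each side is controlled by Lemma \ref{lemmaA5}: under the uniform variance bounds $c_4 \le \Sigma^V_{jj} \le C_4$, the anti-concentration inequality yields
\begin{align*}
\pr\bigl(\bigl|\max_j V_j - t\bigr| \le \varepsilon + \beta^{-1}\log p\bigr) \le C\bigl(\varepsilon + \beta^{-1}\log p\bigr)\sqrt{1 \lor \log(p/(\varepsilon + \beta^{-1}\log p))}.
\end{align*}
Combining the interpolation bound with the two sandwich/anti-concentration estimates produces
\begin{align*}
\sup_t\bigl|\pr(\max_j V_j \le t) - \pr(\max_j V'_j \le t)\bigr| \le C\bigl\{\delta (\beta^2/\varepsilon + \beta/\varepsilon^2 + 1/\varepsilon^3) + (\varepsilon + \beta^{-1}\log p)\sqrt{\log p}\bigr\}.
\end{align*}

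The final step is to optimize the two free parameters. Choosing $\beta \asymp \varepsilon^{-1} \log p$ (so that the two leading Hessian terms balance and $\beta^{-1}\log p \lesssim \varepsilon$) collapses the bound to $C\{\delta (\log p)^2/\varepsilon^3 + \varepsilon \sqrt{\log p}\}$; then taking $\varepsilon \asymp \delta^{1/3} (\log p)^{1/2}$, tweaked to absorb the $\log(1/\delta)$ factor arising from the anti-concentration logarithm, yields the stated rate $C\,\delta^{1/3}\{1 \lor 2\log p \lor \log(1/\delta)\}^{1/3}(\log p)^{1/3}$. The main obstacle in executing this plan carefully is the Hessian bookkeeping for $F_\beta$: one must exploit the cancellation $\sum_j \pi_j(v) = 1$ and the identity $\partial_j \partial_k F_\beta = \beta(\pi_j \mathbf{1}_{j=k} - \pi_j \pi_k)$ to keep all bounds dimension-free, since a naive triangle inequality would introduce factors of $p^2$ that destroy the rate.
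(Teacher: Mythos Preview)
The paper does not prove Lemma~A.6 at all; it simply cites it as Theorem~2 of Chernozhukov--Chetverikov--Kato (2015). Your plan is exactly the CCK argument, so in spirit you are reproducing the original proof rather than offering an alternative.

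However, there is a genuine slip in your Hessian bookkeeping that breaks the final optimization. For a Gaussian-to-Gaussian comparison, Stein's identity requires only \emph{second} derivatives of $m = g_\varepsilon \circ F_\beta$. With $\partial_j F_\beta = \pi_j$ and $\partial_j\partial_k F_\beta = \beta(\pi_j\mathbf{1}_{j=k} - \pi_j\pi_k)$, the chain rule gives
\[
\sum_{j,k}|\partial_j\partial_k m| \le |g_\varepsilon''|\sum_{j,k}\pi_j\pi_k + |g_\varepsilon'|\,\beta\sum_{j,k}\bigl(\pi_j\mathbf{1}_{j=k}+\pi_j\pi_k\bigr) \le C\bigl(\varepsilon^{-2} + \beta\varepsilon^{-1}\bigr),
\]
not $C(\beta^2/\varepsilon + \beta/\varepsilon^2 + 1/\varepsilon^3)$. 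The expression you wrote is the \emph{third}-derivative bound, which is what arises in Lindeberg-type arguments for non-Gaussian approximation; it is one order too pessimistic here.

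This matters for the rate. With your stated bound, after setting $\beta\asymp\varepsilon^{-1}\log p$ you obtain $\delta(\log p)^2/\varepsilon^3 + \varepsilon\sqrt{\log p}$, whose optimum over $\varepsilon$ is of order $\delta^{1/4}(\log p)^{7/8}$, not $\delta^{1/3}$. Your claimed choice $\varepsilon\asymp\delta^{1/3}(\log p)^{1/2}$ does not balance the terms: plugging it in gives $\delta(\log p)^2/\varepsilon^3 = (\log p)^{1/2}$, which is independent of $\delta$. With the correct second-order bound $C(\beta/\varepsilon+1/\varepsilon^2)$, the interpolation term becomes $\delta\log p/\varepsilon^2$ after the same choice of $\beta$, and balancing against $\varepsilon\sqrt{\log p \lor \log(1/\delta)}$ then does yield the $\delta^{1/3}$ rate in the statement. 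So your strategy is right; just replace the third-order Hessian estimate by the second-order one and redo the final optimization.
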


\begin{lemmaA}\label{lemmaA7} 
Let $Z_i = (Z_{i1},\ldots,Z_{ip})^\top$ be independent $\Rp$-valued random vectors for $1 \le i \le n$ with mean zero and the following properties: $c_5 \le n^{-1} \sum_{i=1}^n \ex[Z_{ij}^2] \le C_5$ and $\max_{k=1,2} \{ n^{-1} \sum_{i=1}^n \ex[|Z_{ij}|^{2+k} / D_n^k] \} + \ex[(\max_{1 \le j \le p} |Z_{ij}|/D_n)^4] \le 4$, where $c_5 > 0$, $C_5 > 0$ and $D_n \ge 1$ is such that $D_n^4 (\log (pn))^7 / n \le C_6 n^{-c_6}$ for some constants $c_6 > 0$ and $C_6 > 0$. Define
\[ W = (W_1,\ldots,W_p)^\top \quad \text{with} \quad W_j = \frac{1}{\sqrt{n}} \sum\limits_{i=1}^n Z_{ij} \]
and let $V = (V_1,\ldots,V_p)^\top$ be a Gaussian random vector with the same mean and covariance as $W$, that is, $\ex[V] = \ex[W] = 0$ and $\ex[V V^\top] = \ex[W W^\top]$. Then there exist constants $C>0$ and $K>0$ that depend only on $c_5$, $C_5$, $c_6$ and $C_6$ such that 
\[ \sup_{t \in \R} \biggl| \pr\Big( \max_{1 \le j \le p} W_j \le t \Big) - \pr\Big( \max_{1 \le j \le p} V_j \le t \Big) \biggr| \le C n^{-K}. \]
\end{lemmaA}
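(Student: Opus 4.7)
The plan is straightforward: Lemma \ref{lemmaA7} is essentially a restatement of Corollary 2.1 in \cite{Chernozhukov2013}, so my proof proposal is to verify that the hypotheses of that corollary are met under the assumptions of the lemma and then to extract the polynomial rate $n^{-K}$ from their bound.

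First, I would align the hypotheses. CCK's Corollary 2.1 requires a lower bound on the average second moment of each coordinate $Z_{ij}$, growth conditions on the averages of the third and fourth absolute moments in terms of an envelope constant $D_n$, and a fourth-moment bound on the envelope $\max_{1 \le j \le p} |Z_{ij}|$. The hypotheses of Lemma \ref{lemmaA7} are phrased precisely in these terms: the lower bound $c_5 \le n^{-1} \sum_{i=1}^n \ex[Z_{ij}^2]$ together with the upper bound $C_5$, the normalized third and fourth moments $n^{-1}\sum_i \ex[|Z_{ij}|^{2+k}/D_n^k]$ for $k=1,2$, and the envelope moment $\ex[(\max_j|Z_{ij}|/D_n)^4]$. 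After this identification, the cited corollary applies directly with the same $D_n$ and the numerical constant $4$ harmlessly absorbed.

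Second, I would invoke the corollary. It yields a Kolmogorov distance bound of the form $C\{D_n^2(\log(pn))^7/n\}^{1/6}$. Using $D_n \ge 1$ and the assumption $D_n^4(\log(pn))^7/n \le C_6 n^{-c_6}$, one has
\[ \frac{D_n^2 (\log(pn))^7}{n} \le \frac{D_n^4 (\log(pn))^7}{n} \le C_6 \, n^{-c_6}, \]
so the Kolmogorov distance is bounded by a constant multiple of $n^{-c_6/6}$. Setting $K = c_6/6$ and absorbing all remaining prefactors into $C$ yields the stated bound $C n^{-K}$, with $C$ and $K$ depending only on $c_5,C_5,c_6,C_6$, as required.

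The proof is therefore essentially a citation; there is no genuine obstacle. The only thing that requires a moment of care is bookkeeping between the exact formulation in \cite{Chernozhukov2013} (which states the result for a one-sided maximum, matching the statement of Lemma \ref{lemmaA7} verbatim) and the constants involved, but no re-derivation of the high-dimensional CLT is needed here.
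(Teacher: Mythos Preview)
Your proposal is correct and matches the paper's approach: Lemma \ref{lemmaA7} is stated in the paper explicitly as corresponding to Corollary 2.1 in \cite{Chernozhukov2013}, with no further proof given. Your additional step of extracting the polynomial rate via $D_n^2 \le D_n^4$ (since $D_n \ge 1$) and the assumed bound $D_n^4(\log(pn))^7/n \le C_6 n^{-c_6}$ is exactly the intended bookkeeping.
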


The final lemma of this section concerns the quantiles of Gaussian maxima.
\begin{lemmaA}\label{lemmaA8}
Let $(V_1,\ldots,V_p)^\top$ be a centered Gaussian random vector in $\Rp$ which fulfills the conditions of Lemma \ref{lemmaA5}. Moreover, let $\gamma_\alpha^V$ be the $(1-\alpha)$-quantile of $\max_{1 \le j \le p} V_j$, which is formally defined as $\gamma_\alpha^V = \inf \{ q: \pr( \max_{1 \le j \le p} V_j \le q) \ge 1 - \alpha \}$. It holds that 
\[ \pr \Big( \max_{1 \le j \le p} V_j \le \gamma_\alpha^V \Big) = 1 - \alpha \]
for every $\alpha \in (0,1)$.  
\end{lemmaA}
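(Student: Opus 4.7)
The plan is to reduce the claim to the continuity of the cumulative distribution function $F(t) := \pr(\max_{1 \le j \le p} V_j \le t)$. Once $F$ is known to be continuous, the statement $\pr(\max_j V_j \le \gamma_\alpha^V) = 1-\alpha$ is a standard consequence of the defining property of a quantile as an infimum.

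First, I would establish that $F$ has no atoms by invoking the anti-concentration bound of Lemma \ref{lemmaA5}. The conditions on $V$ in the present lemma are precisely those required by Lemma \ref{lemmaA5}, so for every $t_0 \in \R$ and every $\delta > 0$,
\[
\pr\bigl( \max_{1 \le j \le p} V_j = t_0 \bigr) \;\le\; \pr\bigl( \bigl| \max_{1 \le j \le p} V_j - t_0 \bigr| \le \delta \bigr) \;\le\; C\delta \sqrt{1 \lor \log(p/\delta)}.
\]
Letting $\delta \downarrow 0$ shows that $\pr(\max_j V_j = t_0) = 0$ for every $t_0$. Combined with the right-continuity of every CDF, this yields that $F$ is continuous on $\R$.

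Second, I would deduce the claim from continuity of $F$. By definition, $\gamma_\alpha^V = \inf\{ q : F(q) \ge 1-\alpha \}$. There is a sequence $q_n \downarrow \gamma_\alpha^V$ with $F(q_n) \ge 1-\alpha$; right-continuity gives $F(\gamma_\alpha^V) \ge 1-\alpha$. Conversely, for any $q < \gamma_\alpha^V$ the infimum definition forces $F(q) < 1-\alpha$, and letting $q \uparrow \gamma_\alpha^V$ together with the just-established continuity gives $F(\gamma_\alpha^V) \le 1-\alpha$. The two inequalities combine to the asserted equality.

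The only nontrivial ingredient is the continuity of $F$, and the work for that is already contained in Lemma \ref{lemmaA5}; there is no genuine obstacle beyond correctly connecting that anti-concentration bound to the absence of atoms of $\max_j V_j$.
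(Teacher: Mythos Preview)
Your proof is correct and takes essentially the same approach as the paper: both arguments use the anti-concentration bound of Lemma~\ref{lemmaA5} to rule out a jump of $F$ at the quantile, which forces $F(\gamma_\alpha^V)=1-\alpha$. The only cosmetic difference is that you first establish global continuity of $F$ and then invoke the standard quantile argument, whereas the paper packages the same step as a direct proof by contradiction (assuming $F(\gamma_\alpha^V)=1-\alpha+\eta$ and using Lemma~\ref{lemmaA5} to show $F(\gamma_\alpha^V-\delta)>1-\alpha$ for small $\delta$, contradicting the infimum definition).
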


\begin{remarkA}\label{remarklemmasA5678}
\textnormal{Note that Lemmas \ref{lemmaA5}--\ref{lemmaA8} continue to hold for maxima of the form $\max_{1 \le j \le p} |V_j|$, $\max_{1 \le j \le p} |V_j^\prime|$ and $\max_{1 \le j \le p} |W_j|$. This follows from the fact that $\max_{1 \le j \le p}|V_j| = \max_{1 \le j \le 2p} U_j$ with $U_j = V_j$ and $U_{p+j} = -V_j$ for $1 \le j \le p$.} 
\end{remarkA}

\subsection*{Proof of Theorem \ref{theo1}}

The proof proceeds in several steps. To start with, we formally relate the quantiles $\gamma_\alpha^*$, $\gamma_\alpha$ and $\gamma_\alpha^{\Wgauss}$ to each other.

\begin{propA}\label{propA1}
There exist positive constants $C$ and $K$ that depend only on the model parameters $\Theta$ such that \\[-1cm]
\begin{align*}
 & \gamma_{\alpha+\kappa_n}^* \le \gamma_\alpha^{\Wgauss} \le \gamma_{\alpha-\kappa_n}^* \\ 
 & \gamma_{\alpha+\kappa_n}^{\Wgauss} \le \gamma_\alpha^* \le \gamma_{\alpha-\kappa_n}^{\Wgauss} 
\end{align*}
for any $\alpha \in (\kappa_n,1-\kappa_n)$ with $\kappa_n = C n^{-K}$.
\end{propA}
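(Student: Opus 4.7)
The strategy is to invoke the high-dimensional Gaussian approximation of Lemma \ref{lemmaA7} (together with Remark \ref{remarklemmasA5678} for the absolute-value version) applied to the vector $\Wstar = (\Wstar_1,\dots,\Wstar_p)^\top$ with coordinates $\Wstar_j = n^{-1/2}\sum_{i=1}^n \Xvec_{ij}\eps_i$. Since $\Wgauss$ is by construction the Gaussian random vector with mean zero and the same covariance as $\Wstar$, Lemma \ref{lemmaA7} (plus Remark \ref{remarklemmasA5678}) will yield a Kolmogorov-distance bound of the form
\[ \sup_{t \in \R} \Bigl| \pr \bigl( \critstar \le t \bigr) - \pr \bigl( \critgauss \le t \bigr) \Bigr| \le C n^{-K} \]
for some constants $C,K$ depending only on $\Theta$. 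Setting $\kappa_n = C n^{-K}$ with this very $C$ and $K$, both double inequalities in the proposition then follow by a standard quantile-comparison argument.

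To apply Lemma \ref{lemmaA7}, I must verify the moment conditions with $Z_{ij}=\Xvec_{ij}\eps_i$. Independence across $i$ and the mean-zero property follow from \ref{C1} and \ref{C3}. For the variance bound, $\ex[Z_{ij}^2]=\ex[\Xvec_{ij}^2\sigma^2(\Xvec_i)]$, which is upper bounded by $C_\Xvec^2 C_\sigma^2$ and whose average over $i$ is lower bounded by $c_\sigma^2 c_\Xvec^2$ thanks to \ref{C2}--\ref{C3}. For the higher-moment conditions, the boundedness $|\Xvec_{ij}|\le C_\Xvec$ in \ref{C2} gives $|Z_{ij}|^{2+k}\le C_\Xvec^{2+k}|\eps_i|^{2+k}$ for $k=1,2$, and $\max_{1\le j\le p}|Z_{ij}|\le C_\Xvec|\eps_i|$, so all the required moments are controlled by $\ex[|\eps_i|^\theta]\le C_\theta$ via Jensen's inequality (using $\theta>4$ from \ref{C3}). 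Hence $D_n$ in Lemma \ref{lemmaA7} can be chosen as a constant depending only on $\Theta$, and the side condition $D_n^4(\log(pn))^7/n\le C_6 n^{-c_6}$ is satisfied because \ref{C4} ensures $\log p = O(\log n)$.

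With the Kolmogorov bound in hand, I translate it into quantile bounds as follows. For the upper bound $\gamma_\alpha^\Wgauss \le \gamma_{\alpha-\kappa_n}^*$, use the definition of $\gamma_{\alpha-\kappa_n}^*$ as an infimum (together with right-continuity of the CDF of $\critstar$) to obtain $\pr(\critstar \le \gamma_{\alpha-\kappa_n}^*)\ge 1-\alpha+\kappa_n$; then the Kolmogorov bound gives $\pr(\critgauss \le \gamma_{\alpha-\kappa_n}^*)\ge 1-\alpha+\kappa_n-\kappa_n=1-\alpha$, so $\gamma_\alpha^\Wgauss\le\gamma_{\alpha-\kappa_n}^*$ by definition of the quantile. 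For the lower bound $\gamma_{\alpha+\kappa_n}^*\le\gamma_\alpha^\Wgauss$, Lemma \ref{lemmaA8} (applicable because the coordinate variances of $\Wgauss$ are bounded above and below by the same constants used above) gives the \emph{equality} $\pr(\critgauss \le \gamma_\alpha^\Wgauss)=1-\alpha$, so the Kolmogorov bound yields $\pr(\critstar\le\gamma_\alpha^\Wgauss)\ge 1-\alpha-\kappa_n=1-(\alpha+\kappa_n)$, whence $\gamma_{\alpha+\kappa_n}^*\le\gamma_\alpha^\Wgauss$. The two inequalities in the second line are derived symmetrically by swapping the roles of $\critstar$ and $\critgauss$, again using Lemma \ref{lemmaA8} on the Gaussian side and right-continuity on the $\critstar$ side.

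The only delicate point is the verification of the moment and growth hypotheses of Lemma \ref{lemmaA7}; beyond this, the argument is essentially a clean translation from a Kolmogorov-distance bound into a two-sided quantile bound, and the range $\alpha\in(\kappa_n,1-\kappa_n)$ arises naturally because the perturbed indices $\alpha\pm\kappa_n$ must still lie in $(0,1)$ for the quantiles to be well-defined.
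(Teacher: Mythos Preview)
Your proposal is correct and follows essentially the same approach as the paper: verify the moment hypotheses of Lemma \ref{lemmaA7} for $Z_{ij}=\Xvec_{ij}\eps_i$ (with $D_n$ a constant depending only on $\Theta$), obtain the Kolmogorov-distance bound between the laws of $\critstar$ and $\critgauss$, and convert it into two-sided quantile inequalities via Lemma \ref{lemmaA8}. The paper streamlines the quantile-comparison step slightly by evaluating the Kolmogorov bound once at $t=\gamma_\alpha^{\Wgauss}$ and reading off both inequalities of the first line directly (then noting the second line is immediate), whereas you handle the four inequalities separately, sometimes evaluating at $\gamma_{\alpha-\kappa_n}^*$ and invoking right-continuity of the CDF of $\critstar$; but this is a cosmetic difference, not a substantive one.
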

\begin{proof}[\textnormal{\textbf{Proof of Proposition \ref{propA1}.}}] 
From \ref{C2} and \ref{C3}, it immediately follows that $0 < c_5 \le n^{-1} \sum_{i=1}^n \ex[(\Xvec_{ij} \eps_i)^2] \le C_5 < \infty$ and $\max_{k=1,2} \{ n^{-1} \sum_{i=1}^n \ex[|\Xvec_{ij} \eps_i|^{2+k}/D^k] \} + \ex[ (\max_{1 \le j \le p} |\Xvec_{ij} \eps_i| / D)^4] \le 4$ for some appropriately chosen constants $c_5$, $C_5$ and $D$ that depend only on the parameters $\Theta$. Since $D$ does not depend on $n$, it further holds that $D^4 (\log (pn))^7 / n \le C_6 n^{-c_6}$, where $c_6$ can be chosen to be any positive constant strictly smaller than $1$ provided that $C_6$ is picked sufficiently large. Hence, we can apply Lemma \ref{lemmaA7} to the terms $\critstar = \max_{1\le j \le p} |\Wstar_j|$ and  $\critgauss = \max_{1 \le j \le p} |\Wgauss_j|$ to obtain the following: there exist constants $C > 0$ and $K > 0$ depending only on $c_5$, $C_5$, $c_6$ and $C_6$ such that 
\begin{equation}\label{eq:propA1:1}
\sup_{t \in \R} \big| \pr \big( \critstar \le t \big) - \pr \big( \critgauss \le t \big) \big| < C n^{-K}. 
\end{equation}
By Lemma \ref{lemmaA8}, it holds that $\pr(\critgauss \le \gamma_\alpha^\Wgauss) = 1-\alpha$. Using this identity together with \eqref{eq:propA1:1} yields that
\[  1 - (\alpha + C n^{-K}) < \pr \big( \critstar \le \gamma_\alpha^\Wgauss \big) < 1 - (\alpha - C n^{-K}), \]
which in turn implies that $\gamma_{\alpha+C n^{-K}}^* \le \gamma_{\alpha}^{\Wgauss} \le \gamma_{\alpha-C n^{-K}}^*$ for any $\alpha \in (Cn^{-K}, 1 - Cn^{-K})$. This is the first statement of Proposition \ref{propA1}. The second statement is an immediate consequence thereof. 
\end{proof}

\begin{propA}\label{propA2}
There exist positive constants $C$ and $K$ that depend only on the model parameters $\Theta$ such that on the event $\mathcal{A}_n$, 
\begin{align*}
 & \gamma_{\alpha+\xi_n}^* \le \gamma_\alpha \le \gamma_{\alpha-\xi_n}^* \\ 
 & \gamma_{\alpha+\xi_n} \le \gamma_\alpha^* \le \gamma_{\alpha-\xi_n} 
\end{align*}
for any $\alpha \in (\xi_n,1-\xi_n)$ with $\xi_n = C n^{-K}$.
\end{propA}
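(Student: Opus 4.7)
The plan is to mirror the argument for Proposition \ref{propA1}, but now exploiting the fact that, conditional on $(\Xmat,\eps)$, the vector $\W(e)$ is \emph{itself} Gaussian with covariance $\Sigma$. This means we can compare $\crit(e)$ with $\critgauss$ via a purely Gaussian comparison (Lemma \ref{lemmaA6}) rather than via the non-Gaussian high-dimensional CLT (Lemma \ref{lemmaA7}), and the only price paid is the discrepancy $\Delta$ between $\Sigma$ and $\Sigma^*$, which is precisely what the event $\mathcal{A}_n$ controls.

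First, I would verify the hypotheses of Lemma \ref{lemmaA6} on $\mathcal{A}_n$. The diagonals $\Sigma_{jj}^* = n^{-1}\sum_i \ex[\Xvec_{ij}^2 \eps_i^2]$ are bounded away from $0$ and $\infty$ by constants depending only on $\Theta$, using \ref{C2}--\ref{C3}. On $\mathcal{A}_n$, $|\Sigma_{jj} - \Sigma_{jj}^*| \le \Delta \le B_\Delta \sqrt{\log(n \vee p)/n} = o(1)$, so the conditional covariance $\Sigma$ has diagonals in a uniform interval $[c_4, C_4]$ for all sufficiently large $n$. Invoking Lemma \ref{lemmaA6} (together with Remark \ref{remarklemmasA5678} to pass from $\max V_j$ to $\max |V_j|$) applied conditionally on $(\Xmat,\eps)$, with $V = \W(e)$ and $V' = \Wgauss$, I obtain on $\mathcal{A}_n$
\begin{equation*}
\sup_{t \in \R} \big| \pr_e( \crit(e) \le t ) - \pr( \critgauss \le t ) \big| \le C \Delta^{1/3} \{1 \lor \log p \lor \log(1/\Delta)\}^{1/3} (\log p)^{1/3} \le \tilde{\xi}_n,
\end{equation*}
with $\tilde{\xi}_n = C n^{-K}$ for some positive $C$ and $K$ depending only on $\Theta$ (using \ref{C4}, the bound $\Delta^{1/3} \le C n^{-1/6} (\log n)^{1/6}$ dominates any polylogarithmic factor).

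Next I would transfer this Kolmogorov bound into a quantile inequality. By Lemma \ref{lemmaA8}, the distribution of $\critgauss$ is continuous, so $\pr(\critgauss \le \gamma_\beta^\Wgauss) = 1-\beta$ for any $\beta$. Taking $t = \gamma_{\alpha-\tilde{\xi}_n}^\Wgauss$ in the display above gives $\pr_e(\crit(e) \le \gamma_{\alpha-\tilde{\xi}_n}^\Wgauss) \ge 1 - \alpha$, and therefore $\gamma_\alpha \le \gamma_{\alpha-\tilde{\xi}_n}^\Wgauss$ by definition of the conditional quantile. Similarly, for any $t < \gamma_{\alpha+\tilde{\xi}_n}^\Wgauss$, $\pr_e(\crit(e) \le t) < 1 - \alpha$, so $t < \gamma_\alpha$; letting $t \uparrow \gamma_{\alpha+\tilde{\xi}_n}^\Wgauss$ yields $\gamma_{\alpha+\tilde{\xi}_n}^\Wgauss \le \gamma_\alpha$. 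Thus, on $\mathcal{A}_n$,
\begin{equation*}
\gamma_{\alpha+\tilde{\xi}_n}^\Wgauss \le \gamma_\alpha \le \gamma_{\alpha-\tilde{\xi}_n}^\Wgauss.
\end{equation*}

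Finally, I would chain this with Proposition \ref{propA1}, which supplies $\gamma_{\beta+\kappa_n}^* \le \gamma_\beta^\Wgauss \le \gamma_{\beta-\kappa_n}^*$ for any admissible $\beta$. Applying it with $\beta = \alpha + \tilde{\xi}_n$ and $\beta = \alpha - \tilde{\xi}_n$ yields
\begin{equation*}
\gamma_{\alpha+\tilde{\xi}_n+\kappa_n}^* \le \gamma_{\alpha+\tilde{\xi}_n}^\Wgauss \le \gamma_\alpha \le \gamma_{\alpha-\tilde{\xi}_n}^\Wgauss \le \gamma_{\alpha-\tilde{\xi}_n-\kappa_n}^*.
\end{equation*}
Setting $\xi_n = \tilde{\xi}_n + \kappa_n = C n^{-K}$ with enlarged constants gives the first pair of inequalities, and the second pair follows by a trivial reindexing. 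The main obstacle I anticipate is the bookkeeping in step one: one must verify uniformly over $n$ that the diagonal-boundedness precondition of Lemma \ref{lemmaA6} holds on $\mathcal{A}_n$, and that after taking cube roots the bound $\Delta^{1/3} \cdot (\log p)^{2/3}$ still decays polynomially in $n$; everything else is essentially the standard anti-concentration-plus-comparison trick used in the proof of Proposition \ref{propA1}.
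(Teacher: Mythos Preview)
Your proposal is correct and follows essentially the same route as the paper's own proof: apply the Gaussian comparison Lemma \ref{lemmaA6} (via Remark \ref{remarklemmasA5678}) to bound $\sup_t |\pr_e(\crit(e)\le t)-\pr(\critgauss\le t)|$ on $\mathcal{A}_n$, convert this Kolmogorov bound into a quantile sandwich using Lemma \ref{lemmaA8}, and then chain with Proposition \ref{propA1} to pass from $\gamma_\alpha^\Wgauss$ to $\gamma_\alpha^*$, setting $\xi_n=\tilde{\xi}_n+\kappa_n$. The only cosmetic difference is that the paper checks the diagonal-boundedness hypothesis of Lemma \ref{lemmaA6} on $\Sigma^*$ (taking $V=\Wgauss$, $V'=\W(e)$), whereas you check it on $\Sigma$ (taking $V=\W(e)$, $V'=\Wgauss$) via the bound $|\Sigma_{jj}-\Sigma_{jj}^*|\le\Delta$ on $\mathcal{A}_n$; both are fine.
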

\begin{proof}[\textnormal{\textbf{Proof of Proposition \ref{propA2}.}}] 
Conditionally on $\Xmat$ and $\eps$, $\W(e)$ is a Gaussian random vector with the covariance matrix $\Sigma = (\Sigma_{jk}: 1 \le j,k \le p)$, where $\Sigma_{jk} = n^{-1} \sum\nolimits_{i=1}^n \Xvec_{ij} \Xvec_{ik} \eps_i^2$. Moreover, by definition, $\Wgauss$ is a Gaussian random vector with the covariance matrix $\Sigma^* = (\Sigma_{jk}^*: 1 \le j,k \le p)$, where $\Sigma_{jk}^* = n^{-1} \sum\nolimits_{i=1}^n \ex[\Xvec_{ij} \Xvec_{ik} \eps_i^2]$. It is straightforward to verify that under \ref{C2} and \ref{C3}, $c_4 \le \Sigma_{jj}^* \le C_4$ with some constants $0 < c_4 \le C_4 < \infty$ that depend only on the parameters $\Theta$. Hence, by Lemma \ref{lemmaA6}, the distribution of $\critgauss = \max_{1 \le j \le p} |\Wgauss_j|$ is close to the conditional distribution of $\crit(e) = \max_{1 \le j \le p} |\W_j(e)|$ in the following sense: 
\begin{equation}\label{eq:propA2:1}
\sup_{t \in \R} \big| \pr_e \big( \crit(e) \le t \big) - \pr \big( \critgauss \le t \big) \big| \le \pi(\Delta),  
\end{equation}
where $\pi(\Delta) = C \Delta^{1/3} \{ 1 \lor 2 \log (2p) \lor \log(1/\Delta) \}^{1/3} \{\log (2p)\}^{1/3}$ with $C$ depending only on $c_4$ and $C_4$. Notice that the logarithm in the expression $\pi(\Delta)$ takes the argument $2p$ rather than $p$ as in the formulation of Lemma \ref{lemmaA6}. This is due to the fact that $\crit(e)$ and $\critgauss$ are maxima over absolute values as discussed in Remark \ref{remarklemmasA5678}. From \eqref{eq:propA2:1}, it immediately follows that on the event $\mathcal{A}_n$, 
\begin{equation}\label{eq:propA2:2}
\sup_{t \in \R} \big| \pr_e \big( \crit(e) \le t \big) - \pr \big( \critgauss \le t \big) \big| < \pi_n,  
\end{equation}
where we let $\pi_n$ be such that $\pi(B_\Delta \sqrt{\log(n \lor p)/n}) < \pi_n \le C n^{-K}$ with some positive constants $C$ and $K$. With the help of \eqref{eq:propA2:2} and analogous arguments as in the proof of Proposition \ref{propA1}, we can infer that on the event $\mathcal{A}_n$, 
\begin{equation*}
\begin{split} 
 & \gamma_{\alpha+\pi_n} \le \gamma_{\alpha}^{\Wgauss} \le \gamma_{\alpha-\pi_n} \\
 & \gamma_{\alpha+\pi_n}^{\Wgauss} \le \gamma_{\alpha} \le \gamma_{\alpha-\pi_n}^{\Wgauss}. 
\end{split}
\end{equation*}
Combining this with the statement of Proposition \ref{propA1}, we finally get that on $\mathcal{A}_n$,
\begin{align*}
 & \gamma_{\alpha+\xi_n}^* \le \gamma_{\alpha} \le \gamma_{\alpha-\xi_n}^* \\
 & \gamma_{\alpha+\xi_n} \le \gamma_{\alpha}^* \le \gamma_{\alpha-\xi_n} 
\end{align*} 
with $\xi_n = \kappa_n + \pi_n$, which completes the proof.
\end{proof}

We now turn to the main part of the proof of Theorem \ref{theo1}. To make the notation more compact, we introduce the shorthands $\rho_n = B_\remainder (\log n)^2 \{ C_\beta n^{1/2-\delta_\beta} \}^{1/2} / n^{1/4}$ and $\psi_n = C_{\Xvec} C_\sigma [ \sqrt{2 \log (2 p)} + \sqrt{2\log (n \lor p)} ]$. Moreover, we let $\{ \nu_n \}$ be any null sequence with 
\begin{equation} 
\nu_n > \xi_n + C_\remainder n^{-K_\remainder} + C_7 \max \biggl\{ g_n \sqrt{1 \lor \log \biggl( \frac{2p}{g_n}\biggr)}, h_n \sqrt{1 \lor \log \biggl( \frac{2p}{h_n}\biggr)} \biggr\}, \label{eq:nu}
\end{equation}
where $g_n = \rho_n(1+\psi_n)$, $h_n = \{ \rho_n + \rho_n(1+\rho_n) \} \psi_n + \rho_n$, and $C_7$ is a positive constant that depends only on the parameters $\Theta$ and that is specified below.

Our aim is to prove that on the event $\mathcal{T}_{\lambda_{\alpha+\nu_n}^*} \cap \mathcal{A}_n$, $\lambda_{\alpha+\nu_n}^* \le \hat{\lambda}_\alpha \le \lambda_{\alpha-\nu_n}^*$. This is equivalent to the following statement: on the event $\mathcal{S}_{\gamma_{\alpha+\nu_n}^*} \cap \mathcal{A}_n$, it holds that $\gamma_{\alpha+\nu_n}^* \le \hat{\gamma}_\alpha \le \gamma_{\alpha-\nu_n}^*$. By definition of $\hat{\gamma}_\alpha$, 
\begin{align}
\hat{\pi}_\alpha(\gamma) \le \gamma \text{ for all } \gamma \ge \gamma_{\alpha-\nu_n}^* & \quad \Longrightarrow \quad \hat{\gamma}_\alpha \le \gamma_{\alpha-\nu_n}^* \label{eq:theo1:rmk1} \\
\hat{\pi}_{\alpha}(\gamma) > \gamma \text{ for some } \gamma > \gamma_{\alpha+\nu_n}^* & \quad \Longrightarrow \quad \hat{\gamma}_\alpha > \gamma_{\alpha+\nu_n}^*, \label{eq:theo1:rmk2}
\end{align}
and by definition of $\hat{\pi}_\alpha(\gamma)$, 
\begin{align}
\pr_e( \crithat(\gamma,e) \le \gamma) > 1-\alpha & \quad \Longrightarrow \quad \hat{\pi}_\alpha(\gamma) \le \gamma \label{eq:theo1:rmk3} \\
\pr_e( \crithat(\gamma,e) \le \gamma) < 1-\alpha & \quad \Longrightarrow \quad \hat{\pi}_\alpha(\gamma) > \gamma. \label{eq:theo1:rmk4}
\end{align}
Hence, it suffices to prove the following two statements: 
\begin{enumerate}[label=(\Roman*), leftmargin=0.8cm]
\item \label{claim1} On the event $\mathcal{S}_{\gamma_{\alpha+\nu_n}^*} \cap \mathcal{A}_n$, $\pr_e( \crithat(\gamma,e) \le \gamma) > 1-\alpha$ for all $\gamma \ge \gamma_{\alpha-\nu_n}^*$. 
\item \label{claim2} On the event $\mathcal{S}_{\gamma_{\alpha+\nu_n}^*} \cap \mathcal{A}_n$, $\pr_e( \crithat(\gamma,e) \le \gamma) < 1-\alpha$ for some $\gamma > \gamma_{\alpha+\nu_n}^*$. 
\end{enumerate}

\pagebreak

\begin{proof}[\textnormal{\textbf{Proof of \ref{claim1}.}}\nopunct]  
Suppose we are on the event $\mathcal{S}_{\gamma_{\alpha+\nu_n}^*} \cap \mathcal{A}_n$ and let $\gamma \ge \gamma_{\alpha-\nu_n}^*$.  Using the simple bound \eqref{eq:critbound}, we obtain that
\begin{align}
\pr_e \big( \crithat(\gamma,e) \le \gamma \big) 
 & \ge \pr_e \big( \crit(e) + \remainder(\gamma,e) \le \gamma \big) \nonumber \\
 & \ge \pr_e \big( \crit(e) + \remainder(\gamma,e) \le \gamma, \remainder(\gamma,e) \le \rho_n \sqrt{\gamma} \big) \nonumber \\
 & \ge \pr_e \big( \crit(e) + \rho_n \sqrt{\gamma} \le \gamma, \remainder(\gamma,e) \le \rho_n \sqrt{\gamma} \big) \nonumber \\
 & \ge \pr_e \big( \crit(e) \le \gamma - \rho_n \sqrt{\gamma} \big) - \pr_e \big( \remainder(\gamma,e) > \rho_n \sqrt{\gamma} \big) \nonumber \\
 & \ge \pr_e \big( \crit(e) \le \gamma - \rho_n \sqrt{\gamma} \big) - C_\remainder n^{-K_\remainder}, \label{eq:claim1:eq1}
\end{align} 
where the last inequality is by Lemma \ref{lemmaA2}. Since $\gamma - \rho_n \sqrt{\gamma} \ge \gamma - \rho_n (1 + \gamma) = (1- \rho_n) \gamma - \rho_n$ and $\gamma \ge \gamma_{\alpha-\nu_n}^*$, we further get that
\begin{align*}
\pr_e \big( \crit(e) \le \gamma - \rho_n \sqrt{\gamma} \big) 
 & \ge \pr_e \big( \crit(e) \le (1- \rho_n) \gamma - \rho_n \big) \\
 & \ge \pr_e \big( \crit(e) \le (1- \rho_n) \gamma_{\alpha-\nu_n}^* - \rho_n \big) \\
 & = \pr_e \big( \crit(e) \le \gamma_{\alpha-\nu_n}^* - \rho_n (1 + \gamma_{\alpha-\nu_n}^*) \big). 
\end{align*} 
Moreover, since $\gamma_{\alpha-\nu_n}^* \ge \gamma_{\alpha+\xi_n-\nu_n}$ on the event $\mathcal{A}_n$ by Proposition \ref{propA2} and $\gamma_{\alpha-\nu_n}^* \le \gamma_{\alpha-\kappa_n-\nu_n}^\Wgauss \le \psi_n$ by Proposition \ref{propA1} and Lemma \ref{lemmaA3}, it follows that 
\begin{align}
\pr_e \big( \crit(e) \le \gamma - \rho_n \sqrt{\gamma} \big) \nonumber 
 & \ge \pr_e \big( \crit(e) \le \gamma_{\alpha+\xi_n-\nu_n} -  \rho_n (1 + \psi_n) \big) \nonumber \\
 & = \pr_e \big( \crit(e) \le \gamma_{\alpha+\xi_n-\nu_n} \big) \nonumber \\ & \quad - \pr_e \big( \gamma_{\alpha+\xi_n-\nu_n} - \rho_n (1 + \psi_n) < \crit(e) \le \gamma_{\alpha+\xi_n-\nu_n} \big). \label{eq:claim1:eq2}
\end{align}
On the event $\mathcal{A}_n$, we have that $c_\Xvec^2 c_\sigma^2 - B_\Delta \sqrt{\log(n \lor p) / n} \le \ex_e[\W_j^2(e)] \le C_\Xvec^2 C_\sigma^2 + B_\Delta \sqrt{\log(n \lor p) / n}$. Hence, we can apply Lemma \ref{lemmaA5} to get that 
\begin{align}
\pr_e & \big( \gamma_{\alpha+\xi_n-\nu_n} -  \rho_n (1 + \psi_n) < \crit(e) \le \gamma_{\alpha+\xi_n-\nu_n} \big) \nonumber \\
 & \le \sup_{t \in \R} \pr_e \big( |\crit(e)-t| \le  \rho_n (1 + \psi_n) \big) \nonumber \\
 & \le C_7  \rho_n (1 + \psi_n) \sqrt{1 \lor \log \biggl( \frac{2p}{ \rho_n (1 + \psi_n)} \biggr)} \label{eq:claim1:eq3}
\end{align} 
with $C_7$ depending only on $\Theta$. By Lemma \ref{lemmaA8}, it further holds that $\pr_e (\crit(e) \le \gamma_{\alpha+\xi_n-\nu_n}) = 1 - (\alpha+\xi_n-\nu_n)$. Plugging this identity and \eqref{eq:claim1:eq3} into \eqref{eq:claim1:eq2} yields that 
\begin{equation*}
\pr_e \big( \crit(e) \le \gamma - \rho_n \sqrt{\gamma} \big) \ge 1 - \alpha + \nu_n - \xi_n - C_7 g_n \sqrt{1 \lor \log \biggl( \frac{2p}{g_n} \biggr)}
\end{equation*}
with $g_n = \rho_n (1 + \psi_n)$. Inserting this into \eqref{eq:claim1:eq1}, we finally arrive at 
\begin{equation*}
\pr_e \big( \crithat(\gamma,e) \le \gamma \big) \ge 1 - \alpha + \nu_n - \xi_n - C_\remainder n^{-K_\remainder} - C_7 g_n \sqrt{1 \lor \log \biggl( \frac{2p}{g_n} \biggr)} > 1 - \alpha,
\end{equation*}
where the last inequality follows from the definition of $\nu_n$ in \eqref{eq:nu}. 
\end{proof}

\begin{proof}[\textnormal{\textbf{Proof of \ref{claim2}.}}\nopunct]
Suppose we are on the event $\mathcal{S}_{\gamma_{\alpha+\nu_n}^*} \cap \mathcal{A}_n$ and let $\gamma = (1+\phi_n) \gamma_{\alpha+\nu_n}^*$, where $\{\phi_n\}$ is a null sequence of positive numbers with $\phi_n \le Cn^{-K}$ for some constants $C$ and $K$. For convenience, we set $\phi_n = \rho_n$, but we could also work with any other choice of $\phi_n$ that satisfies the conditions mentioned in the previous sentence. With the bound \eqref{eq:critbound}, we get that
\begin{align}
\pr_e \big( \crithat(\gamma,e) > \gamma \big) 
 & \ge \pr_e \big( \crit(e) - \remainder(\gamma,e) > \gamma \big) \nonumber \\
 & \ge \pr_e \big( \crit(e) - \remainder(\gamma,e) > \gamma, \remainder(\gamma,e) \le \rho_n \sqrt{\gamma} \big) \nonumber \\
 & \ge \pr_e \big( \crit(e) - \rho_n \sqrt{\gamma} > \gamma, \remainder(\gamma,e) \le \rho_n \sqrt{\gamma} \big) \nonumber \\
 & \ge \pr_e \big( \crit(e) > \gamma + \rho_n \sqrt{\gamma} \big) - \pr_e \big( \remainder(\gamma,e) > \rho_n \sqrt{\gamma} \big) \nonumber \\
 & \ge \pr_e \big( \crit(e) > \gamma + \rho_n \sqrt{\gamma} \big) - C_\remainder n^{-K_\remainder}, \label{eq:claim2:eq1}
\end{align} 
where the final inequality is a direct consequence of Lemma \ref{lemmaA2}. Analogous arguments as in the proof of \ref{claim1} yield that 
\begin{align*}
 & \pr_e \big( \crit(e) > \gamma + \rho_n \sqrt{\gamma} \big) \\
 & \ge \pr_e \big( \crit(e) > \gamma + \rho_n (1 + \gamma) \big) \\ 
 & = \pr_e \big( \crit(e) > (1+\phi_n) \gamma_{\alpha+\nu_n}^* + \rho_n (1 + (1+\phi_n) \gamma_{\alpha+\nu_n}^*) \big) \\
 & = \pr_e \big( \crit(e) > \gamma_{\alpha+\nu_n}^* + \{ \phi_n + \rho_n (1+\phi_n) \} \gamma_{\alpha+\nu_n}^* + \rho_n \big) \\
 & \ge \pr_e \big( \crit(e) > \gamma_{\alpha-\xi_n+\nu_n} + h_n \big) \\
 & = \pr_e \big( \crit(e) > \gamma_{\alpha-\xi_n+\nu_n} \big) - \pr_e \big( \gamma_{\alpha-\xi_n+\nu_n} < \crit(e) \le \gamma_{\alpha-\xi_n+\nu_n} + h_n \big) \\
 & \ge \alpha + \nu_n - \xi_n - C_7 h_n \sqrt{1 \lor \log (2p / h_n)},
\end{align*}
where $h_n = \{ \phi_n + \rho_n (1+\phi_n) \} \psi_n + \rho_n = \{ \rho_n + \rho_n (1+\rho_n) \} \psi_n + \rho_n$ under the assumption that $\phi_n = \rho_n$. Inserting this into \eqref{eq:claim2:eq1}, we arrive that 
\begin{equation*}
\pr_e \big( \crithat(\gamma,e) > \gamma \big) \ge \alpha + \nu_n - \xi_n - C_\remainder n^{-K_\remainder} - C_7 h_n \sqrt{1 \lor \log \biggl( \frac{2p}{h_n} \biggr)} > \alpha, 
\end{equation*}
which is equivalent to the statement that $\pr_e \big( \crithat(\gamma,e) \le \gamma \big) < 1-\alpha$. 
\end{proof}

\subsection*{Proof of Proposition \ref{prop:tuning:1}}

From \eqref{eq:predictionbound}, it follows that $\normtwos{\Xmat (\beta^* - \hat{\beta}_\lambda)}/n \leq 2 \lambda \normone{\beta^*}$ for every $\lambda \ge \lambda_{\alpha+\nu_n}^*$ on the event $\mathcal{T}_{\lambda_{\alpha+\nu_n}^*}$. Moreover, by Theorem \ref{theo1}, $\lambda_{\alpha + \nu_n}^* \le \hat{\lambda}_{\alpha} \le \lambda_{\alpha-\nu_n}^*$ on the event $\mathcal{T}_{\lambda_{\alpha+\nu_n}^*} \cap \mathcal{A}_n$. Hence, we can infer that 
\begin{equation*}
\frac{1}{n} \normtwos{\Xmat (\beta^* - \hat{\beta}_{\hat{\lambda}_\alpha})} \leq 2 \hat{\lambda}_\alpha \normone{\beta^*} \le 2 \lambda_{\alpha-\nu_n}^* \normone{\beta^*} 
\end{equation*}
on the event $\mathcal{T}_{\lambda_{\alpha+\nu_n}^*} \cap \mathcal{A}_n$, which occurs with probability $\pr(\mathcal{T}_{\lambda_{\alpha+\nu_n}^*} \cap \mathcal{A}_n) \ge 1 - \alpha - \nu_n - C_1 n^{-K_1}$.

\subsection*{Proof of Proposition \ref{prop:tuning:2}}

From Lemma \ref{lemma:oracleinequality}, we know that on the event $\mathcal{T}_{\lambda_{\alpha+\nu_n}^*} \cap \mathcal{B}_n$, $\normsup{\hat{\beta}_\lambda - \beta^*} \le \kappa \lambda$ for every $\lambda \ge (1+\delta) \lambda_{\alpha+\nu_n}^*$.  Moreover, on the event $\mathcal{T}_{\lambda_{\alpha+\nu_n}^*} \cap \mathcal{A}_n$, it holds that $\lambda_{\alpha + \nu_n}^* \le \hat{\lambda}_{\alpha} \le \lambda_{\alpha-\nu_n}^*$ by Theorem \ref{theo1}. Hence, we can infer that 
\begin{equation*}
\normsup{\hat{\beta}_{(1+\delta)\hat{\lambda}_\alpha} - \beta^*} \le (1+\delta) \kappa \hat{\lambda}_\alpha \le (1+\delta) \kappa \lambda_{\alpha-\nu_n}^*
\end{equation*}
on the event $\mathcal{T}_{\lambda_{\alpha+\nu_n}^*} \cap \mathcal{A}_n \cap \mathcal{B}_n$, which occurs with probability $\pr(\mathcal{T}_{\lambda_{\alpha+\nu_n}^*} \cap \mathcal{A}_n \cap \mathcal{B}_n) \ge 1 - \alpha - \pr(\mathcal{B}_n^\complement) - \nu_n - C_1 n^{-K_1}$.

\subsection*{Proof of Proposition \ref{prop:testing:1}}

For the proof, we reformulate the test of $H_0$ as follows: slightly abusing notation, we redefine the test statistic as $T = \normsup{\Xmat^\top \Y}/\sqrt{n}$. As above, we further let $\gamma_\alpha^* = \sqrt{n} \lambda_\alpha^*/2$ be the $(1-\alpha)$-quantile of $\normsup{\Xmat^\top \eps}/\sqrt{n}$ and define the estimator $\hat{\gamma}_\alpha = \sqrt{n} \hat{\lambda}_\alpha/2$. Our test of $H_0$ can now be expressed as follows: reject $H_0$ at the significance level $\alpha$ if $T > \hat{\gamma}_\alpha$.

We first prove that $\pr(T \le \hat{\gamma}_\alpha) \ge 1 - \alpha + o(1)$ under $H_0$. With the help of Theorem \ref{theo1}, we obtain that under $H_0$,  
\begin{align*}
\pr (T \le \hat{\gamma}_\alpha) 
 & \ge \pr (T \le \hat{\gamma}_\alpha, \mathcal{T}_{\lambda_{\alpha+\nu_n}^*} \cap \mathcal{A}_n) 
   \ge \pr (T \le \gamma_{\alpha+\nu_n}^*, \mathcal{T}_{\lambda_{\alpha+\nu_n}^*} \cap \mathcal{A}_n) \\
 & = \pr (\mathcal{T}_{\lambda_{\alpha+\nu_n}^*} \cap \mathcal{A}_n) \ge 1 - \alpha - \nu_n - C_1 n^{-K_1},
\end{align*}
where the equality in the last line follows from the fact that the two events $\mathcal{T}_{\lambda_{\alpha+\nu_n}^*}$ and $\{T \le \gamma_{\alpha+\nu_n}^*\}$ are identical. As a result, we get that $\pr (T \le \hat{\gamma}_\alpha) \ge 1 - \alpha + o(1)$ under $H_0$.

We next prove that $\pr(T > \hat{\gamma}_\alpha) = 1 - o(1)$ under any alternative $H_1: \beta^* \ne 0$ that fulfills the conditions of Proposition \ref{prop:testing:1}. Suppose we are on such an alternative and let $\{ \alpha_n \}$ be a null sequence with $2 \nu_n + (n \lor p)^{-1} < \alpha_n < \alpha$. It holds that
\begin{align}
\pr (T > \hat{\gamma}_\alpha) \ge \pr (T > \hat{\gamma}_{\alpha_n}) 
 & = \pr \Big( \frac{1}{\sqrt{n}} \normsup{\Xmat^\top\Xmat \beta^* + \Xmat^\top \eps} > \hat{\gamma}_{\alpha_n} \Big) \nonumber \\
 & \ge \pr \Big( \frac{\normsup{\Xmat^\top\Xmat \beta^*}}{\sqrt{n}} > \hat{\gamma}_{\alpha_n} + \frac{\normsup{\Xmat^\top \eps}}{\sqrt{n}} \Big), \label{eq:prop:testing:1:alt1} 
\end{align}
where the last line is due to the triangle inequality. Applying Theorem \ref{theo1}, Proposition \ref{propA1} and Lemma \ref{lemmaA3}, the term $\hat{\gamma}_{\alpha_n}$ can be bounded by 
\begin{equation}\label{eq:prop:testing:1:alt2} 
\hat{\gamma}_{\alpha_n} \le \gamma_{\alpha_n-\nu_n}^* \le \gamma_{\alpha_n-\nu_n-\kappa_n}^\Wgauss \le \psi_n
\end{equation}
on the event $\mathcal{T}_{\lambda_{\alpha_n+\nu_n}^*} \cap \mathcal{A}_n$, where $\psi_n = C_{\Xvec} C_\sigma [ \sqrt{2 \log (2 p)} + \sqrt{2\log (n \lor p)} ]$. Moreover, for the term $\critstar = \normsup{\Xmat^\top\eps}/\sqrt{n}$, we have that
\begin{align} 
\pr \Big( \frac{\normsup{\Xmat^\top \eps}}{\sqrt{n}} \le \psi_n \Big) = \pr (\critstar \le \psi_n) 
 & = \pr (\critgauss \le \psi_n) + \big[ \pr (\critstar \le \psi_n) - \pr (\critgauss \le \psi_n) \big] \nonumber \\
 & \ge \pr (\critgauss \le \psi_n) - Cn^{-K} \nonumber \\
 & \ge \pr (\critgauss \le \gamma_{2/(n \lor p)}^\Wgauss) - Cn^{-K} \nonumber \\
 & = 1 - \frac{2}{n \lor p} - Cn^{-K} = 1 - o(1) \label{eq:prop:testing:1:alt3}
\end{align}
with some positive constants $C$ and $K$, where the first inequality follows from \eqref{eq:propA1:1} and the second one from Lemma \ref{lemmaA3}. Using \eqref{eq:prop:testing:1:alt2} and \eqref{eq:prop:testing:1:alt3} in the right-hand side of equation \eqref{eq:prop:testing:1:alt1}, we can infer that 
\begin{align}
 & \pr \Big( \frac{\normsup{\Xmat^\top\Xmat \beta^*}}{\sqrt{n}} > \hat{\gamma}_{\alpha_n} + \frac{\normsup{\Xmat^\top \eps}}{\sqrt{n}} \Big) \nonumber \\
 & \ge \pr \Big( \frac{\normsup{\Xmat^\top\Xmat \beta^*}}{\sqrt{n}} > \hat{\gamma}_{\alpha_n} + \frac{\normsup{\Xmat^\top \eps}}{\sqrt{n}}, \, \frac{\normsup{\Xmat^\top \eps}}{\sqrt{n}} \le \psi_n, \, \mathcal{T}_{\lambda_{\alpha_n+\nu_n}^*} \cap \mathcal{A}_n \Big) \nonumber \\
 & \ge \pr \Big( \frac{\normsup{\Xmat^\top\Xmat \beta^*}}{\sqrt{n}} > 2 \psi_n, \, \frac{\normsup{\Xmat^\top \eps}}{\sqrt{n}} \le \psi_n, \, \mathcal{T}_{\lambda_{\alpha_n+\nu_n}^*} \cap \mathcal{A}_n \Big) \nonumber \\
 & = \pr \Big( \frac{\normsup{\Xmat^\top\Xmat \beta^*}}{\sqrt{n}} > 2 \psi_n \Big) - o(1) = 1 - o(1), \label{eq:prop:testing:1:alt4} 
\end{align}
the last equality following from the assumption that $\pr(\normsup{\Xmat^\top\Xmat\beta^*}/n \linebreak \ge c \sqrt{\log(n \lor p)/n}) \rightarrow 1$ for every $c > 0$. Combining \eqref{eq:prop:testing:1:alt4} with \eqref{eq:prop:testing:1:alt1} yields that $\pr (T > \hat{\gamma}_\alpha) = 1 - o(1)$ under the alternative, which completes the proof.

\subsection*{Proof of Proposition \ref{prop:testing:2}}

Similarly as in the proof of Proposition \ref{prop:testing:1}, we reformulate the test of $H_{0,B}: \beta^*_B = 0$. Slightly abusing notation, we redefine the test statistic as 
\[ T_B = \frac{\normsup{(\mathcal{P} \Xmat_B)^\top \mathcal{P} \Y}}{\sqrt{n}}. \] 
Moreover, we let $\gamma_{\alpha,B}^* = \sqrt{n} \lambda_{\alpha,B}^*/2$ be the $(1-\alpha)$-quantile of $\normsup{(\mathcal{P} \Xmat_B)^\top \uvec}/\sqrt{n}$ and set $\hat{\gamma}_{\alpha,B} = \sqrt{n} \hat{\lambda}_{\alpha,B}/2$. Our test of $H_{0,B}$ can now be formulated as follows: reject $H_{0,B}$ at the significance level $\alpha$ if $T_B > \hat{\gamma}_{\alpha,B}$. This test has the same structure as the test of the simpler hypothesis $H_0: \beta^* = 0$. The only difference is that it is based on the transformed model $\mathcal{P} \Y = \mathcal{P} \Xmat_B \beta^*_B + \uvec$ rather than on the original model $\Y = \Xmat \beta^* + \eps$. Even though a minor detail at first sight, this change of model brings about some technical complications. The issue is that the entries of the noise vector $\uvec$ are in general not independent, whereas those of $\eps$ are. Similarly, the rows of the design matrix $\mathcal{P} \Xmat_B$ are in general not independent in contrast to those of $\Xmat$. As a consequence, the central result of our theory, Theorem \ref{theo1}, cannot be applied to the estimator $\hat{\gamma}_{\alpha,B}$ directly. To adapt it to the present situation, we define the event 
\[ \mathcal{S}_\gamma^\prime = \Big\{ \frac{1}{\sqrt{n}} \normsup{(\mathcal{P} \Xmat_B)^\top \uvec} \le \gamma \Big\} \]
and let $C_1^\prime$, $K_1^\prime$,  $C_2^\prime$ and $K_2^\prime$ be positive constants that depend only on the model parameters $\Theta^\prime = \Theta \cup \{ c_\vartheta, |A|, \normtwo{\Psi_A^{-1}} \}$. With this notation at hand, we can prove the following. 
\begin{propA}\label{propA3}
There exist an event $\mathcal{A}_n^\prime$ with $\pr(\mathcal{A}_n^\prime) \ge 1 - C_1^\prime n^{-K_1^\prime}$ for some positive constants $C_1^\prime$ and $K_1^\prime$ and a sequence of real numbers $\nu_n^\prime$ with $0 < \nu_n^\prime \le C_2^\prime n^{-K_2^\prime}$ for some positive constants $C_2^\prime$ and $K_2^\prime$ such that the following holds: on the event $\mathcal{S}_{\gamma_{\alpha+\nu_n^\prime,B}^*}^\prime \cap \mathcal{A}_n^\prime$, 
\[ \gamma_{\alpha + \nu_n^\prime,B}^* \le \hat{\gamma}_{\alpha,B} \le \gamma_{\alpha-\nu_n^\prime,B}^* \]
for any $\alpha \in (a_n, 1-a_n)$ with $a_n = 2 \nu_n^\prime + (n \lor p)^{-1}$. 
\end{propA}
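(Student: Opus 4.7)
The plan is to replicate the proof of Theorem \ref{theo1} in the transformed model $\mathcal{P}\Y = \mathcal{P}\Xmat_B \beta_B^* + \uvec$, after first reducing it to one with genuinely independent summands. Since $\mathcal{P}$ is a symmetric idempotent, we have $(\mathcal{P}\Xmat_B)^\top \uvec = \Xmat_B^\top \mathcal{P}\eps$. Define the ``decorrelated'' covariates $\tilde{X}_{ij} = \Xvec_{ij} - \Xvec_{i,A}^\top \vartheta^{(j)}$ for $j \in B$ and the idealized vector
\begin{equation*}
\widetilde{W}_j = \frac{1}{\sqrt{n}} \sum_{i=1}^n \tilde{X}_{ij} \eps_i, \qquad j \in B.
\end{equation*}
By the defining orthogonality of $\vartheta^{(j)}$, independence from \ref{C1}, and the assumption $\min_{j \in B} \ex[(\Xvec_{ij} - \Xvec_{i,A}^\top \vartheta^{(j)})^2] \ge c_\vartheta > 0$ together with \ref{C3}, the summands of $\widetilde{W}_j$ are independent across $i$, mean zero, and have variance bounded below by $c_\vartheta c_\sigma^2$. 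This is the exact analog of $\Wstar$ within the transformed model.

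The first step is to prove that $\max_{j \in B} | \Wstar_{B,j} - \widetilde{W}_j | \le C n^{-K}$ with probability at least $1 - C n^{-K}$ for some constants $C,K > 0$, where $\Wstar_{B,j}$ denotes the $j$-th coordinate of $\Xmat_B^\top \mathcal{P} \eps / \sqrt{n}$. The difference expands as
\begin{equation*}
\frac{1}{\sqrt{n}} \sum_{i=1}^n \Xvec_{i,A}^\top \Bigl[\vartheta^{(j)} - \Bigl(\frac{\Xmat_A^\top \Xmat_A}{n}\Bigr)^{-1} \frac{\Xmat_A^\top \eps}{n}\Bigr] \eps_i,
\end{equation*}
which after adding and subtracting the population version $\vartheta^{(j)} = \Psi_A^{-1} \ex[\Xvec_{i,A} \Xvec_{ij}]$ factors into (i) the fluctuation of $(\Xmat_A^\top \Xmat_A / n)^{-1}$ around $\Psi_A^{-1}$, (ii) the size of $\Xmat_A^\top \eps / \sqrt{n}$, and (iii) population quantities that are bounded uniformly in $j$ by Cauchy--Schwarz combined with the $c_\vartheta$-lower bound. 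Because $|A|$ is fixed and $\Psi_A$ is positive definite, factors (i) and (ii) are of polynomial order in $n^{-1/2}$ by standard concentration (boundedness of $\Xvec_{i,A}$ via \ref{C2} and the $\theta > 4$ moments of $\eps_i$ via \ref{C3}), and a union bound over $p \le C_r n^r$ coordinates of $B$ controls the supremum at the claimed rate.

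Given this approximation, the Gaussian comparison bounds of Lemmas \ref{lemmaA6} and \ref{lemmaA7} apply directly to $\widetilde{W}$, yielding analogs of Propositions \ref{propA1} and \ref{propA2} for the quantiles of $\max_{j \in B} | \widetilde{W}_j |$; combined with Step~1 and the anti-concentration inequality of Lemma \ref{lemmaA5}, these transfer to the quantiles $\gamma_{\alpha,B}^*$ of $\max_{j \in B} | \Wstar_{B,j} |$ at the cost of only a polynomially small loss in $\alpha$. In parallel, an analog of Lemma \ref{lemmaA2} must be proved for the lasso remainder in the transformed model: applying the basic inequality and the prediction bound \eqref{eq:predictionbound} to $\mathcal{P}\Y = \mathcal{P}\Xmat_B \beta_B^* + \uvec$, and using that $(\mathcal{P}\Xmat_B)_{ij}$ is bounded (inherited from \ref{C2}, fixed $|A|$, and the bound on $(\Xmat_A^\top \Xmat_A / n)^{-1}$), yields the same $O(\sqrt{\gamma}\, (\log n)^2 / n^{1/4})$ bound on the appropriate remainder $\remainder_B(\gamma,e)$ for $\gamma \ge \gamma_{\alpha + \nu_n', B}^*$. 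An analog of Lemma \ref{lemmaA1} for the associated covariance discrepancy $\Delta_B$ follows similarly, by comparing $(\mathcal{P}\Xmat_B)_{ij} (\mathcal{P}\Xmat_B)_{ik} u_i^2$ with $\tilde{X}_{ij} \tilde{X}_{ik} \eps_i^2$. With these analogs in place, the two-sided argument establishing claims~(I) and~(II) in the proof of Theorem \ref{theo1} transfers verbatim, with a null sequence $\nu_n'$ defined exactly as in \eqref{eq:nu} using the new remainder rate and constants depending only on $\Theta' = \Theta \cup \{c_\vartheta, |A|, \normtwo{\Psi_A^{-1}}\}$.

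The principal obstacle is the uniform approximation in Step~1: one needs a polynomial rate on $\max_{j \in B} | \Wstar_{B,j} - \widetilde{W}_j |$, not merely an $o_\pr(1)$ rate, in order to absorb the deviation into $\nu_n'$ at the scale demanded by \eqref{eq:nu}. This requires careful moment-based bounds on $(\Xmat_A^\top \Xmat_A / n)^{-1} - \Psi_A^{-1}$ and on $\Xmat_A^\top \eps / \sqrt{n}$, together with control of their quadratic interaction with the noise after multiplication and summation. Once this uniform reduction to the independent reference process $\widetilde{W}$ is secured, every other step of the argument is a mechanical port of the corresponding step in the proof of Theorem \ref{theo1}.
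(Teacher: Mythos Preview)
Your approach is essentially the paper's: introduce the population-projected covariates $\tilde X_{ij}=Z_{ij}$ to recover independent summands, compare $\critstar_B$ to the idealized process $\critstarstar_B=\max_j|\widetilde W_j|$, and then rerun the Theorem~\ref{theo1} argument. The overall architecture is correct.

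Two points need tightening. First, the displayed expansion of $W^*_{B,j}-\widetilde W_j$ is miswritten: the bracket should be $\bigl[\vartheta^{(j)}-\hat\Psi_A^{-1}\hat\psi_{j,A}\bigr]$ (the empirical regression coefficients of $X_{\cdot,j}$ on $\Xmat_A$), not something involving $\Xmat_A^\top\eps/n$. You do correctly identify the two controlling factors---fluctuation of $\hat\Psi_A^{-1}$ and size of $n^{-1/2}\Xmat_A^\top\eps$---so this is a slip rather than a conceptual error.

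Second, and more substantively, your remainder analysis is incomplete. The criterion function $\crithat_B(\gamma,e)$ is built from $\hat Z_{ij}\hat u_{\lambda,i}e_i$, while the reference bootstrap process $\crit_B(e)$ you need (to inherit independence) is built from $Z_{ij}\eps_i e_i$. Passing from one to the other produces \emph{three} remainder pieces, not one: the lasso piece $R_{B,1}$ (your ``analog of Lemma~\ref{lemmaA2}''), a piece $R_{B,2}$ from $\hat Z_{ij}-Z_{ij}$, and a piece $R_{B,3}$ from $u_i-\eps_i$. The latter two are not covered by the prediction bound; they require separate control via the concentration of $\hat\Psi_A^{-1}\hat\psi_{j,A}$ around $\Psi_A^{-1}\psi_{j,A}$ and of $n^{-1}\Xmat_A^\top\eps$, uniformly in $j\in B$. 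This is exactly what the paper does (its bounds on $R_{B,2}$ and $R_{B,3}$), and it is where the constants in $\Theta'=\Theta\cup\{c_\vartheta,|A|,\normtwo{\Psi_A^{-1}}\}$ enter most directly. Once you add these two remainder terms and bound them at a polynomial rate, the rest of your port of claims~(I) and~(II) goes through as you describe.
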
 
\noindent The overall strategy to prove Proposition \ref{propA3} is the same as the one for Theorem \ref{theo1}. There are some complications, however, that stem from the fact that the entries of $\uvec$ and the rows of $\mathcal{P} \Xmat_B$ are not independent. We provide the proof of Proposition \ref{propA3} in the Supplementary Material, where we highlight the main differences to the proof of Theorem \ref{theo1}.

With Proposition \ref{propA3} in place, the proof of Proposition \ref{prop:testing:2} proceeds analogously to the one of Proposition \ref{prop:testing:1}. For this reason, we only give a brief summary of the main steps. First suppose that the null hypothesis $H_{0,B}$ holds true. With the help of Proposition \ref{propA3}, we get that  
\begin{align*}
\pr (T_B \le \hat{\gamma}_{\alpha,B}) 
 & \ge \pr \big( T_B \le \hat{\gamma}_{\alpha,B}, \mathcal{S}_{\gamma_{\alpha+\nu_n^\prime,B}^*}^\prime \cap \mathcal{A}_n^\prime \big) \\
 & \ge \pr \big( T_B \le \gamma_{\alpha+\nu_n^\prime,B}^*, \mathcal{S}_{\gamma_{\alpha+\nu_n^\prime,B}^*}^\prime \cap \mathcal{A}_n^\prime \big) \\
 & = \pr \big( \mathcal{S}_{\gamma_{\alpha+\nu_n^\prime,B}^*}^\prime \cap \mathcal{A}_n^\prime \big) \ge 1 - \alpha - \nu_n^\prime - C_1^\prime n^{-K_1^\prime},
\end{align*}
which implies that $\pr (T_B \le \hat{\gamma}_{\alpha,B}) \ge 1 - \alpha + o(1)$ under $H_{0,B}$.

Next suppose we are on an alternative $H_{1,B}: \beta^*_B \ne 0$ that satisfies the conditions of Proposition \ref{prop:testing:2} and let $\{ \alpha_n \}$ be a null sequence with $2 \nu_n^\prime + (n \lor p)^{-1} < \alpha_n < \alpha$. Similarly as in the proof of Proposition \ref{prop:testing:1}, we can establish the bound 
\begin{equation}\label{eq:prop:testing:2:alt1} 
\pr (T_B > \hat{\gamma}_{\alpha,B}) \ge \pr \Big( \frac{\normsup{(\mathcal{P}\Xmat_B)^\top \mathcal{P} \Xmat_B \beta^*_B}}{\sqrt{n}} > \hat{\gamma}_{\alpha_n,B} + \frac{\normsup{(\mathcal{P}\Xmat_B)^\top \uvec}}{\sqrt{n}} \Big) 
\end{equation}
and verify the following: (i) $\hat{\gamma}_{\alpha_n,B} \le \psi_n^\prime$ on the event $\mathcal{S}_{\gamma_{\alpha_n+\nu_n^\prime,B}^*}^\prime \cap \mathcal{A}_n^\prime$, where $\psi_n^\prime = C [ \sqrt{2 \log (2 p)} + \sqrt{2\log (n \lor p)} ]$ with some sufficiently large constant $C$ that depends only on $\Theta^\prime$, and (ii) $\pr( \normsup{(\mathcal{P} \Xmat_B)^\top \uvec} / \sqrt{n} \le \psi_n^\prime) = 1 - o(1)$. Applying (i) and (ii) to the right-hand side of \eqref{eq:prop:testing:2:alt1} yields that  
\begin{align*}
\pr \Big( & \frac{\normsup{(\mathcal{P}\Xmat_B)^\top \mathcal{P} \Xmat_B \beta^*_B}}{\sqrt{n}} > \hat{\gamma}_{\alpha_n,B} + \frac{\normsup{(\mathcal{P}\Xmat_B)^\top \uvec}}{\sqrt{n}} \Big) \\*
 & \ge \pr \Big( \frac{\normsup{(\mathcal{P} \Xmat_B)^\top \mathcal{P} \Xmat_B \beta^*_B}}{\sqrt{n}} > 2 \psi_n^\prime \Big) - o(1) = 1 - o(1), 
\end{align*}
which in turn implies that $\pr (T_B > \hat{\gamma}_{\alpha,B}) = 1 - o(1)$ under the alternative.

\bibliographystyle{ims}
{\small
\setlength{\bibsep}{0.55em}
\bibliography{bibliography}}

\newpage
\allowdisplaybreaks[3]

\headingsupp{Supplementary Material to}{``Estimating the Lasso's Effective Noise''}
\authors{Johannes Lederer}{Ruhr-University Bochum}{Michael Vogt}{Ulm University}
\setlength{\parindent}{15pt}
\def\theequation{S.\arabic{equation}}
\def\thesection{S.\arabic{section}}
\def\thefigure{S.\arabic{figure}}
\def\thetable{S.\arabic{table}}
\setcounter{equation}{0}
\setcounter{section}{0}
\setcounter{figure}{0}
\setcounter{table}{0}

\section{Technical details}

In what follows, we provide the technical details and proofs that are omitted in the paper.

\subsection*{Proof of Lemma \ref{lemma:oracleinequality}}

To show the result, we slightly generalize the proof of Lemma 5 in \cite{Chichignoud16}. Standard arguments from the lasso literature \citep{Buhlmann11} show that on the event $\mathcal{T}_\lambda$, 
\[ \normone{\hat{\beta}_{\lambda^\prime,S^\complement} - \beta_{S^\complement}^*} \le \frac{2+\delta}{\delta} \normone{\hat{\beta}_{\lambda^\prime,S} - \beta_S^*}, \]
that is, $\hat{\beta}_{\lambda^\prime} - \beta^* \in \mathbb{C}_\delta(S)$ for every $\lambda^\prime \ge (1+\delta)\lambda$. Under the $\ell_\infty$-restricted eigenvalue condition \eqref{eq:RE}, we thus obtain that on $\mathcal{T}_\lambda$, 
\begin{equation}\label{eq:lemma:oracleinequality:1}
\phi \normsup{\hat{\beta}_{\lambda^\prime} - \beta^*} \le \frac{\normsup{\Xmat^\top \Xmat (\hat{\beta}_{\lambda^\prime} - \beta^*)}}{n} 
\end{equation}
for every $\lambda^\prime \ge (1+\delta)\lambda$. Moreover, since the lasso satisfies the zero-subgradient condition $2 \Xmat^\top (\Xmat \hat{\beta}_{\lambda^\prime} - \Y)/n + \lambda^\prime \hat{z} = 0$ with $\hat{z} \in \Rp$ belonging to the subdifferential of the function $f(\beta) = \normone{\beta}$, it holds that
\[ \frac{2 \Xmat^\top \Xmat}{n} (\hat{\beta}_{\lambda^\prime} - \beta^*) = - \lambda^\prime \hat{z} + \frac{2 \Xmat^\top \eps}{n}. \]
Taking the supremum norm on both sides of this equation and taking into account that $2 \normsup{\Xmat^\top\eps}/n \le \lambda$ on the event $\mathcal{T}_\lambda$, we obtain that 
\begin{equation}\label{eq:lemma:oracleinequality:2} 
\frac{2 \normsup{\Xmat^\top \Xmat (\hat{\beta}_{\lambda^\prime} - \beta^*)}}{n} \le \lambda^\prime + \frac{2 \normsup{\Xmat^\top\eps}}{n} \le 2 \lambda^\prime 
\end{equation}
for every $\lambda^\prime \ge (1+\delta)\lambda$ on $\mathcal{T}_\lambda$. The statement of Lemma \ref{lemma:oracleinequality} follows upon combining \eqref{eq:lemma:oracleinequality:1} and \eqref{eq:lemma:oracleinequality:2}.

\subsection*{Proof of Lemma \ref{lemmaA1}}

Let $\delta$ be a small positive constant with $0 < \delta < (\theta-4)/\theta$ and $\theta > 4$ defined in \ref{C3}. Define $Z_{ijk} = \Xvec_{ij} \Xvec_{ik} \eps_i^2$ along with $Z_{ijk} = Z_{ijk}^\le + Z_{ijk}^>$, where
\begin{equation*}
Z_{ijk}^\le = Z_{ijk} \, \ind \big( |\eps_i| \le n^{\frac{1-\delta}{4}} \big) \qquad \text{and} \qquad Z_{ijk}^> = Z_{ijk} \, \ind \big( |\eps_i| > n^{\frac{1-\delta}{4}} \big),
\end{equation*}   
and write $\Delta \le \Delta^\le + \Delta^>$ with
\begin{align*}
\Delta^\le & = \max_{1 \le j,k \le p} \biggl| \frac{1}{n} \sum\limits_{i=1}^n ( Z_{ijk}^\le - \ex Z_{ijk}^\le ) \biggr| \\
\Delta^> & = \max_{1 \le j,k \le p} \biggl| \frac{1}{n} \sum\limits_{i=1}^n ( Z_{ijk}^> - \ex Z_{ijk}^> ) \biggr|. 
\end{align*}
In what follows, we prove that 
\begin{align}
\pr \big( \Delta^\le > B \sqrt{\log(n \lor p) / n} \big) & \le C n^{-K} \label{eq:lemma1:part1} \\
\pr \big( \Delta^> > B \sqrt{\log(n \lor p) / n} \big) & \le C n^{1 - (\frac{1-\delta}{4})\theta}, \label{eq:lemma1:part2}
\end{align}
where $B$, $C$ and $K$ are positive constants depending only on the parameters $\Theta$, and $K$ can be made as large as desired by choosing $B$ and $C$ large enough. Lemma \ref{lemmaA1} is a direct consequence of the two statements \eqref{eq:lemma1:part1} and \eqref{eq:lemma1:part2}.

We start with the proof of \eqref{eq:lemma1:part1}. A simple union bound yields that
\begin{equation}\label{eq:lemma1:part1:eq1}
\pr \big( \Delta^\le > B \sqrt{\log(n \lor p) / n} \big) \le \sum\limits_{j,k=1}^p P_{jk}^\le,
\end{equation}
where 
\[ P_{jk}^\le = \pr \biggl( \Big| \frac{1}{\sqrt{n}} \sum\limits_{i=1}^n U_{ijk} \Big| > B \sqrt{\log(n \lor p)} \biggr) \]
with $U_{ijk} = Z_{ijk}^\le - \ex Z_{ijk}^\le$. Using Markov's inequality, $P_{jk}^\le$ can be bounded by  
\begin{align}
P_{jk}^\le 
 & \le \exp \big( -\mu B \sqrt{\log(n \lor p)} \big) \ \ex \biggl[ \exp\biggl( \mu \Big| \frac{1}{\sqrt{n}} \sum\limits_{i=1}^n U_{ijk} \Big| \biggr) \biggr] \nonumber \\ 
 & \le \exp \big( -\mu B \sqrt{\log(n \lor p)} \big) \biggl\{ \ex \biggl[ \exp\biggl( \frac{\mu}{\sqrt{n}} \sum\limits_{i=1}^n U_{ijk} \biggr) \biggr] \nonumber \\*
 & \phantom{le \exp \biggl( -\mu B \sqrt{\frac{\log(n \lor p)}{n}} \biggr) \biggl\{ } + \ex \biggl[ \exp\biggl( -\frac{\mu}{\sqrt{n}} \sum\limits_{i=1}^n U_{ijk} \biggr) \biggr] \biggr\} \label{eq:lemma1:part1:eq2}
\end{align}
with an arbitrary constant $\mu > 0$. We now choose $\mu = \sqrt{\log(n \lor p)}/C_\mu$, where the constant $C_\mu > 0$ is picked so large that $\mu |U_{ijk}|/\sqrt{n} \le 1/2$ for all $n$. With this choice of $\mu$, we obtain that 
\begin{align*} 
\ex \biggl[ \exp\biggl( \pm \frac{\mu}{\sqrt{n}} \sum\limits_{i=1}^n U_{ijk} \biggr) \biggr] 
 & = \prod\limits_{i=1}^n \ex \biggl[ \exp \biggl( \pm \frac{\mu}{\sqrt{n}} U_{ijk} \biggr) \biggr] 
   \le \prod\limits_{i=1}^n \biggl( 1 + \frac{\mu^2}{n} \ex[U_{ijk}^2] \biggr) \\
 & \le \prod\limits_{i=1}^n \exp \biggl( \frac{\mu^2}{n} \ex[U_{ijk}^2] \biggr) \le \exp(C_U \mu^2), 
\end{align*}
where the first inequality follows from the fact that $\exp(x) \le 1 + x + x^2$ for $|x| \le 1/2$ and $C_U < \infty$ is an upper bound on $\ex[U_{ijk}^2]$. Plugging this into \eqref{eq:lemma1:part1:eq2} gives
\begin{align*}
P_{jk}^\le 
 & \le 2 \exp \big( -\mu B \sqrt{\log(n \lor p)} + C_U \mu^2 \big) \\
 & \le 2 \exp \biggl( - \Big\{ \frac{B}{C_\mu} - \frac{C_U}{C_\mu^2} \Big\} \log(n \lor p) \biggr) = 2 (n \lor p)^{\frac{C_U}{C_\mu^2}-\frac{B}{C_\mu}}. 
\end{align*}
Inserting this bound into \eqref{eq:lemma1:part1:eq1}, we finally obtain that 
\[ \pr \big( \Delta^\le > B \sqrt{\log(n \lor p) / n} \big) \le 2 p^2 (n \lor p)^{\frac{C_U}{C_\mu^2}-\frac{B}{C_\mu}} \le C n^{-K}, \]
where $K > 0$ can be chosen as large as desired by picking $B$ sufficiently large. This completes the proof of \eqref{eq:lemma1:part1}.

We next turn to the proof of \eqref{eq:lemma1:part2}. It holds that 
\[ \pr \big( \Delta^> > B \sqrt{\log(n \lor p) / n} \big) \le P_1^> + P_2^>, \]
where
\begin{align}
P_1^> := & \ \pr \biggl( \max_{1 \le j,k \le p} \Big| \frac{1}{n} \sum\limits_{i=1}^n Z_{ijk}^> \Big| > \frac{B}{2} \sqrt{\frac{\log(n \lor p)}{n}} \biggr) \nonumber \\
     \le & \ \pr \big( |\eps_i| > n^{\frac{1-\delta}{4}} \text{ for some } 1 \le i \le n \big) \nonumber \\
     \le & \ \sum\limits_{i=1}^n \pr \big( |\eps_i| > n^{\frac{1-\delta}{4}} \big) \le \sum\limits_{i=1}^n \ex \big[ |\eps_i|^\theta \big] / n^{(\frac{1-\delta}{4})\theta} \nonumber \\ 
     \le & \ C_\theta n^{1-(\frac{1-\delta}{4})\theta} \label{eq:lemma1:part2:eq1}
\end{align}
and 
\begin{equation}\label{eq:lemma1:part2:eq2}
P_2^> := \pr \biggl( \max_{1 \le j,k \le p} \Big| \frac{1}{n} \sum\limits_{i=1}^n \ex Z_{ijk}^> \Big| > \frac{B}{2} \sqrt{\frac{\log(n \lor p)}{n}} \biggr) = 0 
\end{equation}
for sufficiently large $n$, since 
\begin{align*}
\max_{1 \le j,k \le p} \Big| \frac{1}{n} \sum\limits_{i=1}^n \ex Z_{ijk}^> \Big| 
 & \le C_\Xvec^2 \max_{1 \le i \le n} \ex \Big[ \eps_i^2 \, \ind(|\eps_i| > n^{\frac{1-\delta}{4}}) \Big] \\
 & \le C_\Xvec^2 \max_{1 \le i \le n} \ex \Big[ |\eps_i|^\theta \big/ n^{\frac{(\theta-2)(1-\delta)}{4}} \Big] \\
 & \le C_\Xvec^2 C_\theta n^{-\frac{(\theta-2)(1-\delta)}{4}} = o \biggl( \sqrt{\frac{\log(n \lor p)}{n}} \biggr).
\end{align*}
\eqref{eq:lemma1:part2} follows upon combining \eqref{eq:lemma1:part2:eq1} and \eqref{eq:lemma1:part2:eq2}.

\subsection*{Proof of Lemma \ref{lemmaA2}}

Suppose we are on the event $\mathcal{S}_\gamma$ and let $\gamma^\prime \ge \gamma$. In the case that $\beta^* = 0$, it holds that $\hat{\beta}_{2\gamma^\prime/\sqrt{n}} = 0$ for all $\gamma^\prime \ge \gamma$, implying that $\remainder(\gamma^\prime,e) = 0$. Hence, Lemma \ref{lemmaA2} trivially holds true if $\beta^* = 0$. We can thus restrict attention to the case that $\beta^* \ne 0$. Define $a_n = B (\log n)^2 \sqrt{\normone{\beta^*}}$ with some $B > 0$ and write $e_i = e_i^\le + e_i^>$ with
\begin{align*}
e_i^\le & = e_i \, \ind(|e_i| \le \log n) - \ex[e_i \, \ind(|e_i| \le \log n)] \\
e_i^> & = e_i \, \ind(|e_i| > \log n) - \ex[e_i \, \ind(|e_i| > \log n)]. 
\end{align*}
With this notation, we get that 
\begin{align}
 & \pr_e \biggl( \remainder(\gamma^\prime,e) > \frac{a_n \sqrt{\gamma^\prime}}{n^{1/4}} \biggr) \nonumber \\ 
 & = \pr_e \biggl( \max_{1 \le j \le p} \Big| \frac{1}{\sqrt{n}} \sum_{i=1}^n \Xvec_{ij} \Xvec_i^\top \big(\beta^* - \hat{\beta}_{\frac{2}{\sqrt{n}}\gamma^\prime}\big) e_i \Big| > \frac{a_n \sqrt{\gamma^\prime}}{n^{1/4}} \biggr) \nonumber \\
 & \le \sum\limits_{j=1}^p \pr_e \biggl( \Big| \frac{1}{\sqrt{n}} \sum_{i=1}^n \Xvec_{ij} \Xvec_i^\top \big(\beta^* - \hat{\beta}_{\frac{2}{\sqrt{n}}\gamma^\prime}\big) e_i \Big| > \frac{a_n \sqrt{\gamma^\prime}}{n^{1/4}} \biggr) \nonumber \\
 & \le \sum\limits_{j=1}^p \big\{ P_{e,j}^\le + P_{e,j}^> \big\}, \label{eq:lemmaA2:mainbound}
\end{align}
where
\begin{align*} 
P_{e,j}^\le & = \pr_e \biggl( \Big| \frac{1}{\sqrt{n}} \sum_{i=1}^n \Xvec_{ij} \Xvec_i^\top \big(\beta^* - \hat{\beta}_{\frac{2}{\sqrt{n}}\gamma^\prime}\big) e_i^\le \Big| > \frac{a_n \sqrt{\gamma^\prime}}{2 n^{1/4}} \biggr) \\
P_{e,j}^> & = \pr_e \biggl( \Big| \frac{1}{\sqrt{n}} \sum_{i=1}^n \Xvec_{ij} \Xvec_i^\top \big(\beta^* - \hat{\beta}_{\frac{2}{\sqrt{n}}\gamma^\prime}\big) e_i^> \Big| > \frac{a_n \sqrt{\gamma^\prime}}{2 n^{1/4}} \biggr).
\end{align*}
In what follows, we prove that for every $j \in \{1,\ldots,p\}$, 
\begin{equation} 
P_{e,j}^\le \le C n^{-K} \qquad \text{and} \qquad P_{e,j}^> \le C n^{-K}, \label{eq:lemmaA2:claim}
\end{equation}
where the constants $C$ and $K$ depend only on the parameters $\Theta$, and $K$ can be chosen as large as desired by picking $C$ large enough. Plugging this into \eqref{eq:lemmaA2:mainbound} immediately yields the statement of Lemma \ref{lemmaA2}.

We first show that $P_{e,j}^\le \le C n^{-K}$. To do so, we make use of the prediction bound \eqref{eq:predictionbound} which implies that 
\begin{equation}\label{eq:lemmaA2:predictionbound}
\frac{1}{n} \sum\limits_{i=1}^n \big\{ \Xvec_i^\top ( \beta^* - \hat{\beta}_{\frac{2}{\sqrt{n}}\gamma^\prime} ) \big\}^2 \le \frac{4 \gamma^\prime \normone{\beta^*}}{\sqrt{n}} 
\end{equation}
for any $\gamma^\prime \ge \gamma$ on the event $\mathcal{S}_\gamma$. From this, it immediately follows that on $\mathcal{S}_\gamma$,  
\begin{equation}\label{eq:lemmaA2:predbound}
\biggl| \frac{\Xvec_i^\top (\beta^*-\hat{\beta}_{\frac{2}{\sqrt{n}}\gamma^\prime})}{\sqrt{n}} \biggr| \le \frac{2 \sqrt{\gamma^\prime \normone{\beta^*}}}{n^{1/4}} 
\end{equation}
for all $i$. Using Markov's inequality, $P_{e,j}^\le$ can be bounded by 
\begin{align} 
P_{e,j}^\le 
 & = \pr_e \biggl( \Big| \frac{1}{n^{1/4}} \sum_{i=1}^n \Xvec_{ij} \Xvec_i^\top \big(\beta^* - \hat{\beta}_{\frac{2}{\sqrt{n}}\gamma^\prime}\big) e_i^\le \Big| > \frac{a_n \sqrt{\gamma^\prime}}{2} \biggr) \nonumber \\
 & \le \ex_e \exp \biggl( \mu \Big| \frac{1}{n^{1/4}} \sum_{i=1}^n \Xvec_{ij} \Xvec_i^\top \big(\beta^* - \hat{\beta}_{\frac{2}{\sqrt{n}}\gamma^\prime}\big) e_i^\le \Big| \biggr) \Big/ \exp\biggl(\frac{\mu a_n \sqrt{\gamma^\prime}}{2}\biggr) \nonumber \\
 & \le \ex_e \exp \biggl( \frac{\mu}{n^{1/4}} \sum_{i=1}^n \Xvec_{ij} \Xvec_i^\top \big(\beta^* - \hat{\beta}_{\frac{2}{\sqrt{n}}\gamma^\prime}\big) e_i^\le \biggr) \Big/ \exp\biggl(\frac{\mu a_n \sqrt{\gamma^\prime}}{2}\biggr) \nonumber \\
 & \quad + \ex_e \exp \biggl( - \frac{\mu}{n^{1/4}} \sum_{i=1}^n \Xvec_{ij} \Xvec_i^\top \big(\beta^* - \hat{\beta}_{\frac{2}{\sqrt{n}}\gamma^\prime}\big) e_i^\le \biggr) \Big/ \exp\biggl(\frac{\mu a_n \sqrt{\gamma^\prime}}{2}\biggr) \label{eq:lemmaA2:expbound1}
\end{align}
with any $\mu > 0$. We make use of this bound with the particular choice $\mu = (4 C_\Xvec \sqrt{\gamma^\prime \normone{\beta^*}} \log n)^{-1}$. Since $|\mu \Xvec_{ij} \Xvec_i^\top (\beta^* - \hat{\beta}_{2\gamma^\prime/\sqrt{n}}) e_i^\le / n^{1/4} | \le 1/2$ by condition \ref{C2} and \eqref{eq:lemmaA2:predbound} and since $\exp(x) \le 1 + x + x^2$ for any $|x| \le 1/2$, we obtain that
\begin{align}
 & \ex_e \exp \biggl( \pm \frac{\mu}{n^{1/4}} \sum_{i=1}^n \Xvec_{ij} \Xvec_i^\top \big(\beta^* - \hat{\beta}_{\frac{2}{\sqrt{n}}\gamma^\prime}\big) e_i^\le \biggr) \nonumber \\
 & = \prod\limits_{i=1}^n \ex_e \exp \biggl( \pm \mu \Xvec_{ij} \, \frac{\Xvec_i^\top (\beta^* - \hat{\beta}_{\frac{2}{\sqrt{n}}\gamma^\prime})}{n^{1/4}} \, e_i^\le \biggr) \nonumber \\
 & \le \prod\limits_{i=1}^n \biggl\{ 1 + \mu^2 \Xvec_{ij}^2 \biggl( \frac{\Xvec_i^\top (\beta^* - \hat{\beta}_{\frac{2}{\sqrt{n}}\gamma^\prime})}{n^{1/4}} \biggr)^2 \ex (e_i^\le)^2 \biggr\} \nonumber \\
 & \le \prod\limits_{i=1}^n \exp \biggl(  \mu^2 \Xvec_{ij}^2 \biggl( \frac{\Xvec_i^\top (\beta^* - \hat{\beta}_{\frac{2}{\sqrt{n}}\gamma^\prime})}{n^{1/4}} \biggr)^2 \ex (e_i^\le)^2 \biggr) \nonumber \\
 & \le \exp \biggl( \frac{c \mu^2}{\sqrt{n}} \sum\limits_{i=1}^n \big\{ \Xvec_i^\top (\beta^* - \hat{\beta}_{\frac{2}{\sqrt{n}}\gamma^\prime}) \big\}^2 \biggr) \label{eq:lemmaA2:expbound2}
\end{align}
with a sufficiently large $c > 0$. Plugging \eqref{eq:lemmaA2:expbound2} into \eqref{eq:lemmaA2:expbound1} and using \eqref{eq:lemmaA2:predictionbound} along with the definition of $\mu$, we arrive at 
\begin{align*}
P_{e,j}^\le 
 & \le 2 \exp \biggl( \frac{c \mu^2}{\sqrt{n}} \sum\limits_{i=1}^n \big\{ \Xvec_i^\top (\beta^* - \hat{\beta}_{\frac{2}{\sqrt{n}}\gamma^\prime}) \big\}^2 - \frac{\mu a_n \sqrt{\gamma^\prime}}{2} \biggr) \\
 & \le 2 \exp \biggl( 4 c \mu^2 \gamma^\prime \normone{\beta^*} - \frac{\mu a_n \sqrt{\gamma^\prime}}{2} \biggr) \\
 & \le 2 \exp \biggl( \frac{c}{4 C_\Xvec^2 (\log n)^2} - \frac{B \log n}{8 C_\Xvec} \biggr) \le C n^{-K},
\end{align*}
where $K$ can be chosen as large as desired by picking $C$ large enough.

We next verify that $P_{e,j}^> \le C n^{-K}$. The term $P_{e,j}^>$ can be bounded by $P_{e,j}^> \le P_{e,j,1}^> + P_{e,j,2}^>$, where
\begin{align*}
P_{e,j,1}^> & = \pr_e \biggl( \Big| \frac{1}{\sqrt{n}} \sum_{i=1}^n \Xvec_{ij} \Xvec_i^\top (\beta^* - \hat{\beta}_{\frac{2}{\sqrt{n}}\gamma^\prime}) e_i \, \ind(|e_i| > \log n) \Big| > \frac{a_n\sqrt{\gamma^\prime}}{4 n^{1/4}} \biggr) \\  
P_{e,j,2}^> & = \pr_e \biggl( \Big| \frac{1}{\sqrt{n}} \sum_{i=1}^n \Xvec_{ij} \Xvec_i^\top (\beta^* - \hat{\beta}_{\frac{2}{\sqrt{n}}\gamma^\prime}) \ex[e_i \, \ind(|e_i| > \log n)] \Big| > \frac{a_n\sqrt{\gamma^\prime}}{4 n^{1/4}} \biggr).   
\end{align*}
Since the variables $e_i$ are standard normal, it holds that 
\begin{align} 
P_{e,j,1}^> 
 & \le \pr \big( |e_i| > \log n \text{ for some } 1 \le i \le n \big) \nonumber \\
 & \le \sum\limits_{i=1}^n \pr \big( |e_i| > \log n \big) \le \frac{2 n}{\sqrt{2\pi} \log n} \exp \biggl( - \frac{(\log n)^2}{2} \biggr) \le C n^{-K} \label{eq:lemmaA2:tailbound1}  
\end{align} 
for any $n > 1$, where $K > 0$ can be chosen as large as desired. Moreover, with the help of condition \ref{C2} and \eqref{eq:lemmaA2:predbound}, we get that  
\begin{align}
P_{e,j,2}^> 
 & \le \pr_e \biggl( \sum_{i=1}^n |\Xvec_{ij}| \biggl| \frac{\Xvec_i^\top (\beta^* - \hat{\beta}_{\frac{2}{\sqrt{n}}\gamma^\prime})}{\sqrt{n}} \biggr| \ex[|e_i| \, \ind(|e_i| > \log n)] > \frac{a_n\sqrt{\gamma^\prime}}{4 n^{1/4}} \biggr) \nonumber \\
 & \le \pr_e \biggl(  C_\Xvec \frac{2 \sqrt{\gamma^\prime \normone{\beta^*}}}{n^{1/4}} \, \sum_{i=1}^n \ex[|e_i| \, \ind(|e_i| > \log n)] > \frac{a_n\sqrt{\gamma^\prime}}{4 n^{1/4}} \biggr) \nonumber \\
 & \le \pr_e \biggl(  \sum_{i=1}^n \ex[|e_i| \, \ind(|e_i| > \log n)] > \frac{B (\log n)^2}{8 C_\Xvec} \biggr) = 0 \label{eq:lemmaA2:tailbound2}  
\end{align} 
for $n$ large enough, where the last equality follows from the fact that for any $c > 1$, 
\begin{align*}
\sum_{i=1}^n \ex[|e_i| \, \ind(|e_i| > \log n)] 
 & \le \sum_{i=1}^n \ex\biggl[\frac{|e_i| \exp(c|e_i|)}{\exp(c\log n)} \, \ind(|e_i| > \log n)\biggr] \\
 & \le \frac{n \ex[|e_i| \exp(c|e_i|)]}{\exp(c\log n)} = o(1). 
\end{align*}
Combining \eqref{eq:lemmaA2:tailbound1} and \eqref{eq:lemmaA2:tailbound2}, we can conclude that $P_{e,j}^> \le C n^{-K}$, where $K$ can be picked as large as desired.

\subsection*{Proof of Lemma \ref{lemmaA3}}

The proof is based on standard concentration and maximal inequalities. According to the Gaussian concentration inequality stated in Theorem 7.1 of \cite{Ledoux2001} (see also Lemma 7 in \cite{Chernozhukov2015}), it holds that
\begin{equation}\label{eq:lemmaA3:exponential1}
\pr \Bigl( \max_{1 \le j \le p} \big| \Wgauss_j/\sigma_j \big| \ge \ex \Bigl[ \max_{1 \le j \le p} \big| \Wgauss_j/\sigma_j \big| \Bigr] + \sqrt{2 \log (n \lor p)} \biggr) \le \frac{1}{n \lor p},
\end{equation}
where we use the notation $\sigma_j^2 = \ex[\Wgauss_j^2]$. Combining \eqref{eq:lemmaA3:exponential1} with the maximal inequality $\ex[ \max_{1 \le j \le p} | \Wgauss_j/\sigma_j |] \leq \sqrt{2 \log (2 p)}$ (see e.g.\ Proposition 1.1.3 in \cite{Talagrand2003}) and multiplying each term inside the probability of \eqref{eq:lemmaA3:exponential1} with $C_{\Wgauss} = C_\Xvec C_\sigma$ yields
\begin{equation}\label{eq:lemmaA3:exponential2}
\pr \Bigl( C_{\Wgauss} \max_{1 \le j \le p} \big| \Wgauss_j/\sigma_j \big| \ge C_{\Wgauss} \bigl[ \sqrt{2 \log (2 p)} + \sqrt{2 \log (n \lor p)} \bigr] \biggr) \le \frac{1}{n \lor p}.
\end{equation}
Since $\sigma_j \le C_{\Wgauss}$ for any $j$, it holds that $C_{\Wgauss} \max_{1 \le j \le p} |\Wgauss_j/\sigma_j| \ge \max_{1 \le j \le p} |\Wgauss_j|$. Plugging this into \eqref{eq:lemmaA3:exponential2}, we arrive at 
\begin{equation*}
\pr \Bigl( \max_{1 \le j \le p} |\Wgauss_j| \ge C_{\Wgauss} \bigl[ \sqrt{2 \log (2 p)} + \sqrt{2\log (n \lor p)} \bigr] \biggr) \le \frac{1}{n \lor p},
\end{equation*}
which implies that $\gamma_\alpha^{\Wgauss} \le C_{\Wgauss} [ \sqrt{2 \log (2 p)} + \sqrt{2\log (n \lor p)} ]$ for any $\alpha > 1/(n \lor p)$.

%\subsection*{Proof of Lemma \ref{lemmaA3:lowerbound}}

%Since $\max_{1 \le j \le p} |\Wgauss_j| \ge |\Wgauss_1|$, it trivially holds that $\gamma_{\alpha}^{\Wgauss} \ge \gamma_{\alpha}^{\Wgauss_1}$, where $\gamma_{\alpha}^{\Wgauss_1}$ is the $(1-\alpha)$-quantile of $|\Wgauss_1|$. Moreover, it is straightforward to verify that under assumptions \ref{C2} and \ref{C3}, $\var(\Wgauss_1) \ge c_\Xvec^2 c_\sigma^2$. We thus obtain that for any $\alpha \in (0,1-\delta_0]$, $\gamma_{\alpha}^{\Wgauss} \ge \gamma_{\alpha}^{\Wgauss_1} \ge \gamma_{1-\delta_0}^{\Wgauss_1} \ge c_{\min}$, where $c_{\min}$ is the $\delta_0$-quantile of a random variable $|V|$ with $V \sim \normal(0, c_\Xvec^2 c_\sigma^2)$.

\subsection*{Proof of Lemma \ref{lemmaA8}}

The proof is by contradiction. Suppose that $\pr(\max_{1 \le j \le p} V_j \le \gamma_\alpha^V) > 1 - \alpha$, in particular, $\pr(\max_{1 \le j \le p} V_j \le \gamma_\alpha^V) = 1 - \alpha + \eta$ with some $\eta > 0$. By Lemma \ref{lemmaA5}, %it holds that 
\[ \sup_{t \in \R} \pr \biggl( \Big| \max_{1 \le j \le p} V_j - t \Big| \le \delta \biggr) \le b(\delta) := C \delta \sqrt{1 \lor \log(p/\delta)} \]
for any $\delta > 0$, which implies that
\begin{align*} 
\pr \Big( \max_{1 \le j \le p} V_j \le \gamma_\alpha^V - \delta \Big) 
 & = \pr \Big( \max_{1 \le j \le p} V_j \le \gamma_\alpha^V \Big) - \pr \Big( \gamma_\alpha^V - \delta < \max_{1 \le j \le p} V_j \le \gamma_\alpha^V \Big) \\
 & \ge \pr \Big( \max_{1 \le j \le p} V_j \le \gamma_\alpha^V \Big) - \sup_{t \in \R} \pr \biggl( \Big| \max_{1 \le j \le p} V_j - t \Big| \le \delta \biggr) \\
 & \ge 1 - \alpha + \eta - b(\delta).
\end{align*}
Since $b(\delta) \rightarrow 0$ as $\delta \rightarrow 0$, we can find a specific $\delta > 0$ with $b(\delta) < \eta$. For this specific $\delta$, we get that $\pr ( \max_{1 \le j \le p} V_j \le \gamma_\alpha^V - \delta ) > 1-\alpha$, which contradicts the definition of the quantile $\gamma_\alpha^V$ according to which $\gamma_\alpha^V = \inf\{q: \pr(\max_{1 \le j \le p} V_j \le q) \ge 1  - \alpha\}$.

\subsection*{Proof of Proposition \ref{propA3}}

We first have a closer look at the statistic $\critstar_B := \normsup{(\mathcal{P} \Xmat_B)^\top \uvec} / \sqrt{n}$. Without loss of generality, we let $A = \{1, \ldots,p_A\}$ and $B = \{p_A+1, \ldots,p_A+p_B\}$ with $p_A + p_B = p$, and we write $\Xvec_{i,A} = (\Xvec_{i1},\ldots,\Xvec_{ip_A})^\top$ to shorten notation. Moreover, we define $\hat{\psi}_{jk} = n^{-1} \sum_{i=1}^n X_{ij} X_{ik}$ and set $\hat{\psi}_{j,A} = (\hat{\psi}_{j1},\ldots,\hat{\psi}_{j p_A})^\top \in \R^{p_A}$ along with $\hat{\Psi}_A = ( \hat{\psi}_{jk}: 1 \le j,k \le p_A ) \in \R^{p_A \times p_A}$. Similarly, we let $\psi_{jk} = \ex[X_{ij} X_{ik}]$, $\psi_{j,A} = (\psi_{j1},\ldots,\psi_{j p_A})^\top$ and $\Psi_A = ( \psi_{jk}: 1 \le j,k \le p_A )$. With this notation, the statistic $\critstar_B = \normsup{(\mathcal{P} \Xmat_B)^\top \uvec} / \sqrt{n} = \normsup{(\mathcal{P} \Xmat_B)^\top \mathcal{P} \eps} / \sqrt{n} = \normsup{(\mathcal{P} \Xmat_B)^\top \eps} / \sqrt{n}$ can be rewritten as $\critstar_B = \max_{j \in B} |\Wstar_{j,B}|$, where 
\[ \Wstar_{j,B} = \frac{1}{\sqrt{n}} \sum_{i=1}^n \hat{Z}_{ij} \eps_i \quad \text{with} \quad \hat{Z}_{ij} = \Xvec_{ij} - \hat{\psi}_{j,A}^\top \hat{\Psi}_A^{-1} \Xvec_{i,A}, \]
and $\Wstar_B = (\Wstar_{j,B}: j \in B)$ is the vector with the elements $\Wstar_{j,B}$. In contrast to $\Xvec_i$, the random vectors $\hat{Z}_i = (\hat{Z}_{ij}: j \in B)$ are not independent across $i$ in general. In order to deal with this complication, we introduce the auxiliary statistic $\critstarstar_B = \max_{j \in B} |\Wstarstar_{j,B}|$, where $\Wstarstar_B = (\Wstarstar_{j,B}: j \in B)$ and 
\[ \Wstarstar_{j,B} =  \frac{1}{\sqrt{n}} \sum_{i=1}^n Z_{ij} \eps_i \quad \text{with} \quad Z_{ij} = \Xvec_{ij} - \psi_{j,A}^\top \Psi_A^{-1} \Xvec_{i,A}. \]
The random vectors $Z_i = (Z_{ij}: j \in B)$ have the following properties: 
(i) Unlike $\hat{Z}_i$, they are independent across $i$. 
(ii) Since $|\Xvec_{ij}| \le C_\Xvec$ by \ref{C2} and $\Psi_A$ is positive definite by assumption, $|Z_{ij}| \le C_Z < \infty$ with a constant $C_Z$ that depends only on the model parameters $\Theta^\prime$.
(iii) Since $Z_{ij}$ can be expressed as $Z_{ij} = X_{ij} - X_{i,A}^\top \vartheta^{(j)}$ with $\vartheta^{(j)}$ introduced before the formulation of Proposition \ref{prop:testing:2}, it holds that $\ex[Z_{ij}^2] \ge c_Z^2 > 0$ with $c_Z^2 = c_\vartheta$. 
We denote the $(1-\alpha)$-quantile of $\critstarstar_B$ by $\gamma_{\alpha,B}^{**}$. In the course of the proof, we will establish that $\gamma_{\alpha,B}^{**}$ is close to the quantile $\gamma_{\alpha,B}^*$ of the statistic $\critstar_B$ in a suitable sense.

In addition to the above quantities, we introduce some auxiliary statistics that parallel those defined in the proof of Theorem \ref{theo1}. To start with, let $\crithat_B(\gamma,e) = \max_{j \in B} |\What_{j,B}(\gamma,e)|$, where $\What_B(\gamma,e) = (\What_{j,B}(\gamma,e): j \in B)$ with 
\[ \What_{j,B}(\gamma,e) = \frac{1}{\sqrt{n}} \sum\limits_{i=1}^n \hat{Z}_{ij} \hat{\uvec}_{\frac{2}{\sqrt{n}}\gamma,i} e_i, \]
and let $\hat{\pi}_{\alpha,B}(\gamma)$ be the $(1-\alpha)$-quantile of $\crithat_B(\gamma,e)$ conditionally on $\Xmat$ and $\eps$. With this notation, the estimator $\hat{\gamma}_{\alpha,B}$ can be expressed as
\[ \hat{\gamma}_{\alpha,B} = \inf \big\{ \gamma > 0: \hat{\pi}_{\alpha,B}(\gamma^\prime) \le \gamma^\prime \text{ for all } \gamma^\prime \ge \gamma \big\}. \]
Moreover, let $\critgauss_B = \max_{j \in B} |\Wgauss_j|$, where $\Wgauss_B = (\Wgauss_j: j \in B)$ is a Gaussian random vector with $\ex[\Wgauss_B] = \ex[\Wstarstar_B] = 0$ and $\ex[\Wgauss_B \Wgauss_B^\top] = \ex[\Wstarstar_B (\Wstarstar_B)^\top]$, and let $\gamma_{\alpha,B}^\Wgauss$ denote the $(1-\alpha)$-quantile of $\critgauss_B$. Finally, define the statistic $\crit_B(e) = \max_{j \in B} |\W_{j,B}(e)|$, where $\W_B(e) = (\W_{j,B}(e): j \in B)$ with
\[ \W_{j,B}(e) =  \frac{1}{\sqrt{n}} \sum_{i=1}^n Z_{ij} \eps_i e_i, \]
and let $\gamma_{\alpha,B}$ be the $(1-\alpha)$-quantile of $\crit_B(e)$ conditionally on $\Xmat$ and $\eps$.

We next define some expressions which play a similar role as the quantity $\Delta$ in the proof of Theorem \ref{theo1}. In particular, we let $\Delta_1 = \normtwo{n^{-1} \sum_{i=1}^n \Xvec_{i,A} \eps_i}$ along with
\begin{align*}
\Delta_2 & = \max_{j \in A} \biggl| \frac{1}{n} \sum\limits_{i=1}^n \big\{ \Xvec_{ij}^2 \eps_i^2 - \ex[ \Xvec_{ij}^2 \eps_i^2] \big\} \biggr| \\
\Delta_3 & = \max_{1 \le j,k \le p} \biggl| \frac{1}{n} \sum\limits_{i=1}^n \big\{ \Xvec_{ij} \Xvec_{ik} - \ex[ \Xvec_{ij} \Xvec_{ik}] \big\} \biggr| \\
\Delta_4 & = \max_{j,k \in B} \biggl| \frac{1}{n} \sum\limits_{i=1}^n \big\{ Z_{ij} Z_{ik} \eps_i^2 - \ex[ Z_{ij} Z_{ik} \eps_i^2] \big\} \biggr|.
\end{align*}
Applying Markov's inequality, we obtain that 
\begin{align}
\pr \big( \Delta_1 > n^{-\frac{1}{2}+\rho} \big) & \le Cn^{-2\rho} \label{eq:probbound:Delta1} \\
\pr \big( \Delta_2 > n^{-\frac{1}{2}+\rho} \big) & \le Cn^{-2\rho}, \label{eq:probbound:Delta2}
\end{align}
where we choose $\rho$ to be a fixed constant with $\rho \in (0,1/2)$ and $C$ depends only on $\Theta^\prime$. Moreover, noticing that $|Z_{ij}| \le C_Z < \infty$ and $\ex[Z_{ij}^2] \ge c_Z^2 > 0$ under the conditions of Proposition \ref{prop:testing:2}, the same arguments as for Lemma \ref{lemmaA1} yield the following: there exist positive constants $C$, $D$ and $K$ depending only on $\Theta^\prime$ such that 
\begin{align}
\pr \big( \Delta_3 > D \sqrt{\log(n \lor p) /n} \big) & \le Cn^{-K} \label{eq:probbound:Delta3} \\
\pr \big( \Delta_4 > D \sqrt{\log(n \lor p) /n} \big) & \le Cn^{-K}. \label{eq:probbound:Delta4}
\end{align}
Taken together, \eqref{eq:probbound:Delta1}--\eqref{eq:probbound:Delta4} imply that the event 
\[ \mathcal{A}_n^\prime := \big\{ (\Delta_1 \lor \Delta_2) \le n^{-\frac{1}{2}+\rho} \text{ and } (\Delta_3 \lor \Delta_4) \le D \sqrt{\log(n \lor p) /n} \big\} \]
occurs with probability at least $1 - O(n^{-K} \lor n^{-2\rho})$.

With the above notation at hand, we now turn to the proof of Proposition \ref{propA3}. In a first step, we show that the quantiles of the statistic $\critstar_B$ are close to those of the auxiliary statistic $\critstarstar_B$ in the following sense: there exist positive constants $C$ and $K$ depending only on $\Theta^\prime$ such that 
\begin{equation}\label{eq:propA1:B}
\begin{split}
\gamma_{\alpha+\zeta_n,B}^* \le \gamma_{\alpha,B}^{**} \le \gamma_{\alpha-\zeta_n,B}^* \\ 
 \gamma_{\alpha+\zeta_n,B}^{**} \le \gamma_{\alpha,B}^* \le \gamma_{\alpha-\zeta_n,B}^{**} 
\end{split}
\end{equation}
for any $\alpha \in (\zeta_n,1-\zeta_n)$ with $\zeta_n = C n^{-K}$. The proof of \eqref{eq:propA1:B} is postponed until the arguments for Proposition \ref{propA3} are complete. In the second step, we relate the quantiles $\gamma_{\alpha,B}^{**}$ of $\critstarstar_B$ to the quantiles $\gamma_{\alpha,B}$ of $\crit_B(e)$. Arguments completely analogous to those for Proposition \ref{propA2} yield the following: there exist positive constants $C$ and $K$ depending only on $\Theta^\prime$ such that on the event $\mathcal{A}_n^\prime$, 
\begin{equation}\label{eq:propA2:B}
\begin{split}
 & \gamma_{\alpha+\xi_n^\prime,B} \le \gamma_{\alpha,B}^{**} \le \gamma_{\alpha-\xi_n^\prime,B} \\ 
 & \gamma_{\alpha+\xi_n^\prime,B}^{**} \le \gamma_{\alpha,B} \le \gamma_{\alpha-\xi_n^\prime,B}^{**} 
\end{split}
\end{equation}
for any $\alpha \in (\xi_n^\prime,1-\xi_n^\prime)$ with $\xi_n^\prime = C n^{-K}$. In the third step, we relate the auxiliary statistic $\crit_B(e)$ to the criterion function $\crithat_B(\gamma,e)$, which underlies the estimator $\hat{\gamma}_{\alpha,B}$. Straightforward calculations show that 
\begin{equation}\label{eq:critboundB}
\crithat_B(\gamma,e) 
\begin{cases}
\le \crit_B(e) + \remainder_B(\gamma,e) \\
\ge \crit_B(e) - \remainder_B(\gamma,e), 
\end{cases}
\end{equation}
where $\remainder_B(\gamma,e) = \remainder_{B,1}(\gamma,e) + \remainder_{B,2}(e) + \remainder_{B,3}(e)$ with
\begin{align*}
\remainder_{B,1}(\gamma,e) & = \max_{j \in B} \biggl| \frac{1}{\sqrt{n}} \sum_{i=1}^n \hat{Z}_{ij} \big\{ \mathcal{P} \Xmat_B \big(\beta_B^* - \hat{\beta}_{B,\frac{2}{\sqrt{n}}\gamma}\big) \big\}_i \, e_i \biggr| \\
\remainder_{B,2}(e) & = \max_{j \in B} \biggl| \frac{1}{\sqrt{n}} \sum\limits_{i=1}^n (\hat{Z}_{ij} - Z_{ij}) \eps_i e_i \biggr| \\
\remainder_{B,3}(e) & = \max_{j \in B} \biggl| \frac{1}{\sqrt{n}} \sum\limits_{i=1}^n \hat{Z}_{ij} (\eps_i - \uvec_i) e_i \biggr|. 
\end{align*} 
The terms $\remainder_{B,1}(\gamma,e)$, $\remainder_{B,2}(e)$ and $\remainder_{B,3}(e)$ have the following properties: on the event $S_\gamma^\prime \cap \mathcal{A}_n^\prime$, 
\begin{equation}\label{eq:remainderB1}
 \pr_e \biggl( \remainder_{B,1}(\gamma^\prime,e) > \frac{D (\log n)^2 \sqrt{\normone{\beta_B^*} \gamma^\prime}}{n^{1/4}} \biggr) \le C n^{-K} 
\end{equation}
for every $\gamma^\prime \ge \gamma$, where the constants $C$, $D$ and $K$ depend only on $\Theta^\prime$. Moreover, on the event $\mathcal{A}_n^\prime$, 
\begin{align}
\pr_e \biggl( \remainder_{B,2}(e) > \frac{D \log^{1/2}(n \lor p)}{n^{1/2-\rho}} \biggr) & \le C n^{-2\rho} \label{eq:remainderB2} \\
\pr_e \biggl( \remainder_{B,3}(e) > \frac{D \log(n \lor p)}{n^{1/2-\rho}} \biggr) & \le C n^{-K} \label{eq:remainderB3}, 
\end{align}
where $\rho \in (0,1/2)$ has been introduced in \eqref{eq:probbound:Delta1}--\eqref{eq:probbound:Delta2} and the constants $C$, $D$ and $K$ depend only on $\Theta^\prime$. The proofs of \eqref{eq:remainderB1}--\eqref{eq:remainderB3} are provided below. With \eqref{eq:propA1:B}--\eqref{eq:remainderB3} in place, we can now use the same arguments as in the proof of Theorem \ref{theo1} (with minor adjustments) to obtain that $\gamma_{\alpha + \nu_n^\prime,B}^* \le \hat{\gamma}_{\alpha,B} \le \gamma_{\alpha-\nu_n^\prime,B}^*$ on the event $\mathcal{S}_{\gamma_{\alpha+\nu_n^\prime,B}^*}^\prime \cap \mathcal{A}_n^\prime$.

\begin{proof}[\textnormal{\textbf{Proof of (\ref{eq:propA1:B}).}}] 
We prove that 
\begin{align}
\sup_{t \in \R} \big| \pr(\critstarstar_B \le t) - \pr(\critgauss_B \le t) \big| & \le C n^{-K} \label{eq:propA1:B:claim1} \\
\sup_{t \in \R} \big| \pr(\critstar_B \le t) - \pr(\critgauss_B \le t) \big| & \le C n^{-K}, \label{eq:propA1:B:claim2}
\end{align}
where $C$ and $K$ depend only on $\Theta^\prime$. Applying the same arguments as in the proof of Proposition \ref{propA1} to the statements \eqref{eq:propA1:B:claim1} and \eqref{eq:propA1:B:claim2} yields that
\begin{equation*}
\begin{split}
 & \gamma_{\alpha + Cn^{-K},B}^{**} \le \gamma_{\alpha,B}^\Wgauss \le \gamma_{\alpha - Cn^{-K},B}^{**} \\
 & \gamma_{\alpha + Cn^{-K},B}^\Wgauss \le \gamma_{\alpha,B}^{**} \le \gamma_{\alpha - Cn^{-K},B}^\Wgauss 
\end{split}
\qquad \text{and} \qquad 
\begin{split}
 & \gamma_{\alpha + Cn^{-K},B}^{*} \le \gamma_{\alpha,B}^\Wgauss \le \gamma_{\alpha - Cn^{-K},B}^{*} \\
 & \gamma_{\alpha + Cn^{-K},B}^\Wgauss \le \gamma_{\alpha,B}^{*} \le \gamma_{\alpha - Cn^{-K},B}^\Wgauss, 
\end{split}
\end{equation*}
from which \eqref{eq:propA1:B} follows immediately.

It remains to prove \eqref{eq:propA1:B:claim1} and \eqref{eq:propA1:B:claim2}. \eqref{eq:propA1:B:claim1} is a direct consequence of Lemma \ref{lemmaA7}, since $0 < c_\sigma^2 c_Z^2 \le n^{-1} \sum_{i=1}^n \ex[ (Z_{ij} \eps_i)^2] \le C_\sigma^2 C_Z^2 < \infty$ and $\max_{k=1,2} \{ n^{-1} \sum_{i=1}^n \linebreak \ex[|Z_{ij} \eps_i|^{2+k} / C^k] \} + \ex[\{ \max_{j \in B} |Z_{ij} \eps_i|/C \}^4] \le 4$ for $C$ large enough, where we have used \ref{C3} and the fact that $|Z_{ij}| \le C_Z < \infty$ and $\ex[Z_{ij}^2] \ge c_Z^2 > 0$ under the conditions of Proposition \ref{prop:testing:2}. For the proof of \eqref{eq:propA1:B:claim2}, it suffices to show that
\begin{equation}\label{eq:propA1:B:claim3}
\sup_{t \in \R} \big| \pr(\critstar_B \le t) - \pr(\critstarstar_B \le t) \big| \le C n^{-K}
\end{equation}
with $C$ and $K$ depending only on $\Theta^\prime$, since by \eqref{eq:propA1:B:claim1},
\begin{align*}
\sup_{t \in \R} \big| \pr(\critstar_B \le t) - \pr(\critgauss_B \le t) \big| 
 & \le \sup_{t \in \R} \big| \pr(\critstar_B \le t) - \pr(\critstarstar_B \le t) \big| \\
 & \quad + \sup_{t \in \R} \big| \pr(\critstarstar_B \le t) - \pr(\critgauss_B \le t) \big| \\
 & \le \sup_{t \in \R} \big| \pr(\critstar_B \le t) - \pr(\critstarstar_B \le t) \big| + Cn^{-K}. 
\end{align*}

To prove \eqref{eq:propA1:B:claim3}, we fix a constant $d \in (0,1/2)$ and let $c_n = D n^d \sqrt{\log(n \lor p) / n}$, where $D$ is a sufficiently large constant that depends only on $\Theta^\prime$. In the case that $\pr(\critstar_B \le t) \ge \pr(\critstarstar_B \le t)$, the difference $\pr(\critstar_B \le t) - \pr(\critstarstar_B \le t)$ can be bounded as follows:
\begin{align} 
 & \pr(\critstar_B \le t) - \pr(\critstarstar_B \le t) \nonumber \\
 & = \pr (\critstarstar_B \le t + \critstarstar_B - \critstar_B, |\critstarstar_B - \critstar_B| \le c_n) \nonumber \\
 & \quad + \pr (\critstarstar_B \le t + \critstarstar_B - \critstar_B, |\critstarstar_B - \critstar_B| > c_n) - \pr(\critstarstar_B \le t) \nonumber \\ 
 & \le \pr(\critstarstar_B \le t + c_n) - \pr(\critstarstar_B \le t) + \pr(|\critstarstar_B - \critstar_B| > c_n) \nonumber \\
 & \le \big| \pr(\critstarstar_B \le t + c_n) - \pr(\critgauss_B \le t + c_n) \big| + \big| \pr(\critstarstar_B \le t) - \pr(\critgauss_B \le t) \big| \nonumber \\
 & \quad + \big| \pr(\critgauss_B \le t + c_n) - \pr(\critgauss_B \le t) \big| + \pr(|\critstarstar_B - \critstar_B| > c_n) \nonumber \\
 & \le \big| \pr(\critstarstar_B \le t + c_n) - \pr(\critgauss_B \le t + c_n) \big| + \big| \pr(\critstarstar_B \le t) - \pr(\critgauss_B \le t) \big| \nonumber \\
 & \quad + \pr(|\critgauss_B - t| \le c_n) + \pr(|\critstarstar_B - \critstar_B| > c_n). \label{eq:propA1:B:eq1}
\end{align}
For the case that $\pr(\critstar_B \le t) < \pr(\critstarstar_B \le t)$, we similarly get that
\begin{align} 
 & \pr(\critstarstar_B \le t) - \pr(\critstar_B \le t) \nonumber \\
 & \le \big| \pr(\critstarstar_B \le t) - \pr(\critgauss_B \le t) \big| + \big| \pr(\critstarstar_B \le t - c_n) - \pr(\critgauss_B \le t - c_n) \big| \nonumber \\
 & \quad + \pr(|\critgauss_B - t| \le c_n) + \pr(|\critstarstar_B - \critstar_B| > c_n). \label{eq:propA1:B:eq2}
\end{align}
\eqref{eq:propA1:B:eq1} and \eqref{eq:propA1:B:eq2} immediately yield that 
\begin{align*}
\sup_{t \in \R} \big| \pr(\critstar_B \le t) - \pr(\critstarstar_B \le t) \big| 
 & \le 2 \sup_{t \in \R} \big| \pr(\critstarstar_B \le t) - \pr(\critgauss_B \le t) \big| \\
 & \quad + \sup_{t \in \R} \pr(|\critgauss_B - t| \le c_n) \\
 & \quad + \pr(|\critstarstar_B - \critstar_B| > c_n). 
\end{align*}
Since $\sup_{t \in \R} |\pr(\critstarstar_B \le t) - \pr(\critgauss_B \le t)| \le Cn^{-K}$ by \eqref{eq:propA1:B:claim1} and $\sup_{t \in \R} \pr(|\critgauss_B - t| \le c_n) \le Cn^{-K}$ by Lemma \ref{lemmaA5}, we further get that  
\[ \sup_{t \in \R} \big| \pr(\critstar_B \le t) - \pr(\critstarstar_B \le t) \big| \le  \pr(|\critstarstar_B - \critstar_B| > c_n) + Cn^{-K}, \]
where $C$ and $K$ depend only on $\Theta^\prime$. To complete the proof of \eqref{eq:propA1:B:claim3}, we thus need to show that 
\begin{equation}\label{eq:propA1:B:claim4}
\pr(|\critstarstar_B - \critstar_B| > c_n) \le C n^{-K} 
\end{equation}
with $C$ and $K$ depending only on $\Theta^\prime$. To do so, we bound the term $|\critstarstar_B - \critstar_B|$ by  
\begin{align}
| \critstarstar_B - \critstar_B | 
 & \le \max_{j \in B} \biggl| \frac{1}{\sqrt{n}} \sum\limits_{i=1}^n (\hat{Z}_{ij} - Z_{ij}) \eps_i \biggr| \nonumber \\
 & = \max_{j \in B} \biggl| \{ \psi_{j,A}^\top \Psi_A^{-1} - \hat{\psi}_{j,A}^\top \hat{\Psi}_A^{-1} \} \frac{1}{\sqrt{n}} \sum\limits_{i=1}^n \Xvec_{i,A} \eps_i \biggr| \nonumber \\
 & \le \biggl\{ \max_{j \in B} \normtwo{\psi_{j,A} - \hat{\psi}_{j,A}} \normtwo{\Psi_A^{-1}} \nonumber \\ & \qquad + \max_{j \in B} \normtwo{\hat{\psi}_{j,A}} \normtwo{\Psi_A^{-1} - \hat{\Psi}_A^{-1}} \Big\} \Big\| \frac{1}{\sqrt{n}} \sum\limits_{i=1}^n \Xvec_{i,A} \eps_i \Big\|_2. \label{eq:propA1:B:claim4:eq1}
\end{align}
From \eqref{eq:probbound:Delta3}, it immediately follows that 
\begin{align}
\pr \Big( \max_{j \in B} \normtwo{\psi_{j,A} - \hat{\psi}_{j,A}} > D \sqrt{\log(n \lor p)/n} \Big) & \le Cn^{-K} \label{eq:propA1:B:claim4:eq2} \\
\pr \Big( \normtwo{\Psi_A - \hat{\Psi}_A} > D \sqrt{\log(n \lor p)/n} \Big) & \le Cn^{-K} \label{eq:propA1:B:claim4:eq3}
\end{align}
with $C$, $D$ and $K$ depending only on $\Theta^\prime$. Moreover, it holds that
\begin{equation}\label{eq:propA1:B:claim4:eq4}
\pr \Big( \normtwo{\Psi_A^{-1} - \hat{\Psi}_A^{-1}} > D \sqrt{\log(n \lor p)/n} \Big) \le Cn^{-K}, 
\end{equation}
which is a consequence of \eqref{eq:propA1:B:claim4:eq3} and the fact that 
\begin{equation}\label{eq:inversematrix}
\normtwo{Q^{-1} - R^{-1}} \le \frac{\normtwos{R^{-1}} \normtwo{R - Q}}{1 - \normtwo{R - Q} \normtwo{R^{-1}}} 
\end{equation}
for every pair of invertible matrices $Q$ and $R$ that are close enough such that $\normtwo{R - Q} \normtwo{R^{-1}} < 1$. Finally, a simple application of Markov's inequality yields that 
\begin{equation} 
\pr \biggl( \Big\| \frac{1}{\sqrt{n}} \sum\limits_{i=1}^n \Xvec_{i,A} \eps_i \Big\|_2 > n^d \biggr) \le C n^{-2d}, \label{eq:propA1:B:claim4:eq5}
\end{equation}
where $C$ depends only on $\Theta^\prime$. The statement \eqref{eq:propA1:B:claim4} follows upon applying the results \eqref{eq:propA1:B:claim4:eq2}--\eqref{eq:propA1:B:claim4:eq4} and \eqref{eq:propA1:B:claim4:eq5} to the bound \eqref{eq:propA1:B:claim4:eq1}. 
\end{proof}

\begin{proof}[\textnormal{\textbf{Proof of (\ref{eq:remainderB1}).}}] 
To start with, we bound $\remainder_{B,1}(\gamma,e)$ by 
\begin{align}
\remainder_{B,1}(\gamma,e) & \le \Big\{ 1 + \sqrt{p_A} \max_{j \in B} \normtwo{\hat{\psi}_{j,A}^\top \hat{\Psi}_A^{-1}} \Big\} \nonumber \\* & \qquad \times \max_{1 \le j \le p} \biggl| \frac{1}{\sqrt{n}} \sum_{i=1}^n \Xvec_{ij} \big\{ \mathcal{P} \Xmat_B \big(\beta_B^* - \hat{\beta}_{B,\frac{2}{\sqrt{n}}\gamma}\big) \big\}_i \, e_i \biggr|. \label{eq:remainderB1:bound}
\end{align}
The same arguments as in the proof of Lemma \ref{lemmaA2} yield that on the event $S_\gamma^\prime$, 
\begin{align}
 \pr_e \biggl( \max_{1 \le j \le p} \biggl| \frac{1}{\sqrt{n}} \sum_{i=1}^n \Xvec_{ij} \big\{ \mathcal{P} \Xmat_B & \big(\beta_B^* - \hat{\beta}_{B,\frac{2}{\sqrt{n}}\gamma^\prime}\big) \big\}_i e_i \biggr| \nonumber \\* & > \frac{D (\log n)^2 \sqrt{\normone{\beta_B^*} \gamma^\prime}}{n^{1/4}} \biggr) \le C n^{-K} \label{eq:remainderB1:eq1}
\end{align}
for every $\gamma^\prime \ge \gamma$, where $C$, $D$ and $K$ depend only on $\Theta^\prime$. Moreover, on the event $\mathcal{A}_n^\prime$, 
\begin{align}
\max_{j \in B} \normtwo{\hat{\psi}_{j,A} - \psi_{j,A}} & \le C \sqrt{\log(n \lor p)/n} \label{eq:remainderB1:eq2} \\[-0.2cm]
\normtwo{\hat{\Psi}_A - \Psi_A} & \le C \sqrt{\log(n \lor p)/n} \label{eq:remainderB1:eq3} \\
\normtwo{\hat{\Psi}_A^{-1} - \Psi_A^{-1}} & \le C \sqrt{\log(n \lor p)/n}, \label{eq:remainderB1:eq4} 
\end{align}
where $C$ is a sufficiently large constant that depends only on $\Theta^\prime$, and \eqref{eq:remainderB1:eq4} is a simple consequence of \eqref{eq:remainderB1:eq3} and \eqref{eq:inversematrix}. To complete the proof, we apply \eqref{eq:remainderB1:eq1}--\eqref{eq:remainderB1:eq4} to the bound \eqref{eq:remainderB1:bound}, taking into account that $\normtwo{\Psi_A^{-1}} \le C < \infty$ and $\max_{j \in B} \normtwo{\psi_{j,A}} \le C < \infty$.
\end{proof}

\begin{proof}[\textnormal{\textbf{Proof of (\ref{eq:remainderB2}).}}]
We have the bound 
\begin{equation}\label{eq:remainderB2:bound}
\remainder_{B,2}(e) \le \max_{j \in B} \big\| \hat{\psi}_{j,A}^\top \hat{\Psi}_A^{-1} - \psi_{j,A}^\top \Psi_A^{-1} \big\|_2 \, \Big\| \frac{1}{\sqrt{n}} \sum\limits_{i=1}^n \Xvec_{i,A} \eps_i e_i \Big\|_2.
\end{equation}
On the event $\mathcal{A}_n^\prime$, 
\begin{align*} 
\pr_e \biggl( \Big| \frac{1}{\sqrt{n}} \sum\limits_{i=1}^n \Xvec_{ij} \eps_i e_i \Big| > n^\rho \biggr) 
 & \le n^{-2\rho} \Big\{ \frac{1}{n} \sum\limits_{i=1}^n \Xvec_{ij}^2 \eps_i^2 \Big\} \le n^{-2\rho} \big\{ \ex[\Xvec_{ij}^2 \eps_i^2] + \Delta_2 \big\} \\
 & \le n^{-2\rho} \big\{ C_X^2 C_\sigma^2 + n^{-\frac{1}{2}+\rho} \big\}
\end{align*}
for every $j \in A$, which implies that $\pr_e(\normtwo{n^{-1/2} \sum_{i=1}^n \Xvec_{i,A} \eps_i e_i} > n^\rho) \le C n^{-2\rho}$ with $C$ depending only on $\Theta^\prime$. To complete the proof, we apply this, \eqref{eq:remainderB1:eq2}--\eqref{eq:remainderB1:eq4} and the fact that $\normtwo{\Psi_A^{-1}} \le C < \infty$ and $\max_{j \in B} \normtwo{\psi_{j,A}} \le C < \infty$ to the bound \eqref{eq:remainderB2:bound}.
\end{proof}

\begin{proof}[\textnormal{\textbf{Proof of (\ref{eq:remainderB3}).}}] Let $d_n = D \log(n \lor p) / n^{1/2-\rho}$ and define  
\begin{align*}
e_i^\le & = e_i \, \ind(|e_i| \le \log n) - \ex[e_i \, \ind(|e_i| \le \log n)] \\
e_i^> & = e_i \, \ind(|e_i| > \log n) - \ex[e_i \, \ind(|e_i| > \log n)]. 
\end{align*}
It holds that 
\begin{align}
\pr_e \big( \remainder_{B,3}(e) > d_n \big) 
 & \le \sum\limits_{j \in B} \pr_e \biggl( \Big| \frac{1}{\sqrt{n}} \sum\limits_{i=1}^n \hat{Z}_{ij} (\eps_i - \uvec_i) e_i \Big| > d_n \biggr) \nonumber \\
 & \le \sum\limits_{j \in B} \big\{ P_{e,j}^\le + P_{e,j}^> \big\}, \label{eq:remainderB3:unionbound}
\end{align}
where 
\begin{align*}
P_{e,j}^\le & = \pr_e \biggl( \Big| \frac{1}{\sqrt{n}} \sum\limits_{i=1}^n \hat{Z}_{ij} (\eps_i - \uvec_i) e_i^\le \Big| > \frac{d_n}{2} \biggr) \\ 
P_{e,j}^> & = \pr_e \biggl( \Big| \frac{1}{\sqrt{n}} \sum\limits_{i=1}^n \hat{Z}_{ij} (\eps_i - \uvec_i) e_i^> \Big| > \frac{d_n}{2} \biggr).
\end{align*}

We first analyze the term $P_{e,j}^\le$. With the help of \eqref{eq:remainderB1:eq2}--\eqref{eq:remainderB1:eq4}, we obtain that on the event $\mathcal{A}_n^\prime$, 
\begin{align*}
|\hat{Z}_{ij}| & = |\Xvec_{ij} - \hat{\psi}_{j,A}^\top \hat{\Psi}_A^{-1} \Xvec_{i,A}| \le \{ 1 + \normtwo{\hat{\psi}_{j,A}} \normtwo{\hat{\Psi}_A^{-1}} \sqrt{p_A} \} C_X \le C \\
|\eps_i - \uvec_i| & = \big|\{ \Xmat_A (\Xmat_A^\top \Xmat_A)^{-1} \Xmat_A^\top \eps \}_i\big| = \biggl|\Xmat_{i,A}^\top \hat{\Psi}_A^{-1} \Big\{ \frac{1}{n} \sum\limits_{\ell=1}^n \Xmat_{\ell,A} \eps_\ell \Big\} \biggr| \\[-0.25cm] & \le \sqrt{p_A} C_X \normtwo{\hat{\Psi}_A^{-1}} \Big\|\frac{1}{n} \sum\limits_{\ell=1}^n \Xmat_{\ell,A} \eps_\ell \Big\|_2 \le \frac{C}{n^{1/2 - \rho}},
\end{align*}
which implies that $| \hat{Z}_{ij} (\eps_i - \uvec_i) e_i^\le | \le C \log n / n^{1/2-\rho}$. Using Markov's inequality, $P_{e,j}^\le$ can be bounded by 
\begin{align}
P_{e,j}^\le 
 & \le \ex_e \exp \biggl( \mu \Big| \frac{1}{\sqrt{n}} \sum\limits_{i=1}^n \hat{Z}_{ij} (\eps_i - \uvec_i) e_i^\le \Big| \biggr) \Big/ \exp\Big( \frac{\mu d_n}{2} \Big) \nonumber \\
 & \le \biggl\{ \ex_e \exp \biggl( \frac{\mu}{\sqrt{n}} \sum\limits_{i=1}^n \hat{Z}_{ij} (\eps_i - \uvec_i) e_i^\le \biggr) \nonumber \\*
 & \qquad + \ex_e \exp \biggl( -\frac{\mu}{\sqrt{n}} \sum\limits_{i=1}^n \hat{Z}_{ij} (\eps_i - \uvec_i) e_i^\le \biggr) \biggr\} \Big/ \exp\Big( \frac{\mu d_n}{2} \Big), \label{eq:remainderB3:expinequ}
\end{align}
where we choose $\mu = c_\mu n^{1/2-\rho}$ with $c_\mu > 0$ so small that $\mu | \hat{Z}_{ij} (\eps_i - \uvec_i) e_i^\le | / \sqrt{n} \le 1/2$. Since $\exp(x) \le 1 + x + x^2$ for $|x| \le 1/2$, we further get that 
\begin{align*}
\ex_e \exp \biggl( \pm \frac{\mu}{\sqrt{n}} \sum\limits_{i=1}^n \hat{Z}_{ij} (\eps_i - \uvec_i) e_i^\le \biggr) 
 & = \prod\limits_{i=1}^n \ex_e \exp \biggl( \pm \frac{\mu \hat{Z}_{ij} (\eps_i - \uvec_i) e_i^\le}{\sqrt{n}} \biggr) \\
 & \le \prod\limits_{i=1}^n \biggl\{ 1 + \frac{\mu^2 \hat{Z}_{ij}^2 (\eps_i - \uvec_i)^2 \ex(e_i^\le)^2}{n} \biggr\} \\
 & \le \prod\limits_{i=1}^n \exp \biggl( \frac{\mu^2 \hat{Z}_{ij}^2 (\eps_i - \uvec_i)^2 \ex(e_i^\le)^2}{n} \biggr) \\
 & = \exp \biggl( \frac{\mu^2}{n} \sum\limits_{i=1}^n \hat{Z}_{ij}^2 (\eps_i - \uvec_i)^2 \ex(e_i^\le)^2 \biggr) \le \exp(c)
\end{align*}
with a sufficiently large constant $c$ that depends only on $\Theta^\prime$. Plugging this into \eqref{eq:remainderB3:expinequ} yields that 
\begin{equation}\label{eq:remainderB3:Pjle}
P_{e,j}^\le \le 2 \exp \Big( c - \frac{c_\mu D \log(n \lor p)}{2} \Big) \le C n^{-K}, 
\end{equation}
where $K$ can be made as large as desired.

We next have a closer look at the term $P_{e,j}^>$. Since $\max_{j \in B} |\sum_{i=1}^n \hat{Z}_{ij} (\eps_i - \uvec_i)| = \normsup{(\mathcal{P} \Xmat_B)^\top (\eps - \mathcal{P} \eps)} = 0$, it holds that 
\[ \frac{1}{\sqrt{n}} \sum\limits_{i=1}^n \hat{Z}_{ij} (\eps_i - \uvec_i) e_i^> =  \frac{1}{\sqrt{n}} \sum\limits_{i=1}^n \hat{Z}_{ij} (\eps_i - \uvec_i) e_i \, \ind(|e_i| > \log n), \]
and thus, as already proven in \eqref{eq:lemmaA2:tailbound1},  
\begin{equation}\label{eq:remainderB3:Pjge} 
P_{e,j}^> \le \pr_e \big( |e_i| > \log n \text{ for some } 1 \le i \le n \big) \le Cn^{-K},
\end{equation}
where $K$ can be made as large as desired. To complete the proof, we insert equations \eqref{eq:remainderB3:Pjle} and \eqref{eq:remainderB3:Pjge} into \eqref{eq:remainderB3:unionbound} and invoke condition \ref{C4}.
\end{proof}

\subsection*{Proof of (\ref{bound-EVT})}

Let $\Wgauss_1,\ldots,\Wgauss_p$ be independent normal random variables with $\ex[\Wgauss_j] = 0$ for all $j$ and suppose w.l.o.g.\ that $\ex[\Wgauss_j^2] = 1$ for all $j$. By Lemma \ref{lemmaA8},
\begin{equation}\label{eq1:bound-EVT}
\pr \Big( \max_{1 \le j \le p} |\Wgauss_j| \le \gamma_\alpha^{\Wgauss} \Big) = 1 - \alpha. 
\end{equation}
Moreover, standard arguments from classic extreme value theory show that 
\[ \pr \Big( \max_{1 \le j \le p} |\Wgauss_j| \le \frac{x}{a_p} + b_p \Big) \to e^{-2e^{-x}} \]
as $p \to \infty$ with $a_p = \sqrt{2 \log p}$ and $b_p = \sqrt{2\log p} - \{\log \log p + \log (4\pi)\}/\{2 \sqrt{2\log p}\}$, which in particular implies that for any fixed $\delta > 0$, 
\begin{equation}\label{eq2:bound-EVT}
\pr \Big( \max_{1 \le j \le p} |\Wgauss_j| \le \frac{x_{\alpha \pm \delta}}{a_p} + b_p \Big) \to 1 - \{ \alpha \pm \delta \} 
\end{equation}
with $x_{\alpha \pm \delta} = - \log( - \log(1 - \{\alpha \pm \delta\})/2 )$. From \eqref{eq1:bound-EVT} and \eqref{eq2:bound-EVT}, it follows that for any null sequence of positive numbers $\eta_p$, 
\[ \frac{x_{\alpha+\delta}}{a_p} + b_p \le \gamma_{\alpha+\eta_p}^{\Wgauss} \le \gamma_{\alpha-\eta_p}^{\Wgauss} \le \frac{x_{\alpha-\delta}}{a_p} + b_p \]
for $p$ sufficiently large. We thus arrive at 
\[ | \gamma_{\alpha-\eta_p}^{\Wgauss} - \gamma_{\alpha+\eta_p}^{\Wgauss}| \le \frac{x_{\alpha-\delta} - x_{\alpha+\delta}}{\sqrt{2\log p}} \le \frac{C}{\sqrt{2\log p}} \]
with some sufficiently large constant $C$.

\newpage

\phantom{Upper boundary}
\vspace{-2.25cm}

\section{Robustness checks}\label{sec:supp:robustness}

\subsection*{Choice of $\boldsymbol{\alpha}$ for tuning parameter calibration}

Our estimates of the quantiles of the effective noise can be used for different tasks,
with inference and tuning parameter calibration as two examples.
In inference, the choice of $\alpha$ is determined by the significance level.
In tuning parameter calibration, in contrast, $\alpha$ can be chosen freely.
In what follows, we examine how our tuning parameter calibration is influenced by the choice of $\alpha$. 
To do so, we repeat the simulation exercises from Section \ref{subsec:sim:2} (with $\kappa = 0.25$) 
for three different values of $\alpha$, namely $\alpha = 0.01, 0.05, 0.1$. 
Choosing $\alpha$ in the range between $0.01$ and $0.1$ in practice is sensible for the following reasons: The constraint $\alpha \le 0.1$ makes sure that the finite sample guarantees for tuning parameter calibration from Section \ref{subsec:tuning} hold with reasonably high probability ($\approx 90\%$ or higher). The constraint $\alpha \ge 0.01$, on the other hand, ensures that the bias of the lasso does not get overly strong. We thus restrict attention to $\alpha \in [0.01,0.1]$, which is also the range of typical significance levels in testing.

Figure \ref{fig:sim:tuning:alphas:Hamming} reports the results for the Hamming loss. 
The grey-shaded area in each panel depicts the histogram of the Hamming distances $\Delta_H(\hat{\beta},\beta^*)$ 
that are produced by our estimator $\hat{\beta}$ over the $N=1000$~simulation runs when $\alpha$ is set to $0.05$, the red line depicts the histogram for $\alpha = 0.01$, and the blue line the histogram for $\alpha = 0.1$. In addition, the histogram of the cross-validated estimator is shown as a dotted line. Notice that the grey-shaded histograms in Figure \ref{fig:sim:tuning:alphas:Hamming} are the same as those in Figure \ref{subfig:a:sim:Hamming}. Figures \ref{fig:sim:tuning:alphas:L1}--\ref{fig:sim:tuning:alphas:prediction} present the results for the $\ell_1$-, $\ell_\infty$- and prediction loss in an analogous way. Inspecting the plots, we conclude that the precise choice of $\alpha$ only has a minor effect on tuning parameter calibration with our method.

\begin{figure}[h!]
\vspace{0.25cm}

\centering
\includegraphics[width=\textwidth]{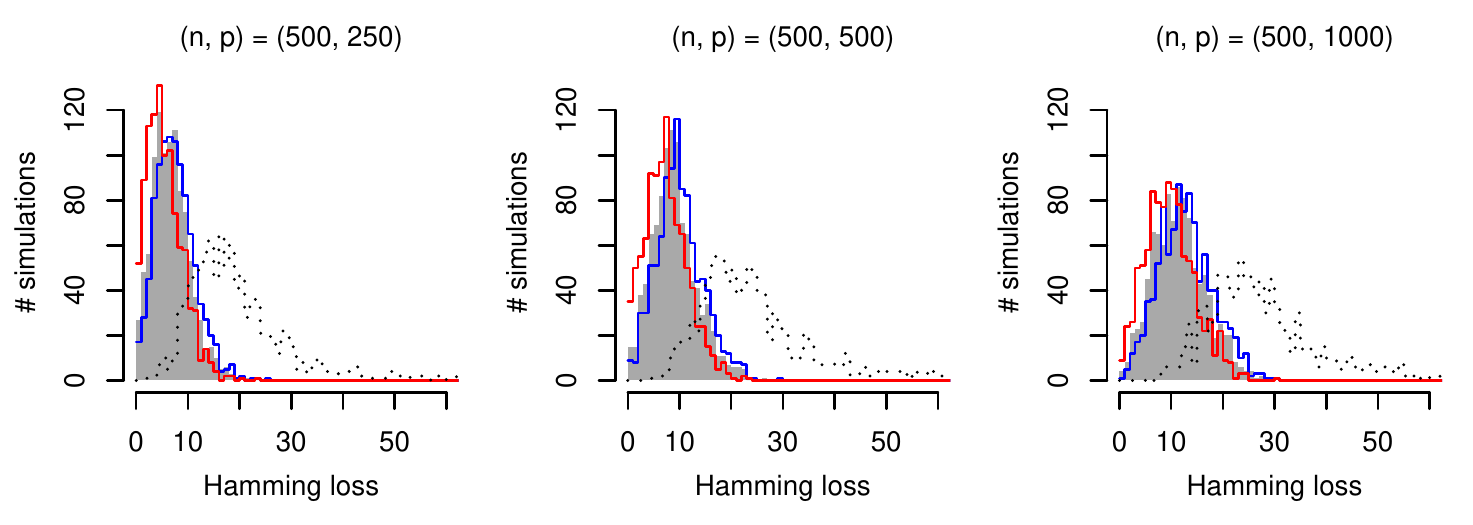}
\caption{Histograms of the Hamming loss for different values of $\alpha$.
}\label{fig:sim:tuning:alphas:Hamming}
\end{figure}

\begin{figure}[p]
\centering
\includegraphics[width=\textwidth]{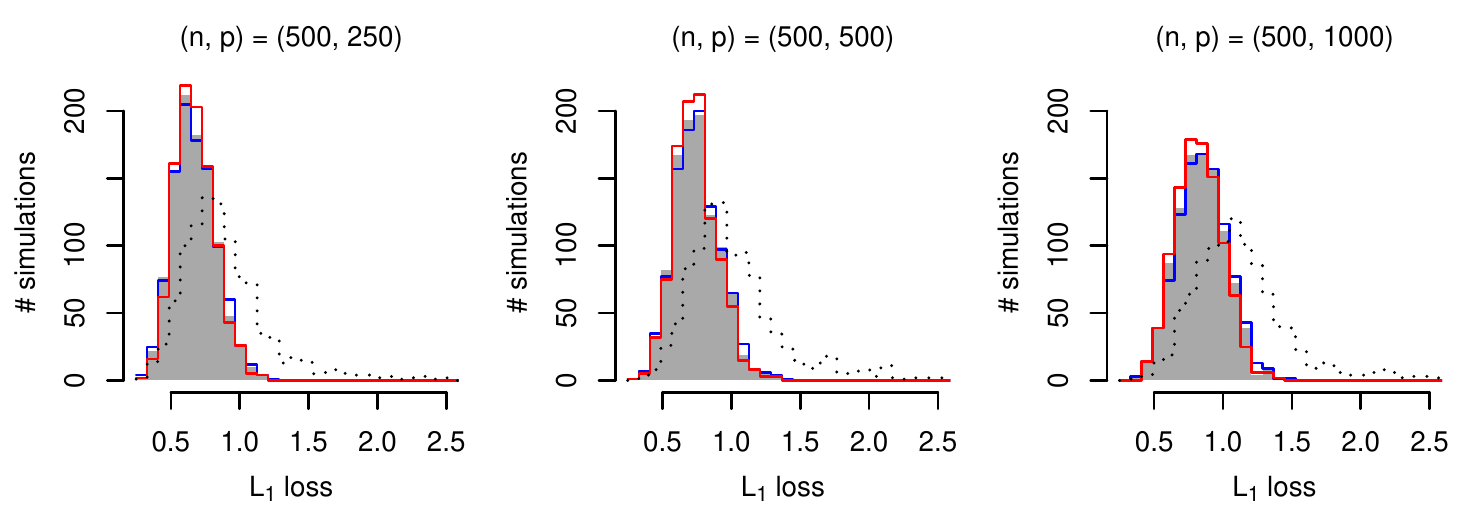}
\caption{Histograms of the $\ell_1$-loss for different values of $\alpha$.}\label{fig:sim:tuning:alphas:L1}
\vspace{0.3cm}

\includegraphics[width=\textwidth]{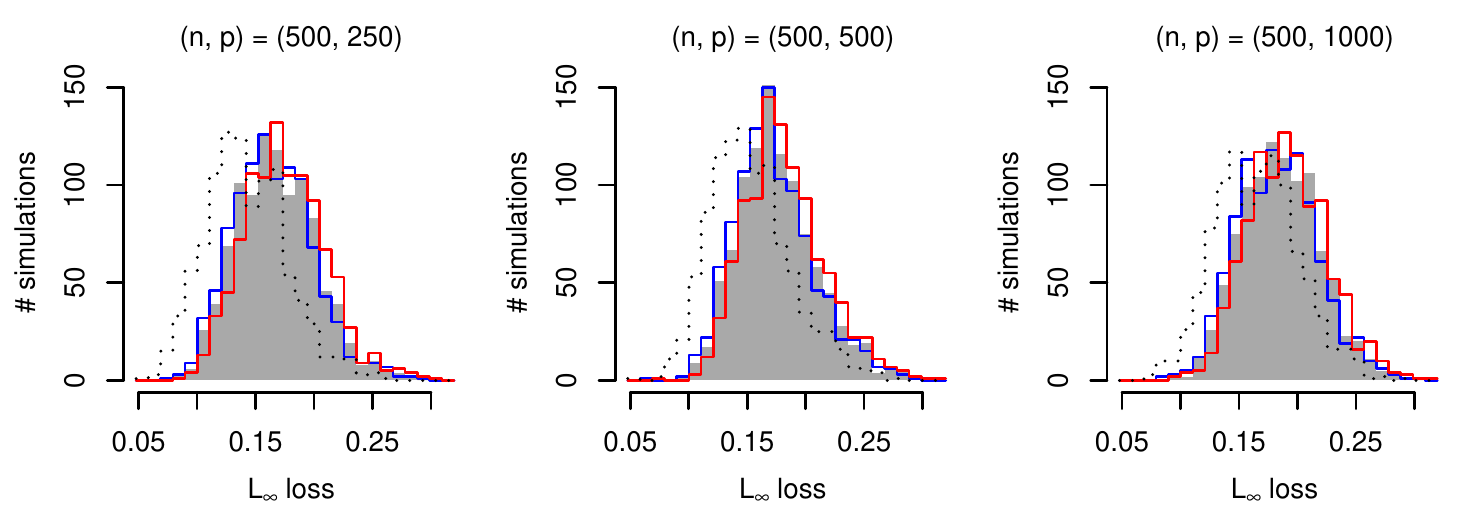}
\caption{Histograms of  the $\ell_\infty$-loss for different values of $\alpha$.}\label{fig:sim:tuning:alphas:sup}
\vspace{0.3cm}

\includegraphics[width=\textwidth]{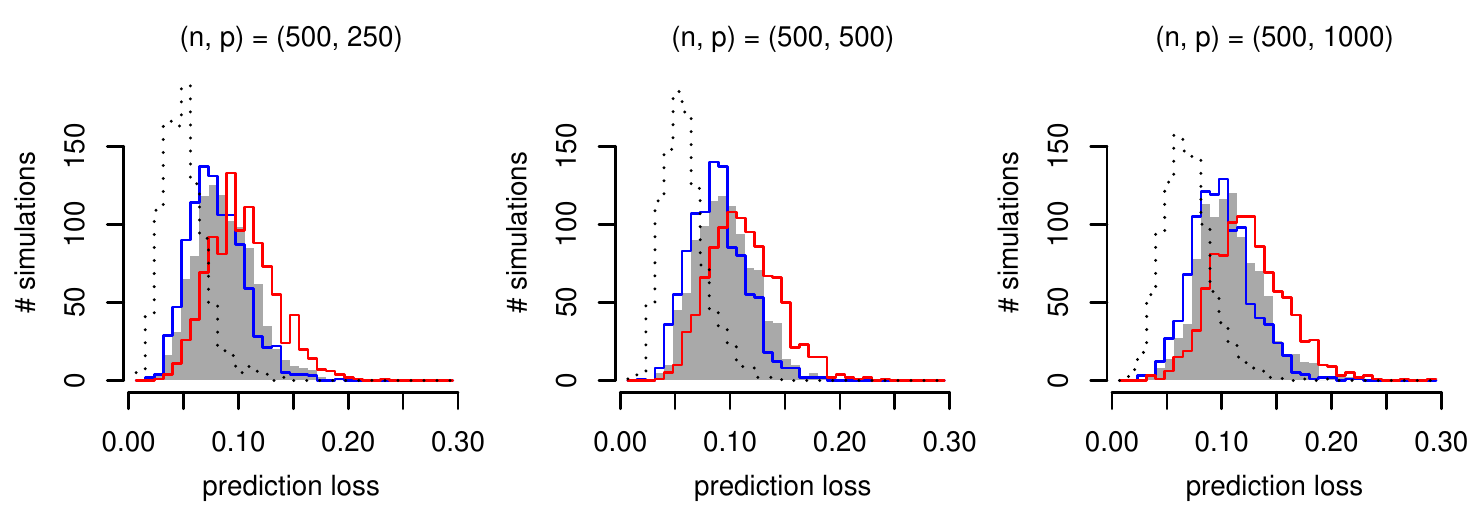}
\caption{Histograms of the prediction loss for different values of $\alpha$.}\label{fig:sim:tuning:alphas:prediction}
\end{figure}

\clearpage

\phantom{Upper boundary}
\vspace{-3.4cm}

\subsection*{Different distributions of the noise and the design}

In this section, we investigate how our simulation results are influenced by the distribution of the noise $\eps_i$ and the design $\Xvec_i$. In order to do so, we repeat the simulation exercises from Sections \ref{subsec:sim:1}--\ref{subsec:sim:3} with non-normal noise variables $\eps_i$ and design vectors $\Xvec_i$. Specifically, we sample $\eps_i$ independently from a $t$-distribution with $d$ degrees of freedom and variance normalized to $1$. Moreover, $\Xvec_i$ is drawn from a multivariate $t$-distribution with the same number of degrees of freedom, where the covariance matrix is the same as in Section \ref{sec:sim} (in particular, it is given by $(1-\kappa) \idmat + \kappa \boldsymbol{E}$ with $\kappa = 0.25$). We consider three different choices of $d$, namely $d \in \{5,10,30\}$. For small $d$, the $t$-distribution differs substantially from the standard normal law, having much heavier tails. (Note in particular that $d=5$ is the smallest integer for which the $t$-distribution has $\theta > 4$ moments as required by condition \ref{C3}.) As $d$ increases, the $t$-distribution becomes less heavy-tailed and more akin to a standard normal law.

We start with the simulations from Section \ref{subsec:sim:1}, 
which concern the approximation quality of our estimator $\hat{\lambda}_\alpha$. To see how the quality of $\hat{\lambda}_\alpha$ depends on the distribution of the noise and the design, we reproduce Figure \ref{fig:sim:histlambda} for the case of $t$-distributed errors and design vectors with $d \in \{5,10,30\}$. The results are reported in Figure \ref{fig:robust:dist:tuning:lambda:t}. As can be seen, the precision of our estimator diminishes somewhat as $d$ gets smaller. Nevertheless, even for the case $d=5$, we obtain quite precise results.

We now turn to the simulations on tuning parameter calibration from Section~\ref{subsec:sim:2}. We reproduce Figures \ref{subfig:a:sim:Hamming}, \ref{fig:sim:L1}, \ref{fig:sim:sup} and \ref{fig:sim:prediction}, which correspond to the four different losses under consideration, for the case of $t$-distributed noise terms and design vectors with $d \in \{5,10,30\}$. The results are presented in Figures \ref{fig:robust:dist:tuning:Hamming:t}--\ref{fig:robust:dist:tuning:prediction:t}. The format is the same as in Figures \ref{subfig:a:sim:Hamming}, \ref{fig:sim:L1}, \ref{fig:sim:sup} and \ref{fig:sim:prediction}: the grey-shaded areas correspond to the histograms produced by our estimator, the black lines correspond to the histograms of the oracle method, and the dotted lines correspond to the histograms of the cross-validated lasso. In all of the considered cases, the histograms of our estimator are extremely close to those of the oracle. Moreover, the histograms are very similar to those of Figures \ref{subfig:a:sim:Hamming}, \ref{fig:sim:L1}, \ref{fig:sim:sup} and \ref{fig:sim:prediction} for the Gaussian case.

We finally revisit the inference results from Section \ref{subsec:sim:3}. As before, we repeat the simulations with $t$-distributed noise and design vectors for $d \in \{5,10,30\}$. The results are given in Tables \ref{table:sim:test:t5}--\ref{table:sim:test:t30}. For all considered values of $d$, the size of the test under the null is close to the target $\alpha$. Moreover, the power of the test is comparable to that in the Gaussian case, even though it gets a bit lower for smaller $d$.

To summarize, the results demonstrate that our method does not require normally distributed noise and design, which supports our general theory in the main part of the paper.

\begin{figure}[h!]
\begin{subfigure}[b]{\textwidth}
\centering
\includegraphics[width=\textwidth]{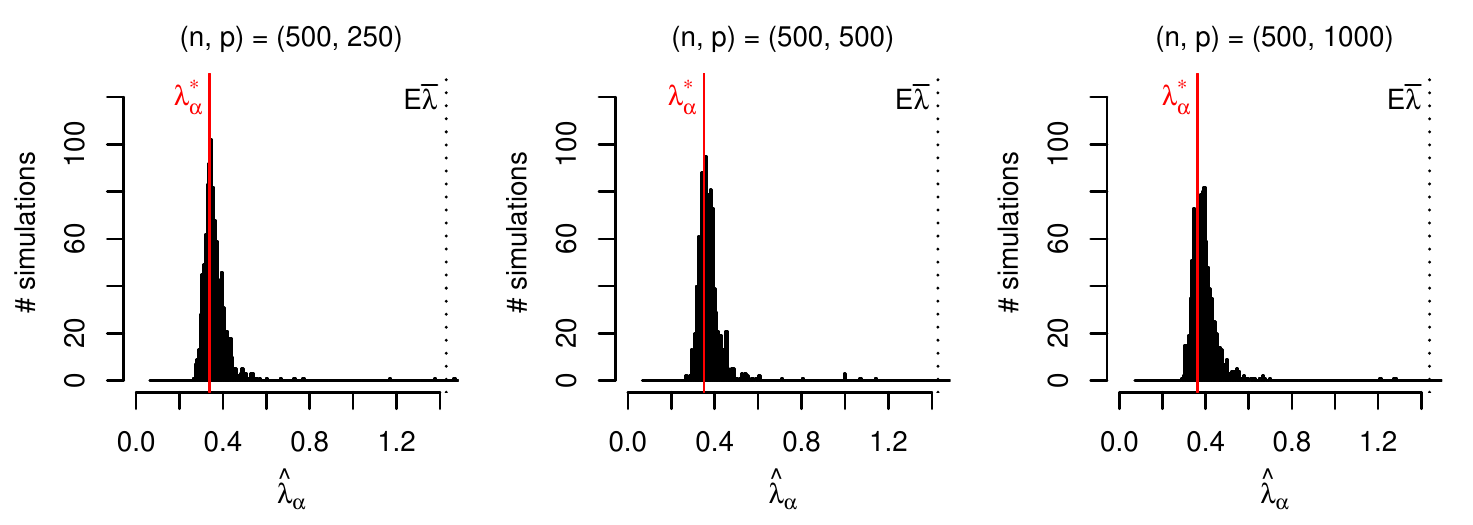}
\caption{histograms for $t$-distributed noise and design with $d=5$}\label{subfig:robust:dist:tuning:lambda:t5}
\vspace{0.3cm}

\end{subfigure}

\begin{subfigure}[b]{\textwidth}
\centering
\includegraphics[width=\textwidth]{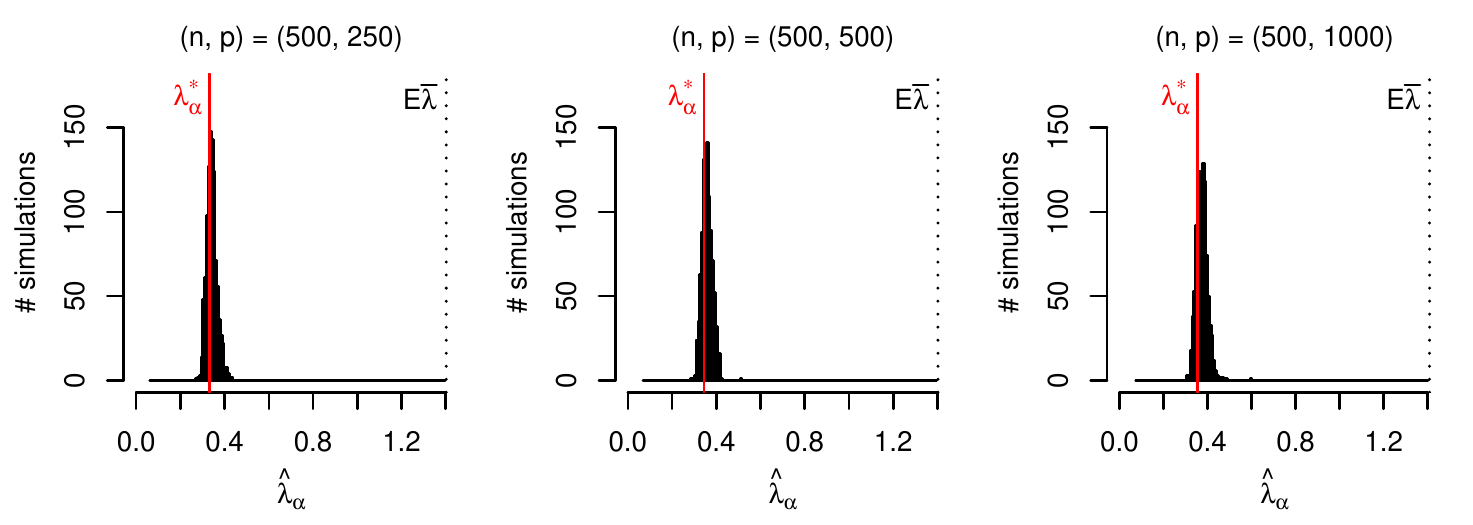}
\caption{histograms for $t$-distributed noise and design with $d=10$}\label{subfig:robust:dist:tuning:lambda:t10}
\vspace{0.3cm}

\end{subfigure}

\begin{subfigure}[b]{\textwidth}
\centering
\includegraphics[width=\textwidth]{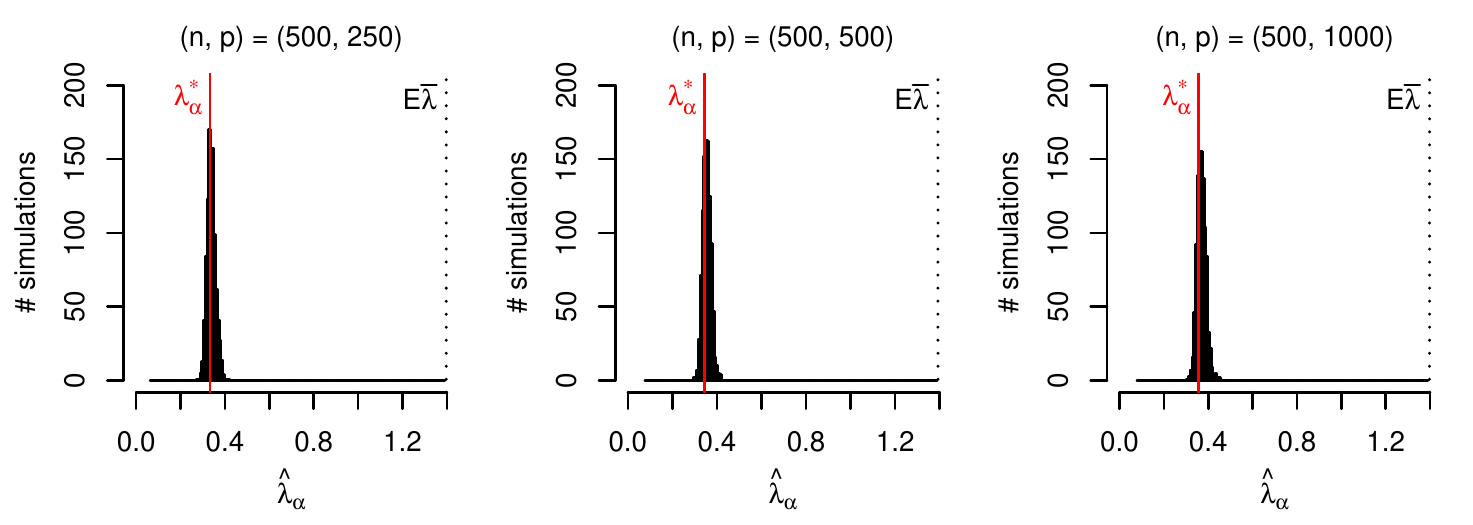}
\caption{histograms for $t$-distributed noise and design with $d=30$}\label{subfig:robust:dist:tuning:lambda:t30}
\end{subfigure}
\caption{Histograms of the estimates $\hat{\lambda}_\alpha$ for $t$-distributed noise variables and design vectors with $d \in \{5,10,30\}$.}
\label{fig:robust:dist:tuning:lambda:t}
\end{figure}

\begin{figure}[p!]
\centering

\begin{subfigure}[b]{\textwidth}
\centering
\includegraphics[width=\textwidth]{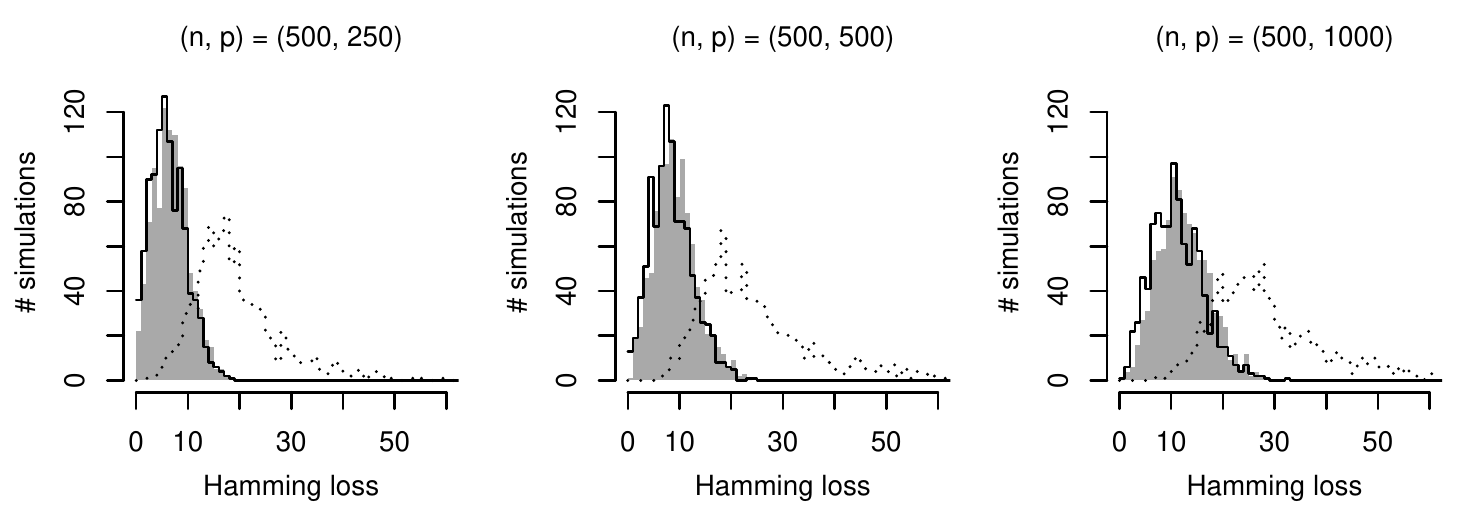}
\caption{histograms for $t$-distributed noise and design with $d=5$}\label{subfig:robust:dist:tuning:Hamming:t5}
\vspace{0.25cm}
\end{subfigure} 

\begin{subfigure}[b]{\textwidth}
\centering
\includegraphics[width=\textwidth]{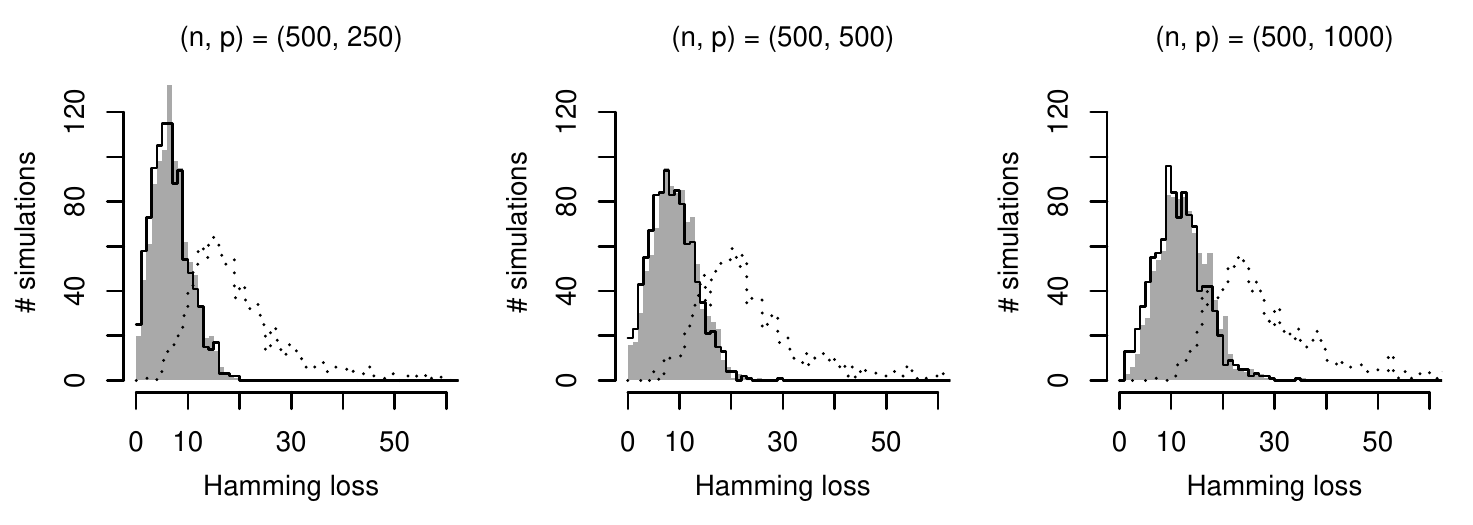}
\caption{histograms for $t$-distributed noise and design with $d=10$}\label{subfig:robust:dist:tuning:Hamming:t10}
\vspace{0.25cm}
\end{subfigure}

\begin{subfigure}[b]{\textwidth}
\centering
\includegraphics[width=\textwidth]{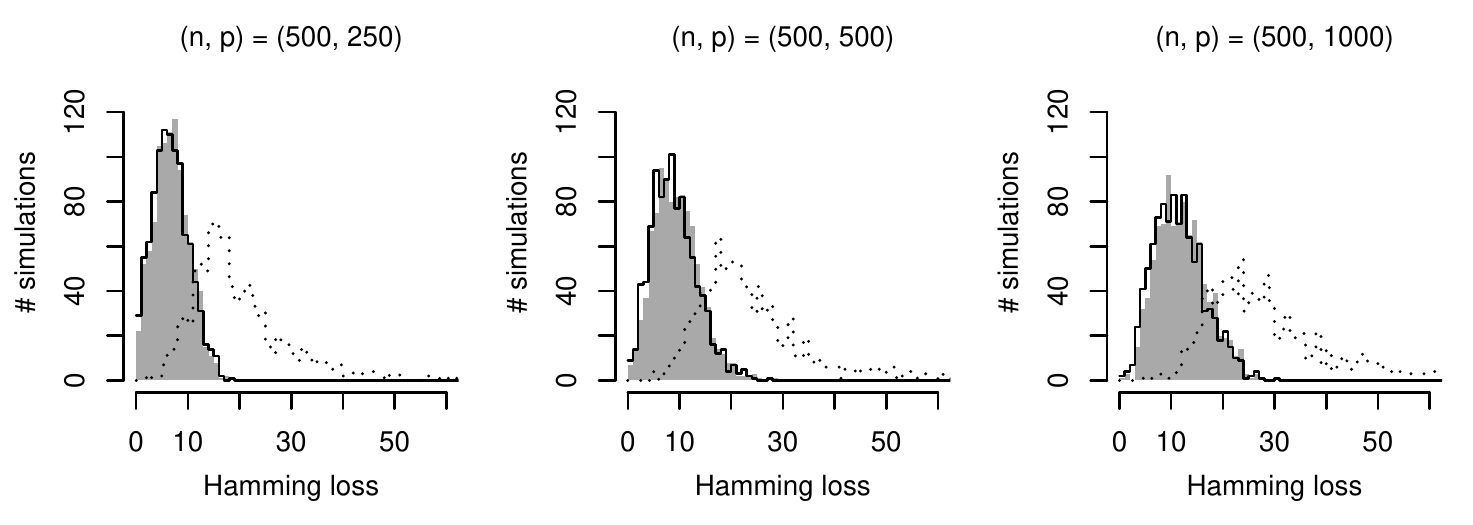}
\caption{histograms for $t$-distributed noise and design with $d=30$}\label{subfig:robust:dist:tuning:Hamming:t30}
\end{subfigure}
\caption{Histograms of the Hamming loss for $t$-distributed noise variables and design vectors with $d \in \{5,10,30\}$.}
\label{fig:robust:dist:tuning:Hamming:t}
\end{figure}

\begin{figure}[p!]
\centering

\begin{subfigure}[b]{\textwidth}
\centering
\includegraphics[width=\textwidth]{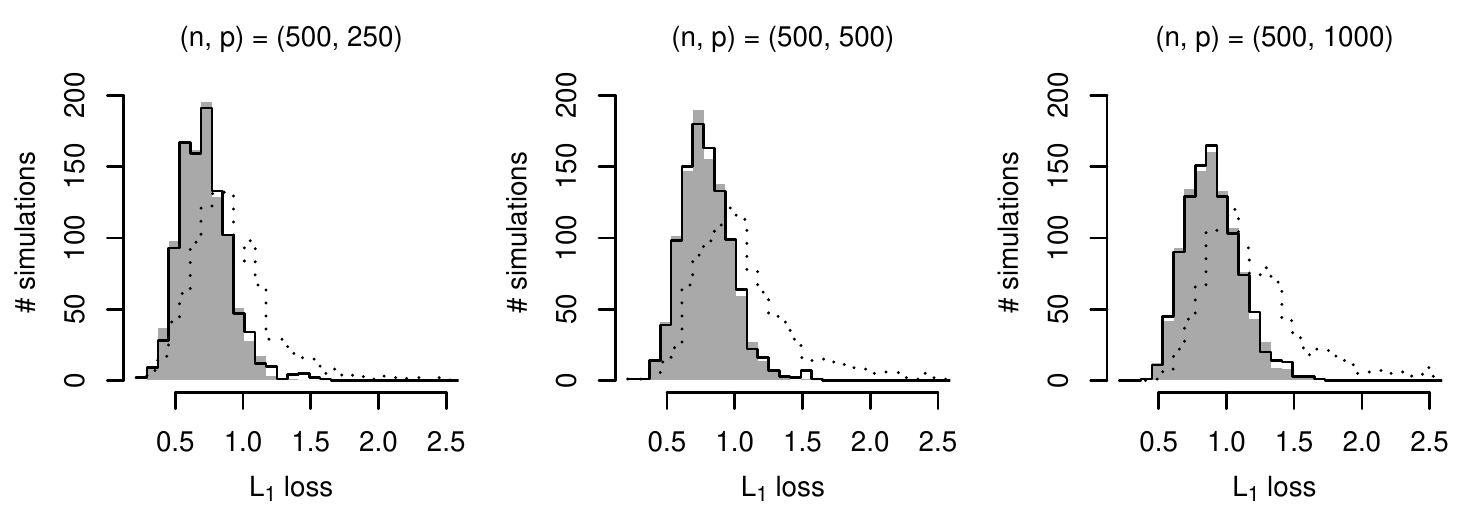}
\caption{histograms for $t$-distributed noise and design with $d=5$}\label{subfig:robust:dist:tuning:L1:t5}
\vspace{0.25cm}
\end{subfigure} 

\begin{subfigure}[b]{\textwidth}
\centering
\includegraphics[width=\textwidth]{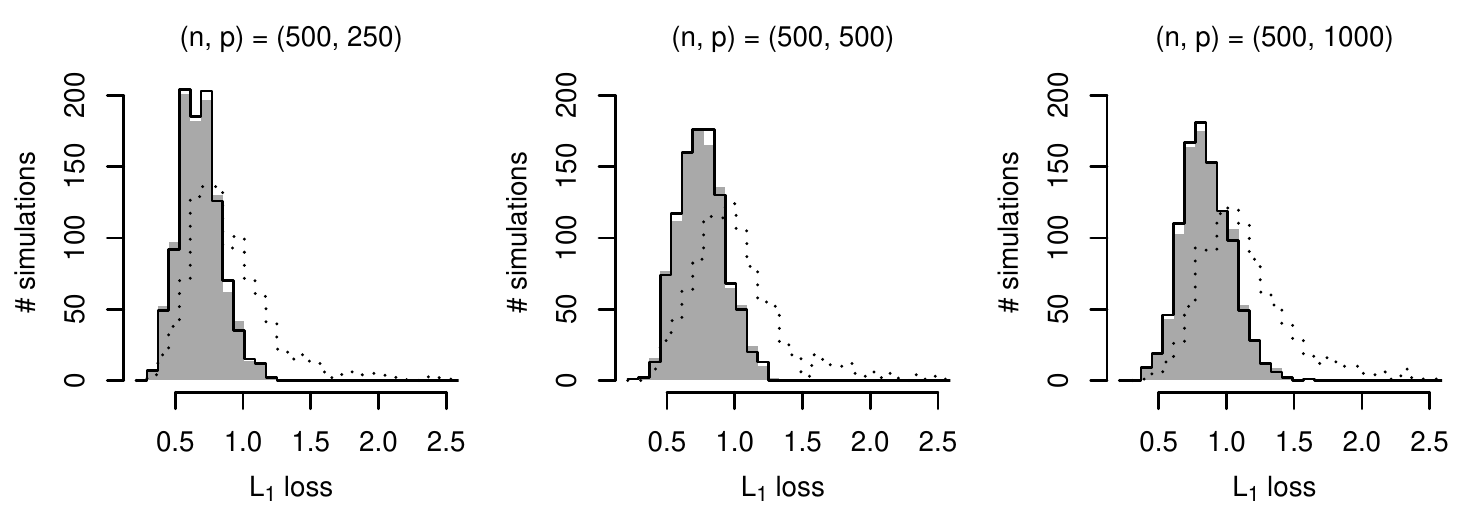}
\caption{histograms for $t$-distributed noise and design with $d=10$}\label{subfig:robust:dist:tuning:L1:t10}
\vspace{0.25cm}
\end{subfigure}

\begin{subfigure}[b]{\textwidth}
\centering
\includegraphics[width=\textwidth]{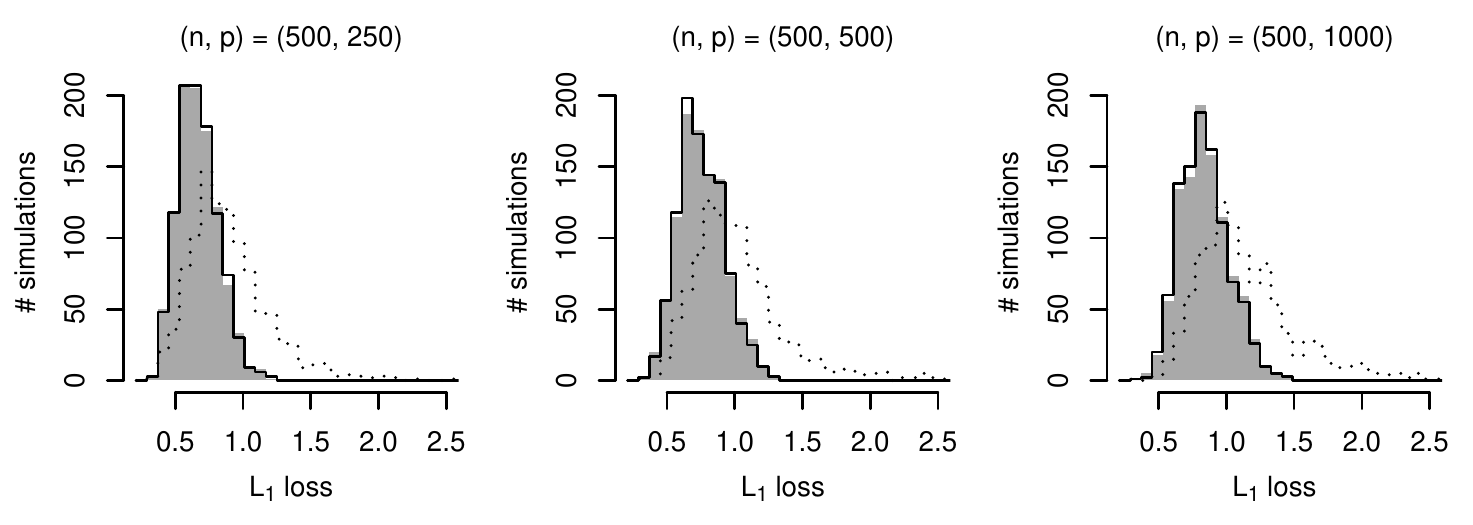}
\caption{histograms for $t$-distributed noise and design with $d=30$}\label{subfig:robust:dist:tuning:L1:t30}
\end{subfigure}
\caption{Histograms of the $\ell_1$-loss for $t$-distributed noise variables and design vectors with $d \in \{5,10,30\}$.}
\label{fig:robust:dist:tuning:L1:t}
\end{figure}

\begin{figure}[p!]
\centering

\begin{subfigure}[b]{\textwidth}
\centering
\includegraphics[width=\textwidth]{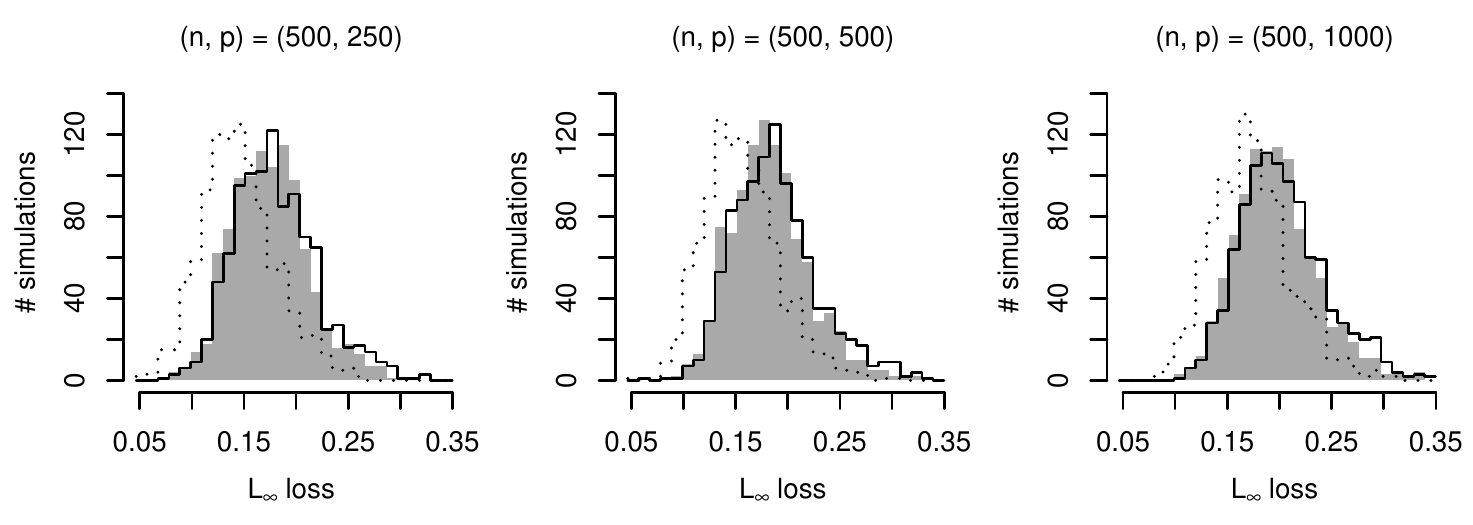}
\caption{histograms for $t$-distributed noise and design with $d=5$}\label{subfig:robust:dist:tuning:sup:t5}
\vspace{0.25cm}
\end{subfigure} 

\begin{subfigure}[b]{\textwidth}
\centering
\includegraphics[width=\textwidth]{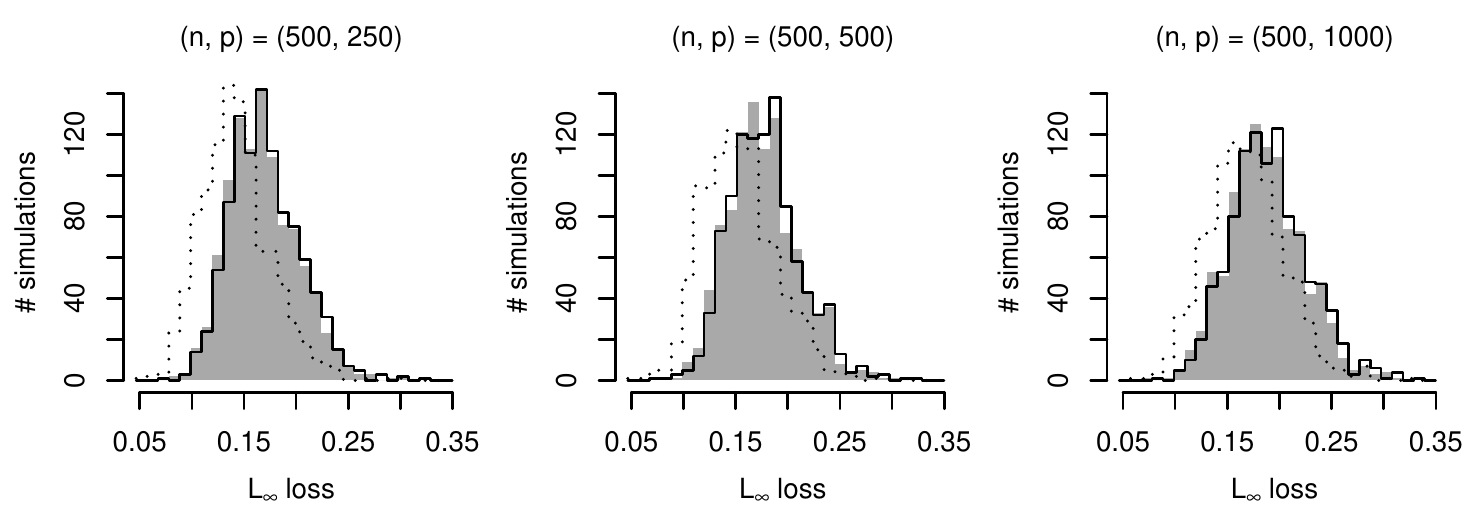}
\caption{histograms for $t$-distributed noise and design with $d=10$}\label{subfig:robust:dist:tuning:sup:t10}
\vspace{0.25cm}
\end{subfigure}

\begin{subfigure}[b]{\textwidth}
\centering
\includegraphics[width=\textwidth]{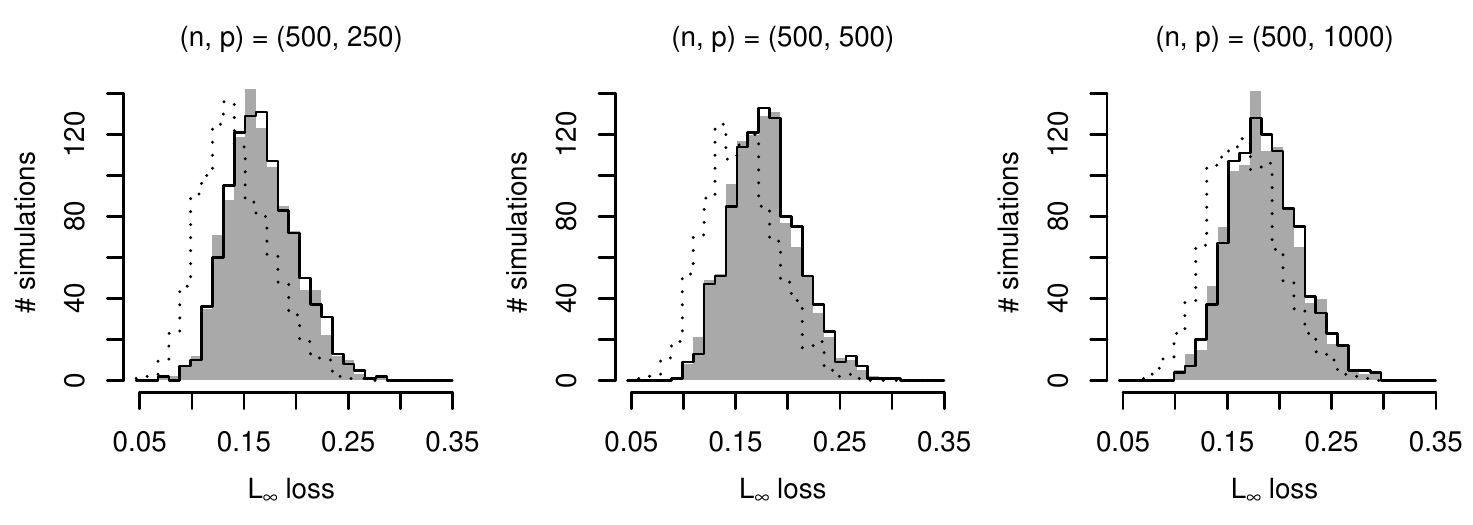}
\caption{histograms for $t$-distributed noise and design with $d=30$}\label{subfig:robust:dist:tuning:sup:t30}
\end{subfigure}
\caption{Histograms of the $\ell_\infty$-loss for $t$-distributed noise variables and design vectors with $d \in \{5,10,30\}$.}
\label{fig:robust:dist:tuning:sup:t}
\end{figure}

\begin{figure}[p!]
\centering

\begin{subfigure}[b]{\textwidth}
\centering
\includegraphics[width=\textwidth]{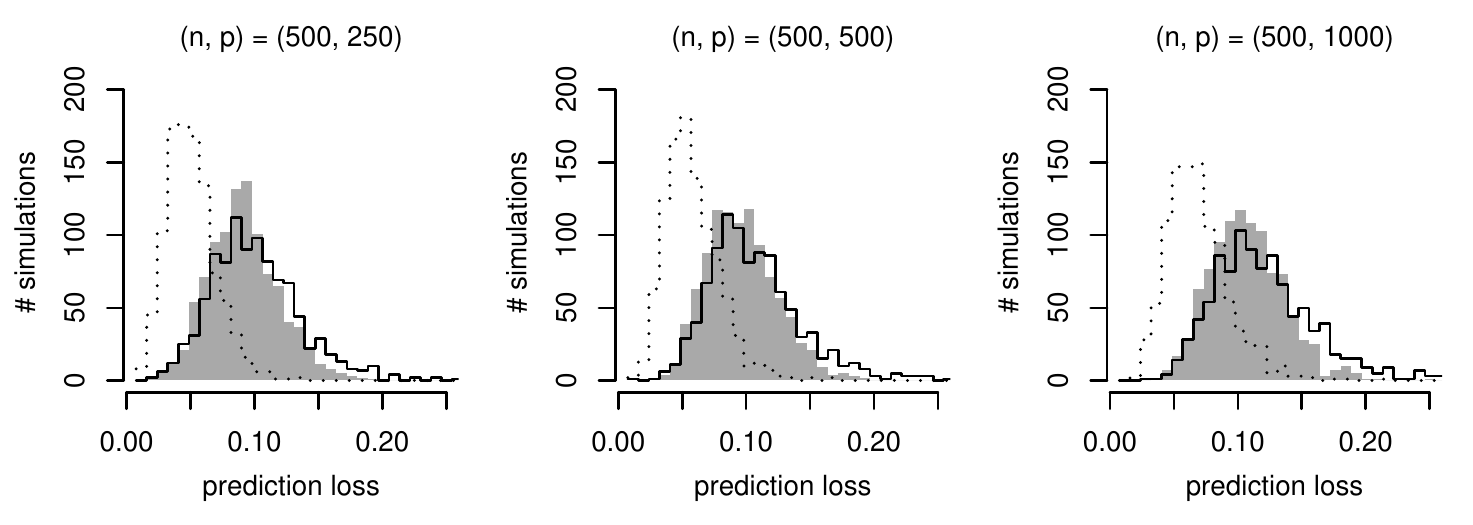}
\caption{histograms for $t$-distributed noise and design with $d=5$}\label{subfig:robust:dist:tuning:prediction:t5}
\vspace{0.25cm}
\end{subfigure} 

\begin{subfigure}[b]{\textwidth}
\centering
\includegraphics[width=\textwidth]{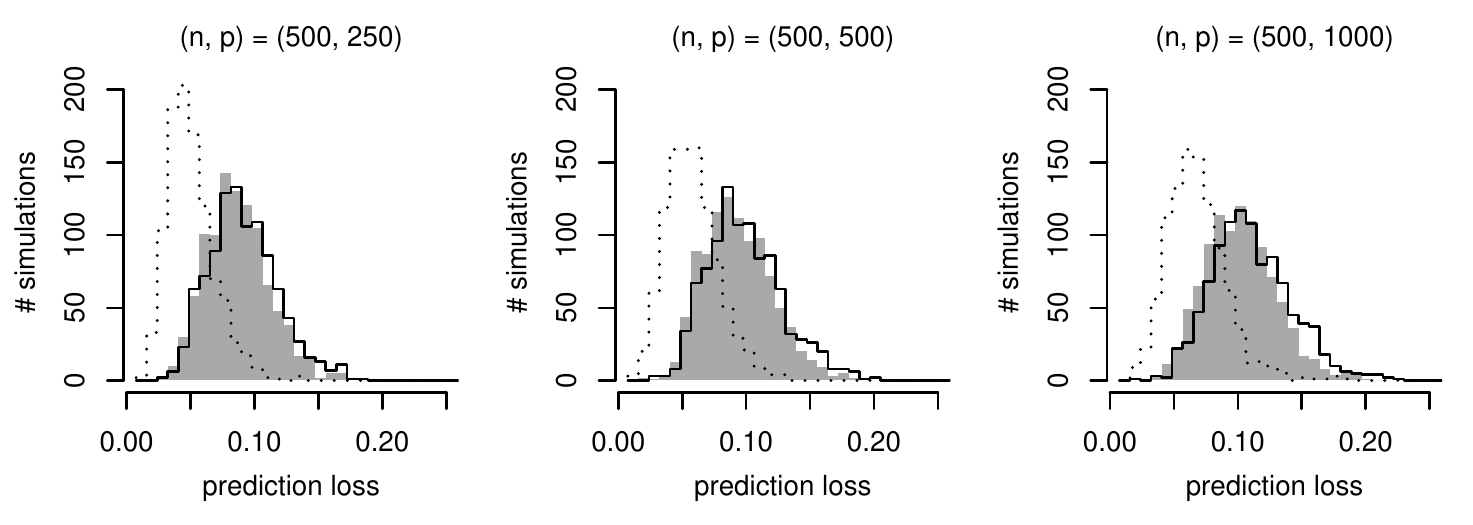}
\caption{histograms for $t$-distributed noise and design with $d=10$}\label{subfig:robust:dist:tuning:prediction:t10}
\vspace{0.25cm}
\end{subfigure}

\begin{subfigure}[b]{\textwidth}
\centering
\includegraphics[width=\textwidth]{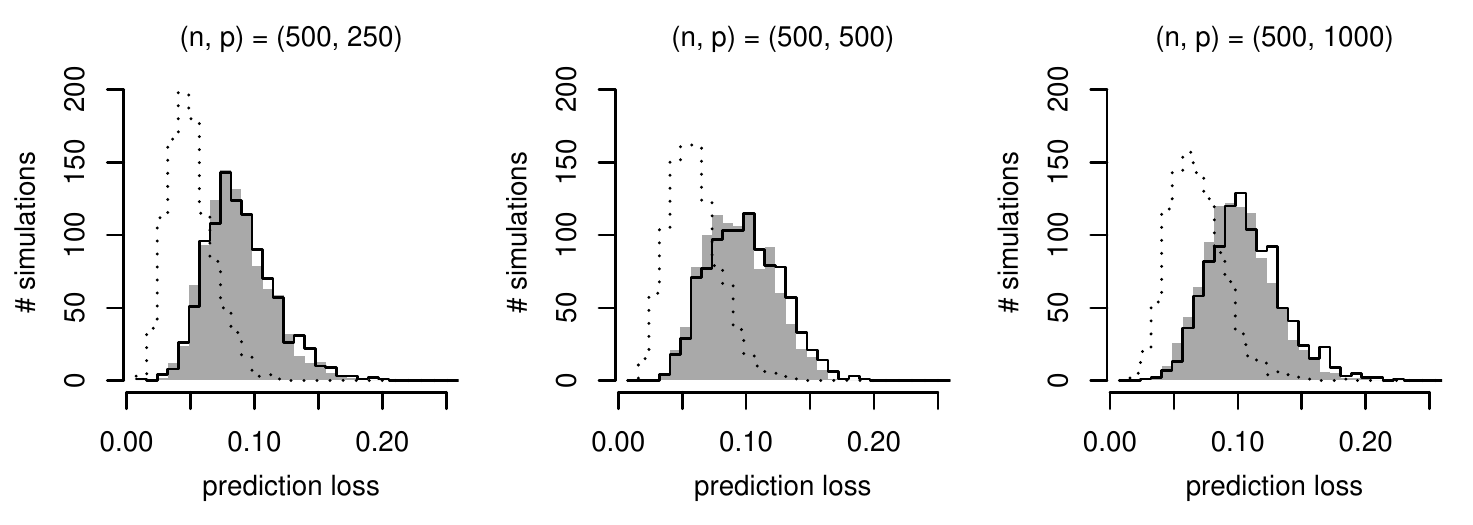}
\caption{histograms for $t$-distributed noise and design with $d=30$}\label{subfig:robust:dist:tuning:prediction:t30}
\end{subfigure}
\caption{Histograms of the prediction loss for $t$-distributed noise variables and design vectors with $d \in \{5,10,30\}$.}
\label{fig:robust:dist:tuning:prediction:t}
\end{figure}

\FloatBarrier

\begin{table}[ph!] 
\vspace{3.25cm}

\setlength{\tabcolsep}{2pt}
\centering 
\caption{Empirical size under the null and power against different alternatives for $t$-distributed noise variables and design vectors with $d=5$.}\label{table:sim:test:t5}

{\small 
\begin{subtable}[b]{\textwidth}
\centering 
\caption{empirical size under $H_0: \beta^* = 0$}\label{subtable:sim:test:null:t5} 
\begin{tabular}{@{\extracolsep{5pt}} lcccccccc} 
\\[-1.8ex]\hline 
\hline \\[-1.8ex] 
 & & \multicolumn{3}{c}{feasible test} & & \multicolumn{3}{c}{oracle test} \\
 & & $\alpha=0.01$ & $\alpha=0.05$ & $\alpha=0.1$ & & $\alpha=0.01$ & $\alpha=0.05$ & $\alpha=0.1$ \\[0.1cm]
\hline \\[-1.8ex] 
$(n, p) = (500, 250)$ & & $0.011$ & $0.033$ & $0.080$ & & $0.009$ & $0.058$ & $0.105$ \\ 
$(n, p) = (500, 500)$ & & $0.009$ & $0.036$ & $0.078$ & & $0.013$ & $0.054$ & $0.094$ \\ 
$(n, p) = (500, 1000)$ & & $0.007$ & $0.028$ & $0.067$ & & $0.018$ & $0.061$ & $0.095$ \\[0.1cm] 
\hline\\ 
\end{tabular} 
\end{subtable}

\begin{subtable}[b]{\textwidth}
\centering 
\caption{empirical power under the alternative with $\text{SNR} = 0.1$}\label{subtable:sim:test:alt01:t5} 
\begin{tabular}{@{\extracolsep{5pt}} lcccccccc} 
\\[-1.8ex]\hline 
\hline \\[-1.8ex] 
 & & \multicolumn{3}{c}{feasible test} & & \multicolumn{3}{c}{oracle test} \\
 & & $\alpha=0.01$ & $\alpha=0.05$ & $\alpha=0.1$ & & $\alpha=0.01$ & $\alpha=0.05$ & $\alpha=0.1$ \\[0.1cm]
\hline \\[-1.8ex] 
$(n, p) = (500, 250)$ & & $0.092$ & $0.220$ & $0.346$ & & $0.060$ & $0.247$ & $0.360$ \\ 
$(n, p) = (500, 500)$ & & $0.100$ & $0.247$ & $0.401$ & & $0.107$ & $0.298$ & $0.419$ \\ 
$(n, p) = (500, 1000)$ & & $0.085$ & $0.223$ & $0.365$ & & $0.139$ & $0.309$ & $0.397$ \\[0.1cm] 
\hline \\
\end{tabular} 
\end{subtable}

\begin{subtable}[b]{\textwidth}
\centering 
\caption{empirical power under the alternative with $\text{SNR} = 0.2$}\label{subtable:sim:test:alt02:t5}
\begin{tabular}{@{\extracolsep{5pt}} lcccccccc} 
\\[-1.8ex]\hline 
\hline \\[-1.8ex] 
 & & \multicolumn{3}{c}{feasible test} & & \multicolumn{3}{c}{oracle test} \\
 & & $\alpha=0.01$ & $\alpha=0.05$ & $\alpha=0.1$ & & $\alpha=0.01$ & $\alpha=0.05$ & $\alpha=0.1$ \\[0.1cm]
\hline \\[-1.8ex] 
$(n, p) = (500, 250)$ & & $0.471$ & $0.741$ & $0.856$ & & $0.602$ & $0.836$ & $0.917$ \\ 
$(n, p) = (500, 500)$ & & $0.510$ & $0.762$ & $0.874$ & & $0.617$ & $0.865$ & $0.929$ \\ 
$(n, p) = (500, 1000)$ & & $0.453$ & $0.725$ & $0.852$ & & $0.656$ & $0.843$ & $0.908$ \\[0.1cm] 
\hline \\[-1.8ex] 
\end{tabular} 
\end{subtable}}

\vspace{-0.45cm}
\end{table}

\begin{table}[hp!] 
\setlength{\tabcolsep}{2pt}
\centering 
\caption{Empirical size under the null and power against different alternatives for $t$-distributed noise variables and design vectors with $d=10$.}\label{table:sim:test:t10}

{\small 
\begin{subtable}[b]{\textwidth}
\centering 
\caption{empirical size under $H_0: \beta^* = 0$}\label{subtable:sim:test:null:t10} 
\begin{tabular}{@{\extracolsep{5pt}} lcccccccc} 
\\[-1.8ex]\hline 
\hline \\[-1.8ex] 
 & & \multicolumn{3}{c}{feasible test} & & \multicolumn{3}{c}{oracle test} \\
 & & $\alpha=0.01$ & $\alpha=0.05$ & $\alpha=0.1$ & & $\alpha=0.01$ & $\alpha=0.05$ & $\alpha=0.1$ \\[0.1cm]
\hline \\[-1.8ex] 
$(n, p) = (500, 250)$ & & $0.012$ & $0.054$ & $0.087$ & & $0.011$ & $0.045$ & $0.095$ \\ 
$(n, p) = (500, 500)$ & & $0.012$ & $0.047$ & $0.100$ & & $0.004$ & $0.053$ & $0.102$ \\ 
$(n, p) = (500, 1000)$ & & $0.005$ & $0.033$ & $0.080$ & & $0.005$ & $0.041$ & $0.084$ \\[0.1cm] 
\hline\\ 
\end{tabular} 
\end{subtable}

\begin{subtable}[b]{\textwidth}
\centering 
\caption{empirical power under the alternative with $\text{SNR} = 0.1$}\label{subtable:sim:test:alt01:t10} 
\begin{tabular}{@{\extracolsep{5pt}} lcccccccc} 
\\[-1.8ex]\hline 
\hline \\[-1.8ex] 
 & & \multicolumn{3}{c}{feasible test} & & \multicolumn{3}{c}{oracle test} \\
 & & $\alpha=0.01$ & $\alpha=0.05$ & $\alpha=0.1$ & & $\alpha=0.01$ & $\alpha=0.05$ & $\alpha=0.1$ \\[0.1cm]
\hline \\[-1.8ex] 
$(n, p) = (500, 250)$ & & $0.111$ & $0.260$ & $0.403$ & & $0.109$ & $0.279$ & $0.399$ \\ 
$(n, p) = (500, 500)$ & & $0.119$ & $0.270$ & $0.393$ & & $0.096$ & $0.297$ & $0.418$ \\ 
$(n, p) = (500, 1000)$ & & $0.106$ & $0.247$ & $0.374$ & & $0.088$ & $0.262$ & $0.376$ \\[0.1cm] 
\hline \\
\end{tabular} 
\end{subtable}

\begin{subtable}[b]{\textwidth}
\centering 
\caption{empirical power under the alternative with $\text{SNR} = 0.2$}\label{subtable:sim:test:alt02:t10}
\begin{tabular}{@{\extracolsep{5pt}} lcccccccc} 
\\[-1.8ex]\hline 
\hline \\[-1.8ex] 
 & & \multicolumn{3}{c}{feasible test} & & \multicolumn{3}{c}{oracle test} \\
 & & $\alpha=0.01$ & $\alpha=0.05$ & $\alpha=0.1$ & & $\alpha=0.01$ & $\alpha=0.05$ & $\alpha=0.1$ \\[0.1cm]
\hline \\[-1.8ex] 
$(n, p) = (500, 250)$ & & $0.607$ & $0.822$ & $0.922$ & & $0.649$ & $0.860$ & $0.938$ \\ 
$(n, p) = (500, 500)$ & & $0.578$ & $0.790$ & $0.891$ & & $0.592$ & $0.832$ & $0.910$ \\ 
$(n, p) = (500, 1000)$ & & $0.556$ & $0.806$ & $0.895$ & & $0.567$ & $0.851$ & $0.909$ \\[0.1cm] 
\hline \\[-1.8ex] 
\end{tabular} 
\end{subtable}}

\vspace{-0.45cm}
\end{table}

\begin{table}[hp!] 
\setlength{\tabcolsep}{2pt}
\centering 
\caption{Empirical size under the null and power against different alternatives for $t$-distributed noise variables and design vectors with $d=30$.}\label{table:sim:test:t30}

{\small 
\begin{subtable}[b]{\textwidth}
\centering 
\caption{empirical size under $H_0: \beta^* = 0$}\label{subtable:sim:test:null:t30} 
\begin{tabular}{@{\extracolsep{5pt}} lcccccccc} 
\\[-1.8ex]\hline 
\hline \\[-1.8ex] 
 & & \multicolumn{3}{c}{feasible test} & & \multicolumn{3}{c}{oracle test} \\
 & & $\alpha=0.01$ & $\alpha=0.05$ & $\alpha=0.1$ & & $\alpha=0.01$ & $\alpha=0.05$ & $\alpha=0.1$ \\[0.1cm]
\hline \\[-1.8ex] 
$(n, p) = (500, 250)$ & & $0.012$ & $0.057$ & $0.111$ & & $0.007$ & $0.056$ & $0.105$ \\ 
$(n, p) = (500, 500)$ & & $0.019$ & $0.057$ & $0.099$ & & $0.012$ & $0.070$ & $0.115$ \\ 
$(n, p) = (500, 1000)$ & & $0.011$ & $0.044$ & $0.082$ & & $0.004$ & $0.055$ & $0.101$ \\[0.1cm] 
\hline\\ 
\end{tabular} 
\end{subtable}

\begin{subtable}[b]{\textwidth}
\centering 
\caption{empirical power under the alternative with $\text{SNR} = 0.1$}\label{subtable:sim:test:alt01:t30} 
\begin{tabular}{@{\extracolsep{5pt}} lcccccccc} 
\\[-1.8ex]\hline 
\hline \\[-1.8ex] 
 & & \multicolumn{3}{c}{feasible test} & & \multicolumn{3}{c}{oracle test} \\
 & & $\alpha=0.01$ & $\alpha=0.05$ & $\alpha=0.1$ & & $\alpha=0.01$ & $\alpha=0.05$ & $\alpha=0.1$ \\[0.1cm]
\hline \\[-1.8ex] 
$(n, p) = (500, 250)$ & & $0.120$ & $0.275$ & $0.421$ & & $0.113$ & $0.288$ & $0.417$ \\ 
$(n, p) = (500, 500)$ & & $0.130$ & $0.278$ & $0.391$ & & $0.129$ & $0.335$ & $0.437$ \\ 
$(n, p) = (500, 1000)$ & & $0.140$ & $0.287$ & $0.407$ & & $0.119$ & $0.327$ & $0.446$ \\[0.1cm] 
\hline \\
\end{tabular} 
\end{subtable}

\begin{subtable}[b]{\textwidth}
\centering 
\caption{empirical power under the alternative with $\text{SNR} = 0.2$}\label{subtable:sim:test:alt02:t30}
\begin{tabular}{@{\extracolsep{5pt}} lcccccccc} 
\\[-1.8ex]\hline 
\hline \\[-1.8ex] 
 & & \multicolumn{3}{c}{feasible test} & & \multicolumn{3}{c}{oracle test} \\
 & & $\alpha=0.01$ & $\alpha=0.05$ & $\alpha=0.1$ & & $\alpha=0.01$ & $\alpha=0.05$ & $\alpha=0.1$ \\[0.1cm]
\hline \\[-1.8ex]
$(n, p) = (500, 250)$ & & $0.629$ & $0.824$ & $0.908$ & & $0.653$ & $0.855$ & $0.922$ \\ 
$(n, p) = (500, 500)$ & & $0.605$ & $0.823$ & $0.905$ & & $0.660$ & $0.892$ & $0.935$ \\ 
$(n, p) = (500, 1000)$ & & $0.619$ & $0.831$ & $0.906$ & & $0.620$ & $0.854$ & $0.935$ \\[0.1cm] 
\hline \\[-1.8ex] 
\end{tabular} 
\end{subtable}}

\vspace{-0.45cm}
\end{table}

\clearpage 
\phantom{Upper boundary}
\vspace{-1.85cm}

\subsection*{Choice of $\boldsymbol{L}$ and $\boldsymbol{M}$}

When implementing our method, we need to choose the number of bootstrap iterations $L$ as well as the grid size $M$ for computing the lasso estimates. We have experimented with different choices of $L$ and $M$ and found that they have little effect on the simulation results. To illustrate this, we consider the same simulation setting as in Section \ref{subsec:sim:1} and produce $N=1000$ estimates of $\hat{\lambda}_\alpha$ for different choices of $(L,M)$. In addition to the choice $(L,M) = (100,100)$ which is used in Section~\ref{sec:sim}, we consider the choices $(L,M) = (200,200)$ and $(L,M)=(300,300)$. Figure \ref{fig:robust:LM} reports the results. In each panel, the grey-shaded area is the histogram of the $N=1000$ estimates of $\hat{\lambda}_\alpha$ for the choice $(L,M)=(100,100)$, the blue line is the histogram for $(L,M) = (200,200)$, and the red line is the histogram for $(L,M) = (300,300)$. As one can see, the histograms are very similar across the different choices of $(L,M)$, which suggests that the precise choice of $L$ and $M$ has little effect on our method.

\begin{figure}[h!]
\vspace{0.25cm}

\centering
\includegraphics[width=\textwidth]{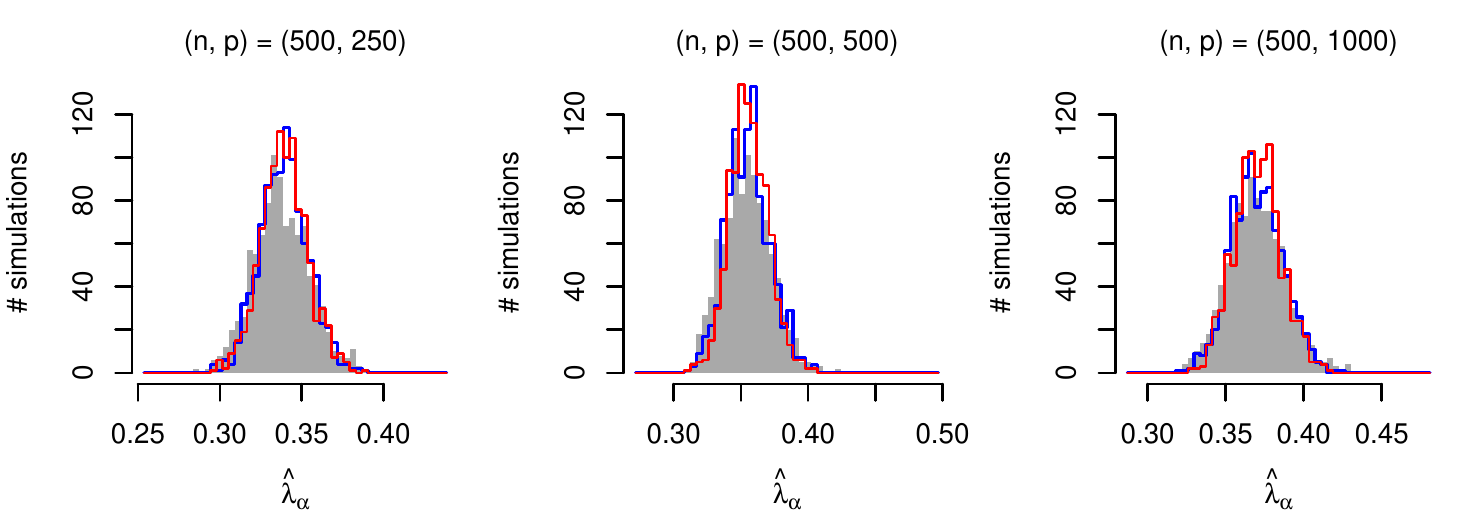}
\caption{Histograms of the estimates $\hat{\lambda}_\alpha$ for different choices of $(L,M)$.}\label{fig:robust:LM}
\end{figure}

\end{document}